\theoremstyle{plain}
\newtheorem{theorem}{Theorem}[section]
\newtheorem{proposition}[theorem]{Proposition}
\newtheorem{lemma}[theorem]{Lemma}
\newtheorem{corollary}[theorem]{Corollary}
\theoremstyle{definition}
\newtheorem{definition}[theorem]{Definition}
\newtheorem{assumption}[theorem]{Assumption}
\theoremstyle{remark}
\newtheorem{remark}[theorem]{Remark}
\DeclareMathOperator*{\argmax}{arg\,max}
\DeclareMathOperator*{\argmin}{arg\,min}
\renewcommand{\eqref}[1]{(\ref{#1})}
\newcommand{\iid}{ \stackrel{\mathrm{i.i.d}}{\sim} }
\newcommand{\kibitz}[2]{\ifnum\Comments=1\textcolor{#1}{#2}\fi}
\icmltitlerunning{\emph{Active} Adaptive Experimental Design for Treatment Effect Estimation with Covariate Choice}
\begin{document}

\twocolumn[
\icmltitle{\emph{Active} Adaptive Experimental Design for Treatment Effect Estimation\\
with Covariate Choice}




\begin{icmlauthorlist}
\icmlauthor{Masahiro Kato}{mizuho}
\icmlauthor{Akihiro Oga}{mizuho}
\icmlauthor{Wataru Komatsubara}{mizuho}
\icmlauthor{Ryo Inokuchi}{mizuho}
\end{icmlauthorlist}

\icmlaffiliation{mizuho}{Mizuho-DL Financial Technology Co., Ltd., Tokyo, Japan}

\icmlcorrespondingauthor{Masahiro Kato}{masahiro-kato@fintec.co.jp}

\icmlkeywords{Treatment effect estimation, Adaptive experimental design, Active learning, Causal inference}

\vskip 0.3in
]



\printAffiliationsAndNotice{} 

\begin{abstract}
This study designs an adaptive experiment for efficiently estimating \emph{average treatment effects} (ATEs). In each round of our adaptive experiment, an experimenter sequentially samples an experimental unit, assigns a treatment, and observes the corresponding outcome immediately. At the end of the experiment, the experimenter estimates an ATE using the gathered samples. The objective is to estimate the ATE with a smaller asymptotic variance. Existing studies have designed experiments that adaptively optimize the propensity score (treatment-assignment probability). As a generalization of such an approach, we propose optimizing the covariate density as well as the propensity score. First, we derive the efficient covariate density and propensity score that minimize the semiparametric efficiency bound and find that optimizing both covariate density and propensity score minimizes the semiparametric efficiency bound more effectively than optimizing only the propensity score. Next, we design an adaptive experiment using the efficient covariate density and propensity score sequentially estimated during the experiment. Lastly, we propose an ATE estimator whose asymptotic variance aligns with the minimized semiparametric efficiency bound.
\end{abstract}

\section{Introduction}
\emph{Experimental approaches} play a pivotal role in uncovering causality in various fields of science and industrial applications, such as epidemiology and economics \citep{ChowChangPong,Hahn2011,ChowChang201112,fda}. This study focuses on cases with \emph{binary treatments}. Typically, one of these treatments corresponds to the \emph{treatment}, and the other to the \emph{control} \citep{imbens_rubin_2015}. Among the quantities representing causal effects, this study focuses on an \emph{average treatment effect} (ATE), which defines a causal effect as the difference between the expected outcomes of binary treatments.

\emph{Treatments} are an abstraction of an experimenter's choices, including drugs, online advertisements, and economic policies, which are also referred to differently, such as actions and arms. A difference in outcomes of binary treatments (treatment effect) is often considered a causal effect \citep{imbens_rubin_2015}. However, due to the counterfactual property, \emph{individual treatment effects} cannot be observed. Therefore, we use ATEs to capture causality.

A fundamental experimental approach for estimating ATEs is a \emph{randomized control trial} (RCT). In an RCT, one of the binary treatments is randomly assigned to each experimental unit to obtain an unbiased estimator of ATEs. While RCTs are the gold standard and reliable \citep{Hariton2018}, they often require large sample sizes, which can be costly. To reduce the sample size as much as possible, or to reduce an estimation error given a sample size, \emph{adaptive experimental designs} have gained increasing attention across various fields \citep{Laan2008TheCA,Hahn2011,ChowChang201112,villar,fda}. 

For estimating ATEs efficiently (with smaller asymptotic variances), \citet{Laan2008TheCA} and \citet{Hahn2011} design adaptive experiments that optimize the propensity score (treatment-assignment probability). They show that by optimizing the propensity score, we can minimize the asymptotic variance of ATE estimators. Note that an asymptotic variance can be interpreted as an asymptotic mean squared error (MSE) of ATE estimation. Furthermore, in hypothesis testing, as an asymptotic variance decreases, the sample size required for hypothesis testing also decreases. \citet{Meehan2022}, \citet{Kato2020adaptive}, and \citet{cook2023semiparametric} design refined experiments based on this approach.

As a generalization of this approach, we consider a novel setting where an experimenter can sample experimental units based on their covariates and covariate density set by the experimenter. We show that by optimizing the covariate density as well as the propensity score, we can further minimize asymptotic variances compared to adaptive experiments that only optimize the propensity score. Then, we design an adaptive experiment that optimizes efficient covariate density and propensity score by using past observations.

Firstly, we derive the semiparametric efficiency bound, a lower bound of the asymptotic variance, which is a functional of a covariate density and a propensity score. Based on this efficiency bound, we compute the efficient covariate density and propensity score that minimize the lower bound. Then, we design an adaptive experiment that sequentially estimates the efficient covariate density and propensity score. In the proposed experiment, an experimental unit is sampled following an adaptively estimated efficient covariate density and randomly assigned treatments following an estimated efficient propensity score. At the experiment's end, the experimenter obtains an estimator whose asymptotic variance aligns with the semiparametric efficiency bound. We employ the \emph{Augmented Inverse Probability Weighting} (AIPW) estimator with covariate shift adaptation using Importance Weighting (AIPWIW). We refer to the setting of an adaptive experiment as \emph{active adaptive experimental design with covariate choice} and our designed experiment as the \emph{Active-Adaptive-Sampling (AAS)-AIPWIW experiment}.

Our method is inspired by three lines of work: adaptive experimental design for efficient ATE estimation, active learning using a covariate shift, and off-policy evaluation under a covariate shift. \citet{Laan2008TheCA}, \citet{Hahn2011}, \citet{Meehan2022}, and \citet{Kato2020adaptive} design adaptive experiments for estimating ATEs with smaller asymptotic variances by optimizing propensity scores. Our result generalizes their work by incorporating covariate density optimization. In active learning of regression models, \citet{Sugiyama2006} demonstrates that the asymptotic variance of regression coefficient estimators can be reduced by appropriately shifting the covariate density. That study deals with a non-adaptive experiment and regression with importance weighting \citep{Shimodaira2000}. In contrast, our target is ATE estimation, and we adaptively optimize the covariate density to estimate ATEs. Note that this approach for active learning is an application of importance sampling \citep{mcbook}. In causal inference under a covariate shift, \citet{Uehara2020} presents an efficient estimator of policy values, a generalization of ATEs, using double machine learning \citep[DML,][]{ChernozhukovVictor2018Dmlf}.

Intuitively, we can reduce the asymptotic variance of the ATE estimator by using more units for a treatment with a larger variance. In existing studies, an experimenter assigns a treatment to an experimental unit with a higher propensity score if it has a larger variance. In this study, we also optimize the choice of experimental units based on their covariates in addition to the propensity score. Under this approach, we use more experimental units if their conditional variances of outcomes are large, conditioned on their covariates. See Figure~\ref{fig:concept_figure2} in Section~\ref{sec:efficiency_prob}. This covariate choice allows us to reduce the asymptotic variance of an ATE estimator.

Our framework introduces a novel aspect of optimizing covariate density, which has not been fully explored in the literature on adaptive experiments for ATE estimation. While active learning focuses on optimizing the covariate density, adaptive experiments typically focus on optimizing the propensity score. We refer to our setting as an \emph{active adaptive experiment} for ATE estimation.

\paragraph{Contributions} Compared to existing work, our contributions are summarized as follows:
\begin{itemize}[topsep=0pt, itemsep=0pt, partopsep=0pt, leftmargin=*]
    \item We propose a novel framework for an adaptive experiment by optimizing the covariate density, as well as the propensity score (Section~\ref{sec:problem});
    \item We develop the semiparametric efficiency bound for the ATE in our context (Section~\ref{sec:semiparametric_efficiency});
    \item Based on this lower bound, we derive efficient covariate density and propensity score (Section~\ref{sec:efficiency_prob});
    \item Using these efficient probabilities, we design an adaptive experiment for efficient ATE estimation (Section~\ref{sec:opt_experiment}).
\end{itemize}

\section{Problem Setting}
\label{sec:problem}
This section provides our problem formulation. We define potential outcomes and observations separately by following the Neyman–Rubin causal model \citep{Neyman1923,rubin1974}. Then, we discuss estimating the ATE from the observations gathered from an adaptive experiment.

\subsection{Potential Outcomes}
There is a binary treatment $a \in\{1, 0\}$. Let us define the corresponding potential outcome by $Y(a)$. 
Let $X\in \mathcal{X}\subset \mathbb{R}^d$ be a $d$-dimensional covariate, where $\mathcal{X}$ is the space. 

For any $x\in\mathcal{X}$, let $P_0(x)$ be the true conditional distribution of $(Y_1, Y_0)$ given $X = x$ whose density is given as $r^1(y(1), y(0)\mid x)$.
Let $\mathbb{E}[\cdot \mid X = x]$ and $\mathrm{Var}\big(\cdot \mid X = x\big)$ be expectation and variance operators over $P_0(x)$. 
Let us denote the true mean and variance of $Y(a)$ conditioned by $X = x \in\mathcal{X}$ by $\mu_0(a)(x) = \mathbb{E}[Y(a)\mid X=x]$ and
$\sigma^2_0(a)(x) = \mathrm{Var}(Y(a) \mid X=x)$, respectively. 

We cannot observe both $Y(1)$ and $Y(0)$ simultaneously. We only observe $Y(A)$ for a treatment-assignment indicator $A \in \{1, 0\}$. We define how we observe data in Section~\ref{sec:aae}.  

We make the following regularity assumption.
\begin{assumption}
\label{asm:dist}
There exist known universal constants $\underline{C}$ and $\overline{C}$ such that $0 < \underline{C} < \overline{C} < \infty$ and for any $x\in\mathcal{X}$, under $P_0(x)$, $\big|\mu_0(a)(x)\big| < \overline{C}$ and $\underline{C} < \sigma^2_0(a)(x) < \overline{C}$ hold. 
\end{assumption}

\subsection{Average Treatment Effect}
As a causal effect, we are interested in the difference $Y(1) -Y(0)$, referred to as an ITE in some cases. However, because one of $Y(1)$ and $Y(0)$ is unobservable owing to the counterfactual property, we cannot observe it directly. Therefore, we focus on the expected value of $Y(1) -Y(0)$. 

Let $q(x)$ be the density of an evaluation covariate distribution. Our interest is in the ATE over $q(x)$, defined as
\begin{align*}
    \theta_0 &\coloneqq \mathbb{E}_{X\sim q(x),Y(a) \sim r^a(y\mid X)}[Y(1) - Y(0)],
\end{align*}
where $\mathbb{E}_{X\sim q(x),Y(a) \sim r^a(y\mid X)}$ is expectation over $X$ generated from $q(x)$ and $Y(a)$ given $X$ generated from $r^a(y\mid X)$. Note that $\theta_0 = \mathbb{E}_{X\sim q(x)}[\mathbb{E}[Y(1) - Y(0)\mid X]] = \mathbb{E}_{X\sim q(x)}\left[\mu_0(1)(X) - \mu_0(0)(X)\right]$ holds, where $\mathbb{E}_{X\sim q(x)}$ denotes expectation over $X$ generated from $q(x)$.

Our interest is to estimate the ATE with a smaller asymptotic variance via the following adaptive experiment.

For simplicity, $q(x)$ is assumed to be known to the experimenter. This assumption can be mitigated by considering different settings. In Appendix~\ref{sec:rejection}, we introduce another setting where we can observe covariates $X_t$ generated from $q(x)$ and decide whether we use it by the rejection sampling. 

\begin{assumption}
\label{asm:covariate}
    The covariate density $q(x)$ is known.
\end{assumption}

\subsection{Active Adaptive Experiments}
\label{sec:aae}
Consider an adaptive experiment with $T$ rounds, $[T] \coloneqq\{1,2,\dots, T\}$. In the experiment, in each $t\in[T]$, an experimenter decides a covariate density $p_t(x)$ and observes an experimental unit with covariates $X_t \sim p_t(x)$. Given the observed covariates $X_t$, the experimenter decides propensity score $w_t(a\mid X_t)$ and assigns treatment $A_t \in \{1, 0\}$, where $A_t \in \{1, 0\}$ is a \emph{treatment indicator}. Then, the experimenter observes an outcome $Y_t=\mathbbm{1}[A_t = 1]Y_{t}(1) + \mathbbm{1}[A_t = 0]Y_{t}(0)$, where $(Y_t(1), Y_t(0))$ is an i.i.d. copy of $(Y(1), Y(0))$. After round $T$, the experimenter estimates $\theta_0$ by using the observations $\{(Y_{t}(1), Y_{t}(0)\}^T_{t=1}$.

In our adaptive experiment, an experimenter can optimize the probabilities of $p_t(x)$ and $w_t(a\mid X_t)$ to obtain an estimator of $\theta_0$ with a smaller asymptotic variance.

In summary, the experimenter takes the following steps:
\begin{description}[topsep=0pt, itemsep=0pt, partopsep=0pt, leftmargin=*]
\item[Step~1:] based on past observations, an experimenter decides a covariate density $p_t$ and a propensity score $w_t$;
\item[Step~2:] the experimenter observes covariate $X_t\sim p_t(x)$;
\item[Step~3:] based on past information, the experimenter decides a propensity score $w_t(a\mid X_t)$;
\item[Step~4:] the experimenter assigns treatment $A_t = a$ with probability $w_t(a\mid X_t)$;
\item[Step~5:] the experimenter observes outcome $Y_t=\mathbbm{1}[A_t = 1]Y_{t}(1) + \mathbbm{1}[A_t = 0]Y_{t}(0)$, where $Y_{t}(a)\sim r^a(y\mid X_t)$;
\item[Step~6:] after iterating Steps~1--5 for $t=1,2,\dots, T$, in round $T$, we estimate the ATE.
\end{description}
We assume that $r^a(y\mid x)$ is invariant across rounds, and $Y_t(a)$ are drawn from the density independently of the other round's random variables; that is, $\{(Y_{t}(1), Y_{t}(0)\}^T_{t=1}$ is i.i.d., but $p_t(a)$ and $w_t(a\mid x)$ can take different values across rounds based on past observations.

Thus, the adaptive experiments yields observations $\{(Y_t, A_t, X_t)\}^T_{t=1}$, where $(Y_t, A_t, X_t)$ and $(Y_s, A_s, X_s)$ are correlated over time for $t \neq s$, that is, the samples are not i.i.d. Let $\mathcal{F}_{t-1}=\{X_{t-1}, A_{t-1}, Y_{t-1}, \dots, X_{1}, A_1, Y_{1}\}$ be the history. The probability $w_t(a\mid x)$ is a function of a covariate $X_t$, an action $A_t$, and history $\mathcal{F}_{t-1}$.

We refer to an experiment with this setting as an \emph{active adaptive experiment with covariate choice}. 

\begin{remark}[Stable unit treatment value assumption (SUTVA)]
\label{rem:sutva}
The DGP implies SUTVA; that is, $r^a(y\mid x)$ is invariant against $A_t$ across $t\in[T]$ \citep{Angrist1996}.
\end{remark}

\begin{figure}[t]
  \centering
    \includegraphics[width=80mm]{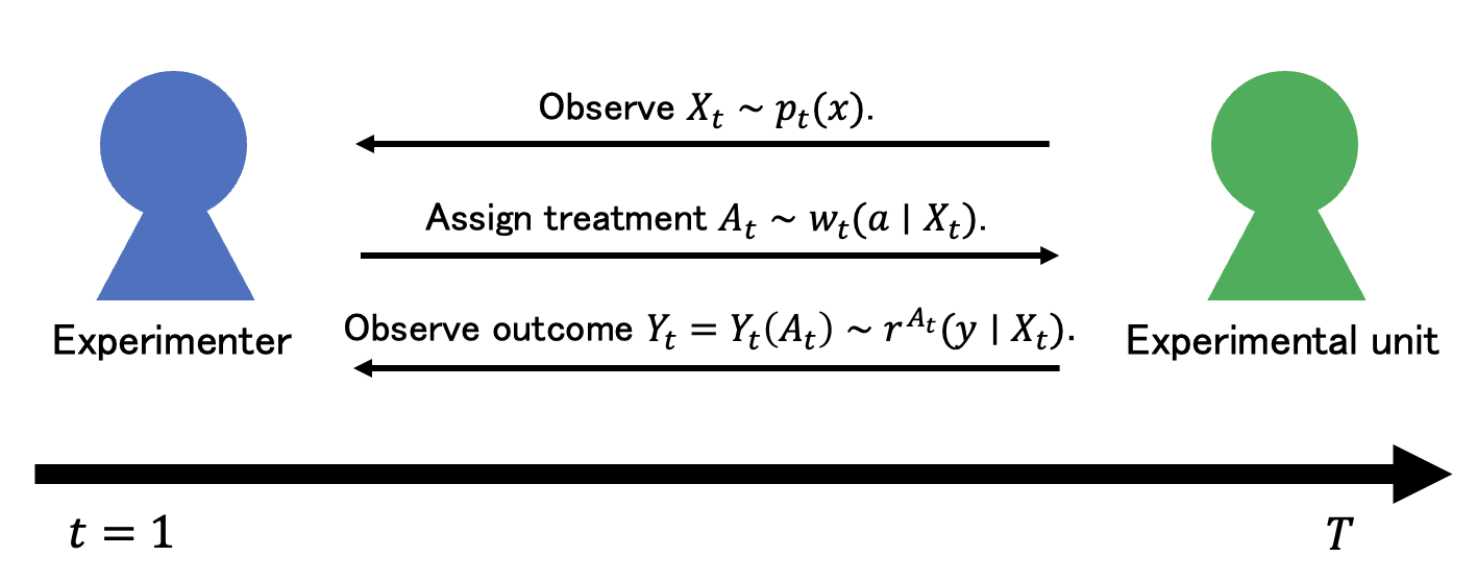}
\vspace{-5mm}
\caption{Active adaptive experiment.}
\label{fig:exp3}
\vspace{-5mm}
\end{figure}

\section{Semiparametric Efficiency Bound}
\label{sec:semiparametric_efficiency}
This section provides a lower bound for the asymptotic variance of ATE estimators. We focus on semiparametric efficiency bound for a class of regular estimators. The semiparametric efficiency bound generalize the Cramer-Rao lower bound, where the latter considers parametric models, while the former considers non- and semiparametric models. 

The semiparametric efficiency bound is a lower bound for a class of regular estimators, which are $\sqrt{T}$-consistent estimators for all DGPs and remain so under perturbations of size $1/\sqrt{T}$ to the DGP
(see \citet{Vaart1998}).

For simplicity, suppose that $(Y_t, A_t, X_t)$ is an i.i.d. copy of $(Y, A, X)$. 
Let $r^a(y\mid x)$ be a density of $Y(a)$ conditioned on $X = x$, $w(d\mid x)$ be a probability of $A$ conditioned on $X = x$, and $p(x)$ be a density of $X$. 
Suppose that $(Y, A, X)$ is generated as $(Y, A, X) \sim f(y, a, x)$, where
\begin{align*}
    f(y, a, x) = \prod_{d\in\{1, 0\}}\left(r^d(y(d)\mid x) w(d\mid x)\right)^{a=d}p(x),
\end{align*}
where $r^d(y(d)\mid x)$ is the marginal distribution of $Y(d)$. 
Note that in our setting, we can manipulate $w(a\mid x)$ and $p(x)$ and update them during an experiment.

Furthermore, we \emph{hypothetically} assume access to a set of covariates sampled from $q(x)$. Then, to derive the semiparametric efficiency bound, we consider a stratified sampling scheme, where two independent datasets $\mathcal{D}_T \coloneqq \{(Y_t, A_t, X_t)\}^T_{t=1}$ and $\widetilde{\mathcal{D}}_S \coloneqq \{\widetilde{X}_s\}^S_{s=1}$ are available. Note that $(Y_t, A_t, X_t) \iid \prod_{d\in\{1, 0\}}\Big\{r^d(y\mid x) w(d\mid x)\Big\}^{a = d}p(x)$ and $\widetilde{X}_s \iid q(x)$. First, we derive the semiparametric efficiency bound under a situation where $\mathcal{D}_T$ and $\widetilde{\mathcal{D}}_S$ are available. Then, since we know $q(x)$ and can generate infinitely many samples from $q(x)$, we focus on the semiparametric efficiency bound at a limit of $S\to\infty$.

Under this DGP and $q(x)$,
we obtain the following theorem. The proof is shown in Appendix~\ref{appdx:thm:lower}.
\begin{theorem}
\label{thm:lower}
Under Assumption~\ref{asm:covariate}, given $\mathcal{D}_T$ and $\widetilde{\mathcal{D}}_S$, as $S\to\infty$, the semiparametric efficiency bound for $\theta_0$ with fully nonparametric models is
\begin{align*}
&\tau(w, p)=\mathbb{E}_{X\sim q(x)}\left[\left(\frac{\sigma^2_0(1)(X)}{w(1\mid X)} + \frac{\sigma^2_0(0)(X)}{w(0\mid X)}\right)\frac{q(X)}{p(X)}\right].
\end{align*}
\end{theorem}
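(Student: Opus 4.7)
The plan is to derive the semiparametric efficiency bound by constructing the efficient influence function (EIF) and then computing its second moment, following a two-sample stratified-sampling argument adapted to the covariate-shift setting in the spirit of \citet{Uehara2020} and the general framework of \citet{Vaart1998}.

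First, I would introduce regular parametric submodels $\{f_\eta\}$ and $\{q_\eta\}$ passing through the truth at $\eta = 0$. Because $p(x)$ and $w(a\mid x)$ are set by the experimenter and thus known, I would not perturb them; only the outcome conditionals $r^1(\cdot\mid x)$ and $r^0(\cdot\mid x)$, together with $q$, carry scores $s^1(y\mid x)$, $s^0(y\mid x)$, and $s^q(x)$, each mean-zero under the corresponding truth. Differentiating $\theta(\eta) = \int(\mu_0^\eta(1)(x) - \mu_0^\eta(0)(x))\, q_\eta(x)\, dx$ at $\eta=0$ yields three terms: two ``outcome'' terms of the form $\int (y-\mu_0(a)(x))\, s^a(y\mid x)\, r^a(y\mid x)\, q(x)\, dy\, dx$ for $a\in\{1,0\}$, and one ``covariate'' term $\int (\mu_0(1)(x)-\mu_0(0)(x)-\theta_0)\, s^q(x)\, q(x)\, dx$.

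Next, I would guess the EIF as two components, one for each independent sample. For $\mathcal{D}_T$, the natural candidate is the doubly robust, importance-weighted form
\[
\phi_T(Y,A,X) = \frac{q(X)}{p(X)}\!\left[\frac{\mathbbm{1}[A=1](Y-\mu_0(1)(X))}{w(1\mid X)} - \frac{\mathbbm{1}[A=0](Y-\mu_0(0)(X))}{w(0\mid X)}\right],
\]
and for $\widetilde{\mathcal{D}}_S$, $\phi_S(\widetilde X) = \mu_0(1)(\widetilde X) - \mu_0(0)(\widetilde X) - \theta_0$. I would verify that these reproduce the pathwise derivative by computing $\mathbb{E}_f[\phi_T\, s^a]$ and $\mathbb{E}_q[\phi_S\, s^q]$ score-by-score. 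The key manipulation is the tower property: conditioning $\phi_T\, s^a$ on $X$ and using $\mathbb{E}[\mathbbm{1}[A=a]\mid X] = w(a\mid X)$ cancels the $w(a\mid X)$ denominator, and marginalizing over $X\sim p$ converts the factor $q(X)/p(X)$ into an expectation under $q$, matching the ``outcome'' contribution; the $\phi_S$ calculation matches the ``covariate'' contribution directly.

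The efficiency bound for $\sqrt{T}(\hat\theta - \theta_0)$ in the stratified two-sample setting then equals $\mathrm{Var}_f(\phi_T) + (T/S)\,\mathrm{Var}_q(\phi_S)$, so sending $S\to\infty$ eliminates the second term, which is where Assumption~\ref{asm:covariate} enters operationally. For the first term, $\mathbbm{1}[A=1]\cdot\mathbbm{1}[A=0]=0$ kills the cross product, and conditioning gives $\mathbb{E}[\mathbbm{1}[A=a](Y-\mu_0(a)(X))^2\mid X] = w(a\mid X)\,\sigma_0^2(a)(X)$, collapsing one factor of $w(a\mid X)$ in the denominator; a final change of measure from $p$ to $q$ absorbs one factor of $q/p$, yielding $\tau(w,p)$ as stated. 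The main obstacle is verifying that $\phi_T$ truly lies in the nonparametric tangent space (so that it is the orthogonal projection rather than merely an influence function); orthogonality across the two samples is automatic from independence, so the nontrivial check is that the weighted residual form of $\phi_T$ can be written in the closed linear span of $\{s^1, s^0\}$, which follows from the standard parameterization of the tangent space of a conditional density together with the fact that $p$ and $w$ are not perturbed.
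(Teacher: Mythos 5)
Your proposal is correct and follows essentially the same route as the paper's Appendix~\ref{appdx:thm:lower}: a stratified two-sample submodel, the pathwise derivative of $\theta(\beta)$ (the paper's Lemma~\ref{lem:deriv}), a guess-and-verify construction of the influence function $\phi_T + \phi_S$, and the $S\to\infty$ limit killing the $(T/S)\,\mathrm{Var}_q(\phi_S)$ term before the change of measure from $p$ to $q$ yields $\tau(w,p)$. The only cosmetic difference is that the paper also includes scores $\dot{w}$ and $\zeta$ for $w$ and $p$ in its tangent set and checks that the candidate is orthogonal to them (it is, since $\phi_T$ has conditional mean zero given $X$), whereas you omit those directions outright; this does not change the bound here.
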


Theorem~\ref{thm:lower} is a lower bound; that is, a theoretically best asymptotic variance. It means that even though we minimize the lower bound, it is still unclear whether an estimator whose asymptotic variance aligns with the lower bound exists. In Section~\ref{sec:opt_experiment}, we actually design an adaptive experiment under which an ATE estimator's asymptotic variance aligns with the lower bound minimized over $w$ and $p$. 

This semiparametric efficiency bound can be regarded as a functional of $p$ and $w$. During our adaptive experiments, we can minimize the bound regarding $p$ and $w$. 

\section{Efficient Probabilities}
\label{sec:efficiency_prob}
This section investigates $w$ and $p$ that minimizes $\tau(w, p)$ by considering the following functional optimization problem:
\begin{align}
\label{eq:functional}
    (w^*, p^*) := \argmin_{(w, p)\in\mathcal{W}\times \mathcal{P}}\tau(w, p), 
\end{align}
where $\mathcal{W}$ is a set of all measurable functions
$w:\mathcal{X}\to \Delta \coloneqq \big\{ u = (u_1\ u_2\ \dots\ u_K)^\top \in [0, 1]^K \mid \sum^K_{k=1}u_k = 1 \big\}$ with its $a$-th element $w(a\mid x)$, and $\mathcal{P}$ is a set of all measurable functions $p:\mathcal{X} \to \mathbb{R}^+$ such that $\int_{x\in\mathcal{X}} p(x) \mathrm{d}x = 1$. We refer to the minimizers (stationary point) $p^*$  and $w^*$ as \emph{efficient} covariate density and propensity score since the semiparametric efficiency bound is minimized under them. 

\subsection{Efficient Propensity Score}
\label{sec:eff_prop}
First, we find an efficient propensity score that minimizes the semiparametric efficiency bound given some $p(x)$. Given $p$, a stationary point regarding $w$ is given as the point-wise optimizer for each $x \in \mathcal{X}$. That is, given $x$, for $w(x) \coloneqq \{w(a\mid x)\}_{a\in[K]} \in (0, 1)^K$, we have $w^*(x) = \argmax_{w\in\Delta} \Big\{\frac{\sigma^2_0(1)(x)}{w(1\mid x)} + \frac{\sigma^2_0(0)(x)}{w(0\mid x)}\Big\}\frac{q^2(x)}{p(x)}$. Then, 
for each $x\in\mathcal{X}$, the efficient propensity score is 
\begin{align*}
&w^*(a\mid x) = \frac{\sigma_0(a)(x)}{\sigma_0(1)(x) + \sigma_0(0)(x)}\qquad \forall a \in \{1, 0\}.
\end{align*}
This efficient propensity score is referred to as the \emph{Neyman allocation}, which has been traditionally employed in designing experiments with binary treatments without covariate choice \citep{Neyman1934OnTT,Laan2008TheCA,Hahn2011,Kato2020adaptive,adusumilli2022minimax,Dai2023}. 

\subsection{Efficient Covariate Density}
Next, we find an efficient covariate density that minimizes the semiparametric efficiency bound for fixed $w$.

\begin{theorem} 
\label{thm:covariate_prob}
For each $x\in\mathcal{X}$, the minimizer $p^*$ of the semiparametric efficiency bound $\tau(w, p)$ with a fixed $w$ is 
\begin{align*}
&p^*(x) = \frac{\sqrt{\frac{\sigma^2_0(1)(x)}{w(1\mid x)} + \frac{\sigma^2_0(0)(x)}{w(0\mid x)}}}{\mathbb{E}_{X\sim q(x)}\left[\sqrt{\frac{\sigma^2_0(1)(X)}{w(1\mid X)} + \frac{\sigma^2_0(0)(X)}{w(0\mid X)}}\right]}q(x).
\end{align*}
\end{theorem}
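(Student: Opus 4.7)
The plan is to recast $\tau(w, p)$ as a one-dimensional calculus-of-variations problem in $p$ with the single constraint $\int_{\mathcal{X}} p(x)\,\mathrm{d}x = 1$, and solve it by Cauchy--Schwarz (or, equivalently, Lagrange multipliers). Writing $g(x) \coloneqq \frac{\sigma_0^2(1)(x)}{w(1\mid x)} + \frac{\sigma_0^2(0)(x)}{w(0\mid x)}$, which is strictly positive by Assumption~\ref{asm:dist}, we can express
\begin{align*}
\tau(w,p) = \int_{\mathcal{X}} g(x)\,\frac{q(x)^{2}}{p(x)}\,\mathrm{d}x.
\end{align*}
Note that $g$ does not depend on $p$, so for fixed $w$ the problem is purely in $p$.

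For the main step I would apply Cauchy--Schwarz in the form
\begin{align*}
\left(\int_{\mathcal{X}} \sqrt{g(x)}\, q(x)\,\mathrm{d}x\right)^{2}
= \left(\int_{\mathcal{X}} \frac{\sqrt{g(x)}\,q(x)}{\sqrt{p(x)}} \cdot \sqrt{p(x)}\,\mathrm{d}x\right)^{2}
\le \int_{\mathcal{X}} \frac{g(x)\,q(x)^{2}}{p(x)}\,\mathrm{d}x \cdot \int_{\mathcal{X}} p(x)\,\mathrm{d}x,
\end{align*}
so using the constraint $\int p = 1$ we get the universal lower bound $\tau(w,p) \ge \bigl(\mathbb{E}_{X\sim q}[\sqrt{g(X)}]\bigr)^{2}$, independent of the particular $p\in\mathcal{P}$. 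Equality in Cauchy--Schwarz requires $\sqrt{p(x)} \propto \sqrt{g(x)}\,q(x)/\sqrt{p(x)}$, i.e. $p(x) \propto \sqrt{g(x)}\,q(x)$; normalizing so that $\int p(x)\,\mathrm{d}x = 1$ produces exactly the claimed $p^{*}$, with normalizing constant $\mathbb{E}_{X\sim q}[\sqrt{g(X)}]$. As a sanity check, substituting $p^{*}$ back recovers $\tau(w,p^{*}) = \bigl(\mathbb{E}_{X\sim q}[\sqrt{g(X)}]\bigr)^{2}$, matching the lower bound, so $p^{*}$ attains it.

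As an alternative derivation I would set up the Lagrangian $L(p,\lambda) = \int g(x)q(x)^{2}/p(x)\,\mathrm{d}x + \lambda\bigl(\int p(x)\,\mathrm{d}x - 1\bigr)$; the first-order condition $\delta L/\delta p(x) = -g(x)q(x)^{2}/p(x)^{2} + \lambda = 0$ yields $p(x) = \sqrt{g(x)}\,q(x)/\sqrt{\lambda}$, and the normalization pins down $\sqrt{\lambda} = \mathbb{E}_{X\sim q}[\sqrt{g(X)}]$. Uniqueness and the fact that the stationary point is a minimum follow because $p \mapsto 1/p$ is strictly convex on $(0,\infty)$, so $\tau(w,\cdot)$ is strictly convex on $\mathcal{P}$.

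The only potentially delicate point (and the one I would want to state carefully rather than a genuine obstacle) is checking admissibility: by Assumption~\ref{asm:dist}, $g(x)$ is bounded away from $0$ and $\infty$, so $p^{*}$ is bounded away from $0$ and $\infty$ on $\mathcal{X}$ whenever $q$ is, making $p^{*} \in \mathcal{P}$ and ensuring that the Cauchy--Schwarz inequality above is applied to integrable quantities. With this in hand the theorem follows directly.
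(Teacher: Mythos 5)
Your proposal is correct, and its primary route differs from the paper's. The paper proves Theorem~\ref{thm:covariate_prob} by Lagrangian duality: it forms the Lagrange functional $\mathcal{L}(p;\alpha)=\int g(x)\,q^2(x)/p(x)\,\mathrm{d}x+\alpha\bigl(\int p(x)\,\mathrm{d}x-1\bigr)$ with $g(x)=\tfrac{\sigma^2_0(1)(x)}{w(1\mid x)}+\tfrac{\sigma^2_0(0)(x)}{w(0\mid x)}$, invokes strong duality via convexity, applies the Euler--Lagrange equation to get the stationarity condition $-g(x)q^2(x)/p^{*2}(x)+\alpha^*=0$, and resolves the multiplier through the KKT conditions --- exactly your ``alternative derivation'' (the paper's write-up contains some stray $\theta_0(x)$ terms in the KKT display that appear to be typographical artifacts and do not affect the final formula). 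Your main argument via Cauchy--Schwarz, $\bigl(\int\sqrt{g}\,q\bigr)^2\le\int g\,q^2/p\cdot\int p$, is genuinely different and arguably tighter as a piece of reasoning: it yields a global lower bound over all of $\mathcal{P}$ in one line, identifies the equality case directly, and entirely avoids the question of whether the Euler--Lagrange stationary point is a global minimizer (which the paper only settles by appealing to convexity). It also immediately delivers the minimized value $\tau(w,p^*)=\bigl(\mathbb{E}_{X\sim q}[\sqrt{g(X)}]\bigr)^2$ used later for $\tau^*$. The one caveat, which you flag and which the paper also leaves implicit, is admissibility: one needs $\mathbb{E}_{X\sim q}[\sqrt{g(X)}]$ finite and positive for $p^*$ to be a well-defined element of $\mathcal{P}$; positivity follows from $\sigma^2_0(a)(x)>\underline{C}$, while finiteness implicitly requires $w$ to keep $g$ integrable under $q$ (e.g.\ $w$ bounded away from zero), failing which the bound is vacuously infinite for every $p$.
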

By substituting $w^*$ obtained in Section~\ref{sec:eff_prop}, we have
\[p^*(x) = \frac{\sigma_0(1)(x) + \sigma_0(0)(x)}{\mathbb{E}_{X\sim q(x)}\left[\sigma_0(1)(x) + \sigma_0(0)(x)\right]}q(x).\]

This result implies that we can reduce the efficiency bound by drawing samples whose sums of conditional standard deviations of potential outcomes, $\sigma_0(1)(x) + \sigma_0(0)(x)$, are large. That is, we assign treatments to experimental units with high-variance outcomes more than those with less-variance outcomes to accurately estimate the ATE. 

For example, in clinical trials, to reduce the ATE, we gather elderly with high blood pressure more than healthy youth because the former may have more volatile outcomes. We illustrate this intuition in Figure~\ref{fig:concept_figure2}.

\begin{figure}[t]
  \centering
    \includegraphics[width=80mm]{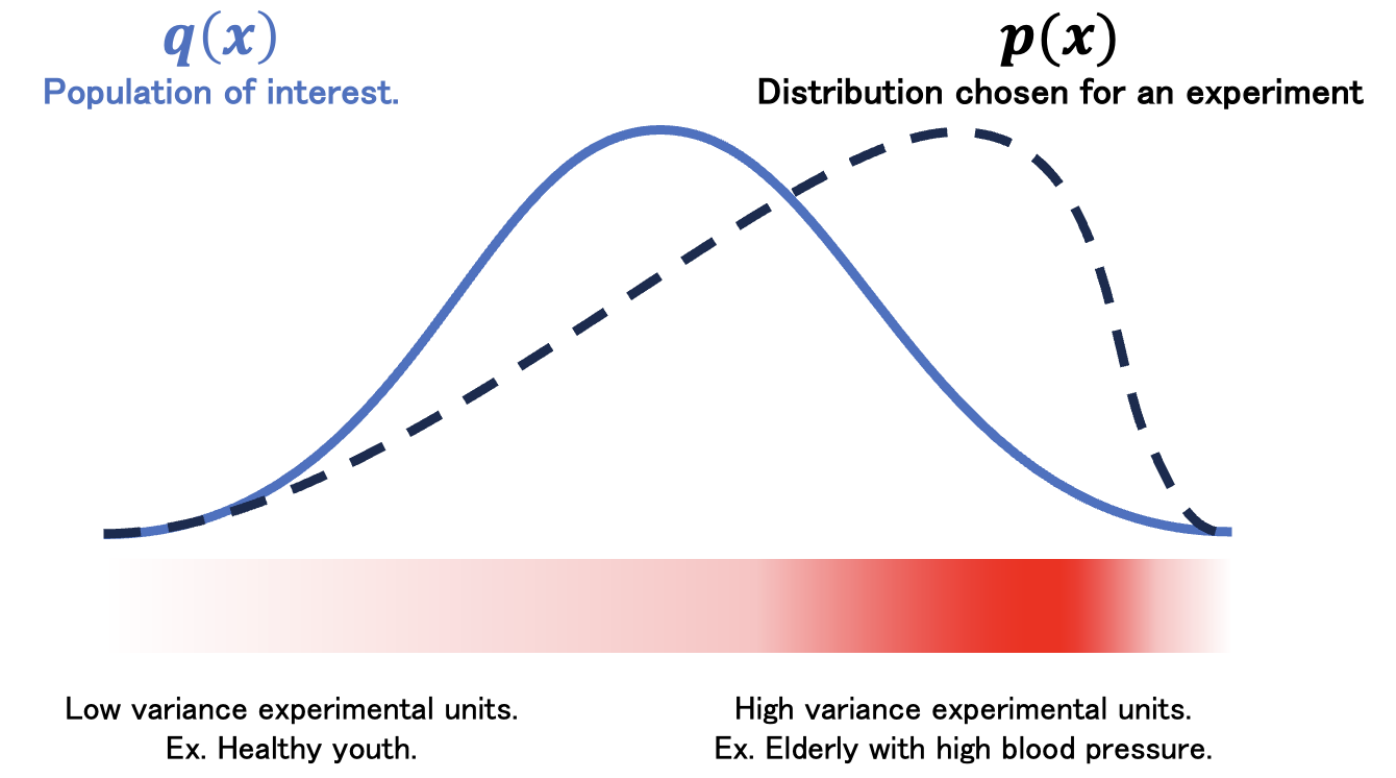}
\vspace{-5mm}
\caption{Efficient covariate density. The distribution with the blue line represents the covariate density of interest, while that with the dotted black line represents the covariate density in an experiment.}
\label{fig:concept_figure2}
\vspace{-3mm}
\end{figure}

\subsection{Minimized Semiparametric Efficiency Bound}
The minimizer of $\tau(w, p)$ is given as $w^*$ and $p^*$. By substituting them, we can obtain the minimized semiparametric efficiency bound $\tau^* \coloneqq \tau(w^*, p^*)$ as follows:
\begin{align*}
\tau^* &= 
    \left(\mathbb{E}_{X \sim q(x)}\left[\sqrt{\frac{\sigma^2_0(1)(X)}{w^*(1\mid X)} + \frac{\sigma^2_0(0)(X)}{w^*(0\mid X)}}\right]\right)^2\\
    &= 
    \Big(\mathbb{E}_{X \sim q(x)}\left[\sigma_0(1)(X) + \sigma_0(0)(X)\right]\Big)^2.
\end{align*}

\subsection{Efficiency Gain}
This section investigates how the efficient probability assignment probability and covariate density improve the semiparametric efficiency bound (efficiency gain).

\paragraph{Efficient propensity score} As a result of the optimization, for each $x\in\mathcal{X}$, for any $w \in \mathcal{W}$, we have
\begin{align*}
    &\frac{\sigma^2_0(1)(x)}{w(1\mid x)} + \frac{\sigma^2_0(0)(x)}{w(0\mid x)}\\
    &\geq \frac{\sigma^2_0(1)(x)}{w^*(1\mid x)} + \frac{\sigma^2_0(0)(x)}{w^*(0\mid x)}  = \big(\sigma_0(1)(x) + \sigma_0(0)(x)\big)^2. 
\end{align*}
We confirm how the lower bound is minimized by considering the uniform sampling $w^{\mathrm{Unif}}(a\mid x) = 1/2$ for each $a\in\{1, 0\}$ and $x\in\mathcal{X}$. We have 
\begin{align*}
    &\frac{\sigma^2_0(1)(x)}{w^{\mathrm{Unif}}(1\mid x)} + \frac{\sigma^2_0(0)(x)}{w^{\mathrm{Unif}}(0\mid x)}= 2\Big(\sigma^2_0(1)(x) + \sigma^2_0(0)(x)\Big)\\
    &\geq \big(\sigma_0(1)(x) + \sigma_0(0)(x)\big)^2= \frac{\sigma^2_0(1)(x)}{w^*(1\mid x)} + \frac{\sigma^2_0(0)(x)}{w^*(0\mid x)}.
\end{align*}
The difference between them is $\big(\sigma_0(1)(x) - \sigma_0(0)(x)\big)^2$. Thus, as the difference between $\sigma_0(1)(x)$ and $\sigma_0(0)(x)$ becomes large, the efficiency gain also becomes large.

\paragraph{Efficient covariate density}
Let us fix the propensity score as $w^*(a\mid x) = \frac{\sigma_0(1)(x)}{\sigma_0(1)(x) + \sigma_0(0)(x)}$ and consider a case where the covariate density is fixed at $q(x)$. 
When $(Y, A, X)$ is generated from a density $\prod_{d\in[K]}\big\{r^a(y\mid x) w^*(d\mid x)\big\}^{d=a}q(x)$, 
the semiparametric efficiency bound becomes 
\begin{align*}
\widetilde{\tau} &\coloneqq \mathbb{E}_{X\sim q(x)}\Bigg[\frac{\sigma^2_0(1)(X)}{w^*(1\mid X)} + \frac{\sigma^2_0(0)(X)}{w^*(0\mid X)}\Bigg]\\
&= \mathbb{E}_{X\sim q(x)}\left[\big(\sigma_0(1)(X) + \sigma_0(0)(X)\big)^2\right].
\end{align*}
That is, any regular estimators cannot obtain a smaller asymptotic variance than the lower bound when samples are generated from $\prod_{d\in[K]}\Big\{r^a(y\mid x) w^*(d\mid x)\Big\}^{d=a}q(x)$.

Next, we compare the efficiency bound $\tau^*$ when samples are generated from $\prod_{d\in[K]}\Big\{r^a(y\mid x) w^*(d\mid x)\Big\}^{d=a}p^*(x)$. 
From the Jensen inequality, it follows that
\begin{align*}
    \tau^* &= \Big(\mathbb{E}_{X\sim q(x)}\left[\sigma_0(1)(X) + \sigma_0(0)(X)\right]\Big)^2\\
    &\leq 
\mathbb{E}_{X\sim q(x)}\left[\big(\sigma_0(1)(X) + \sigma_0(0)(X)\big)^2\right]= \widetilde{\tau}.
\end{align*}
The equality holds when $\big(\sigma_0(1)(x) + \sigma_0(0)(x)\big)^2$ is the same across $x \in \mathcal{X}$. Therefore, when $\big(\sigma_0(1)(x) + \sigma_0(0)(x)\big)^2$ varies across $x$, we can reduce the semiparametric efficiency bound by optimizing $p(x)$; that is $\tau^* < \widetilde{\tau}$.

\section{Active Adaptive Experimental Design for Treatment Effect Estimation}
\label{sec:opt_experiment}
We design an adaptive experiment with covariate choice, referred to as the \emph{Active-Adaptive-Sampling-and-Augmented-Inverse-Probability-Weighting} (AAS-AIPWIS) experiment because our approach resembles active learning, which adaptively optimizes covariate density and the propensity score, and then estimate the ATE by the AIPWIW estimator. We first define our experiment in Section~\ref{sec:aaed} and then show the asymptotic property in Section~\ref{sec:asymp_dist}.

\subsection{AAS-AIPWIS Experiment}
\label{sec:aaed}
This section provides the definition of our experiment, which consists of the \emph{sampling phase} and \emph{estimation phase}. In round $t \in [T]$, an experimenter engages in the sampling phase, where the experimenter optimizes the covariate density and the propensity score using past observations up to round $t$; then, the experimenter observes covariates and assigns a treatment following the efficient probabilities. After round $T$, the estimation phase starts, where we estimate the ATE using $\{(Y_t, A_t, X_t)\}^T_{t=1}$. We describe the details below, and its pseudo-code is provided in Algorithm~\ref{alg}.

\paragraph{Sampling phase}
Define the following estimators:
\begin{itemize}[topsep=0pt, itemsep=0pt, partopsep=0pt, leftmargin=*]
    \item \textbf{Conditional mean estimator.} Let $\widehat{\mu}_t(a)(x)$ be an $\mathcal{F}_{t-1}$-measurable estimator of $\mu_0(a)(x)$ for each $a\in\{1, 0\}$ and $x\in\mathcal{X}$ such that $|\widehat{\mu}_t(a)(x)| < \overline{C}$ holds, and it converges to $\mu_0(a)(x)$ almost surely.
    \item \textbf{Conditional second-moment estimator.} Let $\widehat{\nu}_t(a)(x)$ be an $\mathcal{F}_{t-1}$-measurable estimator of $\nu_0(a)(x) \coloneqq \mathbb{E}[Y^2_t(a)\mid x]$ for each $a\in\{1, 0\}$ and $x\in\mathcal{X}$. 
    \item \textbf{Conditional variance estimator.} Let $\widehat{\sigma}^2_t(a)(x)$ be an $\mathcal{F}_{t-1}$-measurable bounded estimator of $\sigma^2_0(a)(x)$ for each $a\in\{1, 0\}$ and $x\in\mathcal{X}$, defined as
    \begin{align*}
        \widehat{\sigma}^2_t(a)(x) = \mathrm{thre}\left(\widehat{\nu}_t(a)(x) - \widehat{\mu}^2_t(a)(x); \underline{C}, \overline{C}\right),
    \end{align*}
    where $\mathrm{thre}(A; a, b) = A\ \mathrm{if}\ a \leq A \leq b,\ \ \mathrm{thre}(A; a, b) = a\ \mathrm{if}\ A < a,\ \mathrm{and}\ \mathrm{thre}(A; a, b) = b\ \mathrm{if}\ A > b$.
\end{itemize}
Specifically, we construct $\widehat{\mu}_t(a)(x)$ and $\widehat{\nu}_t(a)(x)$ only by using past observations $\mathcal{F}_{t-1} = \{(Y_s, A_s, X_s)\}^{t-1}_{s=1}$. 

For $\widehat{\mu}_t(a)(x)$ and $\widehat{\nu}_t(a)(x)$, we can use nonparametric estimators, such as the nearest neighbor regression estimator and kernel regression estimator, which have been proven to converge to the true function almost surely under a bounded sampling probability $\widehat{w}_t$ by \citet{yang2002} and \citet{qian2016kernel}. In our simulation studies, we employ the Nadaraya-Watson kernel regression \citep{Nadaraya1964,Watson1964}, whose consistency has been proved by \citet{qian2016kernel} in an adaptive experiment when the propensity score is lower bounded by a universal constant. 

Then, we estimate the efficient propensity score as 
\begin{align}
\label{eq:est_efficient_prob}
&\widehat{p}_t(x) \coloneqq \frac{\widehat{\sigma}_t(1)(x) + \widehat{\sigma}_t(0)(x)}{\mathbb{E}_{X\sim q(x)}\left[\widehat{\sigma}_t(1)(x) + \widehat{\sigma}_t(0)(x)\right]}q(x)\nonumber\\
&\widehat{w}_t(a\mid x) \coloneqq \frac{\widehat{\sigma}_t(a)(x)}{\widehat{\sigma}_t(1)(x) + \widehat{\sigma}_t(0)(x)},\\
&\mathrm{where}\ \ \widehat{\theta}_t(x) \coloneqq \widehat{\mu}_t(1)(x) - \widehat{\mu}_t(0)(x).\nonumber
\end{align}

We assumed that we can directly compute $\mathbb{E}_{X\sim q(x)}\left[\widehat{\sigma}_t(1)(X) + \widehat{\sigma}_t(0)(X)\right]$ by using the known $q(x)$. For example, we generate enough samples from $q(x)$ and compute the expectation. 

Instead of the true expectation, by using past observations, we can approximate it as $\mathbb{E}_{q(x)}\left[\widehat{\sigma}_t(1)(X) + \widehat{\sigma}_t(0)(X)\right] \approx$
\begin{align*}
&\frac{1}{t - 1}\sum^{t-1}_{s=1}\big(\widehat{\sigma}_t(1)(X_s) + \widehat{\sigma}_t(0)(X_s)\big) \frac{q(X_s)}{\widehat{p}_s(X_s)}.
\end{align*}

\paragraph{Estimation phase}
In the estimation phase, we estimate the ATE $\theta_0$ by using the AIPWIW estimator, defined as
\begin{align}
\label{eq:AIPWIW}
    \widehat{\theta}_T \coloneqq \frac{1}{T}\sum^T_{t=1}\Psi_t(Y_t, A_t, X_t; \widehat{w}_t, \widehat{p}_t),
\end{align}
where $\Psi_t(Y_t, A_t, X_t; \widehat{w}_t, \widehat{p}_t) \coloneqq$
\begin{align*}
    &\Bigg(\frac{\mathbbm{1}[A_t = 1]\big(Y_t(1) - \widehat{\mu}_t(1)(X_t)\big)}{\widehat{w}_t(1\mid X_t)} \\
    &\ \ \ \ \ \ \ \ \ \ \ \ - \frac{\mathbbm{1}[A_t = 0]\big(Y_t(0) - \widehat{\mu}_t(0)(X_t)\big)}{\widehat{w}_t(0\mid X_t)}\Bigg)\frac{q(X_t)}{\widehat{p}_t(X_t)}\\
    \nonumber
    &\ \ \ \ \ \ \ \ \ \ \ \ \ \ \ \ \ \ \ \ \ \ \ \ \ \ \ \ \ \ \ \ \ \ \ \ \ \ \ \ \ \ \ \ \ \ \ \ \ \ \ + \mathbb{E}_{\widetilde{X}\sim q(x)}\left[\widehat{\theta}_t(\widetilde{X})\right],
\end{align*}
where $\mathbb{E}_{\widetilde{X}\sim q(x)}\left[\widehat{\theta}_t(\widetilde{X})\right]$ is computed using the known $q(x)$. This estimator is a variant of the \emph{Adaptive AIPW (A2IPW) estimator} proposed in \citet{Kato2020adaptive} with importance weighting \citep{Uehara2020}.

This estimator removes a sample selection bias caused by the propensity score, which assigns treatments based on covariates. Furthermore, by employing conditional mean estimators $\widehat{\mu}_t$ constructed from past observations, we can employ $\mathbb{E}\left[\Psi_t(Y_t, A_t, X_t; \widehat{w}_t, \widehat{p}_t)\mid \mathcal{F}{t-1}\right] = \theta_0$. This method is a form of sample-splitting technique, as discussed in \citep{Laan2008TheCA}, frequently utilized in semiparametric analysis and recently refined as DML \citep{ChernozhukovVictor2018Dmlf}.

\subsection{Asymptotic Distribution and Efficiency}
\label{sec:asymp_dist}
This section provides an asymptotic distribution of $\widehat{\theta}_T$. First, we make the following assumptions.
\begin{assumption}
\label{asm:subgaussian}
    Random variables $Y_t(a) - \mu_a(a)(x)$ and $X_t - \mathbb{E}[X_t]$ are mean-zero sub-Gaussian.
\end{assumption}
\begin{assumption}
\label{asm:nuisance_consistency}
For all $a\in\{1, 0\}$ and any $x\in\mathcal{X}$, 
    \begin{align*}
        \widehat{\mu}_t(a)(x) \xrightarrow{\mathrm{a.s.}} \mu_0(a)(x),\quad \widehat{\sigma}^2_t(a)(x) \xrightarrow{\mathrm{a.s.}} \sigma^2_0(a)(x)
    \end{align*}
    hold as $t\to \infty$.
\end{assumption}
This assumption also implies $\widehat{p}_t(x) \xrightarrow{\mathrm{a.s.}} p^*(x)$. 
\begin{theorem}
\label{thm:asymp_dist}
Consider the AAS-AIPWIS experiment. 
    Suppose that Assumptions~\ref{asm:dist}--\ref{asm:covariate} and \ref{asm:subgaussian}--\ref{asm:nuisance_consistency} hold. Then, 
    \begin{align*}
        \sqrt{T}\left(\widehat{\theta}_T - \theta_0\right) \xrightarrow{\mathrm{d}}\mathcal{N}(0, \tau^*)
    \end{align*}
    holds as $T \to \infty$. 
\end{theorem}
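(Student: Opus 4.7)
The plan is to recognise that, although the observations generated by the AAS-AIPWIS experiment are not i.i.d., the centered summands of $\widehat\theta_T$ form a martingale difference sequence with respect to $\{\mathcal F_t\}$, so the conclusion will follow from a martingale central limit theorem (MCLT). Writing $\xi_t \coloneqq \Psi_t(Y_t,A_t,X_t;\widehat w_t,\widehat p_t) - \theta_0$, so that $\sqrt{T}(\widehat\theta_T - \theta_0) = T^{-1/2}\sum_{t=1}^T \xi_t$, the first task is to verify the conditional unbiasedness $\mathbb{E}[\xi_t \mid \mathcal F_{t-1}] = 0$. Since $\widehat\mu_t$, $\widehat\sigma_t$, $\widehat w_t$, and $\widehat p_t$ are all $\mathcal F_{t-1}$-measurable by construction, conditioning first on $X_t$ kills $\widehat w_t$ from the denominator of each indicator term and leaves $\theta_0(X_t) - \widehat\theta_t(X_t)$; re-weighting by $q(X_t)/\widehat p_t(X_t)$ and integrating over $X_t \sim \widehat p_t$ yields $\mathbb{E}_{X \sim q}[\theta_0(X) - \widehat\theta_t(X)]$, which cancels exactly against the additive correction $\mathbb{E}_{\widetilde X \sim q}[\widehat\theta_t(\widetilde X)]$, leaving $\theta_0$.

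Next I would apply a standard MCLT (of Brown/Dvoretzky/Hall--Heyde type), whose hypotheses reduce to (i) conditional variance convergence $T^{-1}\sum_{t=1}^T \mathbb{E}[\xi_t^2 \mid \mathcal F_{t-1}] \xrightarrow{p} \tau^*$ and (ii) a conditional Lindeberg (or Lyapunov) condition. For (i), I would use the law of total variance: conditioning on $X_t$ makes the two indicator-weighted terms orthogonal, giving
$$\mathbb{E}[\xi_t^2 \mid \mathcal F_{t-1}] = \mathbb{E}_{X\sim q}\!\left[\left(\frac{\sigma^2_0(1)(X)}{\widehat w_t(1\mid X)} + \frac{\sigma^2_0(0)(X)}{\widehat w_t(0\mid X)}\right)\frac{q(X)}{\widehat p_t(X)}\right] + R_t,$$
where the remainder $R_t$ absorbs the squared-bias contributions $(\widehat\mu_t - \mu_0)^2$ and the small extra variance coming from the residual $\theta_0(X_t) - \widehat\theta_t(X_t)$. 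By Assumption~\ref{asm:nuisance_consistency} we have $\widehat\mu_t \to \mu_0$ and $\widehat\sigma^2_t \to \sigma^2_0$ almost surely, hence (from their definitions in \eqref{eq:est_efficient_prob}) $\widehat w_t \to w^*$ and $\widehat p_t \to p^*$ almost surely. The thresholding that forces $\widehat\sigma^2_t \in [\underline C,\overline C]$ together with Assumption~\ref{asm:dist} keeps both $1/\widehat w_t$ and $q/\widehat p_t$ uniformly bounded above, so dominated convergence yields the pointwise limit $\tau(w^*,p^*) = \tau^*$, and $R_t \xrightarrow{\mathrm{a.s.}} 0$. A Ces\`aro averaging argument then delivers (i).

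For (ii), the same uniform boundedness of $1/\widehat w_t$ and $q/\widehat p_t$, combined with sub-Gaussian tails (Assumption~\ref{asm:subgaussian}) and the boundedness of $\widehat\mu_t$, implies that $\sup_t \mathbb{E}[\xi_t^4 \mid \mathcal F_{t-1}]$ is almost surely finite, which gives the conditional Lyapunov condition with $\delta = 2$ and hence Lindeberg. Invoking the MCLT then yields $T^{-1/2}\sum_{t=1}^T \xi_t \xrightarrow{\mathrm d} \mathcal N(0,\tau^*)$.

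The main obstacle will be step (i): the denominator of $\widehat p_t$ is itself an \emph{integrated} nuisance quantity $\mathbb{E}_{X\sim q}[\widehat\sigma_t(1)(X) + \widehat\sigma_t(0)(X)]$, so one must pass through an expectation before invoking pointwise almost-sure consistency, and the convergence of the ratios $q/\widehat p_t$ and $1/\widehat w_t$ needs to be uniform enough for dominated convergence inside another expectation. This is precisely the place where the thresholding construction of $\widehat\sigma^2_t$ between $\underline C$ and $\overline C$ — together with the a priori bound $|\widehat\mu_t| < \overline C$ — is essential, since it is what supplies the envelope function needed to interchange the limit and the expectation.
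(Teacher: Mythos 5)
Your proposal is correct and follows essentially the same route as the paper: both verify that the centered summands form a martingale difference sequence (the cancellation of $\mathbb{E}_{X\sim q}[\theta_0(X)-\widehat\theta_t(X)]$ against the plug-in term is exactly the paper's Lemma~\ref{lem:mds}), establish convergence of the conditional variances to $\tau^*$ via the almost-sure consistency of the nuisance estimators together with the uniform bounds on $1/\widehat w_t$ and $q/\widehat p_t$ supplied by the thresholding, and use sub-Gaussianity for the higher-moment condition. The only difference is the flavor of martingale CLT invoked (a conditional-variance/Lindeberg version versus the paper's Hamilton-style version, which additionally requires showing $T^{-1}\sum_t\xi_t^2\xrightarrow{p}\tau^*$ via a weak law for MDS and an $L^r$-convergence argument), which does not change the substance of the argument.
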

Thus, the asymptotic variance of $\widehat{\theta}_T$ aligns with the semiparametric efficiency bound minimized concerning the covariate density $p(x)$ and the propensity score $w(a \mid x)$. This result implies that our proposed experiment successfully reduces the asymptotic variance, and we cannot achieve a smaller variance within a class of regular estimators.

In the proof, we use the sample-splitting technique developed by \citet{klaassen1987}, \citet{Laan2008TheCA}, \citet{Luedtke2016}, \citet{ChernozhukovVictor2018Dmlf}, and \citet{Kato2021adr}. With this technique, we can derive the asymptotic distribution only with the convergence rate condition in Assumption~\ref{asm:nuisance_consistency}, without assuming the Donsker condition.

We also note that $\widehat{\theta}_T$ is consistent for $\theta_0$, even without assuming Assumption~\ref{asm:subgaussian}.
\begin{corollary}
    Suppose that Assumptions~\ref{asm:dist}--\ref{asm:covariate} and \ref{asm:subgaussian} hold. Then, $
        \widehat{\theta}_T - \theta_0$ holds as $T \to \infty$. 
\end{corollary}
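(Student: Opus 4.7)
The plan is to recognize that $\widehat{\theta}_T - \theta_0$ is a time-average of a martingale difference sequence and then invoke a martingale strong law of large numbers. Unlike in Theorem~\ref{thm:asymp_dist}, consistency does not need nuisance convergence; it requires only the $\mathcal{F}_{t-1}$-measurability of the plug-in estimators and the boundedness enforced by the thresholding used to construct $\widehat{\sigma}_t^2$ (and hence $\widehat{w}_t$ and $\widehat{p}_t$).

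First, I would verify the conditional unbiasedness $\mathbb{E}[\Psi_t(Y_t, A_t, X_t; \widehat{w}_t, \widehat{p}_t) \mid \mathcal{F}_{t-1}] = \theta_0$, a fact already noted in Section~\ref{sec:aaed}. Conditioning on $\mathcal{F}_{t-1}$, the draws proceed as $X_t \sim \widehat{p}_t$, $A_t \mid X_t \sim \widehat{w}_t(\cdot \mid X_t)$, $Y_t \mid A_t, X_t \sim r^{A_t}(y \mid X_t)$. Integrating out $Y_t$ collapses each IPW term to the residual $\mu_0(a)(X_t) - \widehat{\mu}_t(a)(X_t)$ scaled by $\mathbbm{1}[A_t = a]/\widehat{w}_t(a \mid X_t)$; integrating out $A_t$ cancels the denominator; and the importance weight $q(X_t)/\widehat{p}_t(X_t)$ converts the remaining expectation under $\widehat{p}_t$ into one under $q$. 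The augmentation term $\mathbb{E}_{\widetilde X \sim q}[\widehat{\theta}_t(\widetilde X)]$ exactly offsets the residual plug-in contribution, leaving $\mathbb{E}_{X \sim q}[\mu_0(1)(X) - \mu_0(0)(X)] = \theta_0$. Setting $D_t \coloneqq \Psi_t - \theta_0$, the sequence $(D_t, \mathcal{F}_t)_{t \geq 1}$ is therefore a martingale difference sequence.

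Second, I would bound $\mathbb{E}[D_t^2 \mid \mathcal{F}_{t-1}]$ uniformly in $t$. The thresholding of $\widehat{\sigma}_t^2$ into $[\underline C, \overline C]$ keeps $\widehat{w}_t(a \mid x)$ bounded away from $0$ and $1$ and keeps $q(x)/\widehat{p}_t(x)$ bounded above by a constant depending only on $\underline C$ and $\overline C$. Combined with the boundedness of $\mu_0$ and $\widehat{\mu}_t$ from Assumption~\ref{asm:dist} and the sub-Gaussian tail of $Y_t(a) - \mu_0(a)(X_t)$ from Assumption~\ref{asm:subgaussian}, this gives $\sup_t \mathbb{E}[D_t^2 \mid \mathcal{F}_{t-1}] \leq M$ almost surely for a deterministic constant $M$. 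Consequently $\sum_{t \geq 1} t^{-2}\mathbb{E}[D_t^2] < \infty$, and Chow's martingale strong law of large numbers yields $T^{-1}\sum_{t=1}^T D_t \xrightarrow{\mathrm{a.s.}} 0$, which is $\widehat{\theta}_T \xrightarrow{\mathrm{a.s.}} \theta_0$; convergence in probability is immediate.

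The main obstacle, which is also the pivotal step in the proof of Theorem~\ref{thm:asymp_dist}, is the conditional-unbiasedness computation: one must carry the importance weight $q/\widehat{p}_t$ through the IPW cancellation in the correct order and check that the augmentation term lands with the right sign so that the plug-in biases in $\widehat{\mu}_t$ cancel exactly. Once that identity is in hand, the remainder is a routine martingale argument that invokes neither Donsker-type regularity nor convergence of any nuisance estimator, which is precisely why consistency can be obtained without Assumption~\ref{asm:nuisance_consistency}.
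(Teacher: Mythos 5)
Your proposal is correct, and it is essentially the argument the paper has in mind: the paper gives no standalone proof of this corollary, only the remark that consistency follows from double robustness because the propensity score and covariate density used in the estimator are the actual (known) sampling probabilities, and your conditional-unbiasedness identity is exactly Lemma~\ref{lem:mds} already established in Appendix~\ref{appdx:minimax_opt}. Your addition of the uniform conditional second-moment bound (via the thresholding into $[\underline{C},\overline{C}]$, which keeps $\widehat{w}_t$ bounded away from zero and $q/\widehat{p}_t$ bounded above) followed by Chow's martingale strong law simply supplies the routine details the paper omits, and correctly avoids any appeal to Assumption~\ref{asm:nuisance_consistency}.
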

This property is based on the double robustness of the AIPW estimator $\widehat{\theta}_T$, which guarantees the consistency of $\widehat{\theta}_T$ if either $\mu_0(a)(x)$ or $w^*(a \mid x)$ is consistently estimated. In our case, since we know the true value of $w^*(a \mid x)$, the consistency is guaranteed by the double robustness. For this property, even if we cannot estimate $\mu_0(a)(x)$ consistently, we can still consistently estimate $\theta_0$.

Note that the asymptotic variances in the lower bound and our estimator in Theorem~\ref{thm:asymp_dist} do not depend on the dimension $d$. The dimensionality may affect whether the nuisance estimators $\widehat{\mu}_t$ and $\widehat{\sigma}^2_t$ satisfy Assumption~\ref{asm:nuisance_consistency}. For example, if $d$ is significantly large, we might use arguments about sparsity or benign overfitting to guarantee that Assumption~\ref{asm:nuisance_consistency} holds. Once Assumption~\ref{asm:nuisance_consistency} holds, the asymptotic variance of $\widehat{\theta}_T$ does not depend on $d$.

\begin{algorithm}[tb]
   \caption{AAS-AIPWIS experiment}
   \label{alg}
\begin{algorithmic}
   \STATE {\bfseries Parameter:} Positive constants $C_{\mu}$ and $C_{\sigma^2}$.
   \STATE {\bfseries Initialization:} 
   \FOR{$t\in\{1, 2\}$}
   \STATE Assign $A_t=1$ at $t = 1$ and $A_t = 0$ at $t = 2$. For each $a\in\{1, 0\}$, set $\widehat{w}_{t}(a\mid x) = 1/2$.
   \ENDFOR
   \FOR{$t=K+1$ to $T$}
   \STATE Construct $\widehat{p}_t(x)$ following \eqref{eq:est_efficient_prob}. 
   \STATE Draw $X_t$ from $\widehat{p}_t(x)$. 
   \STATE Construct $\widehat{w}_{t}(1\mid X_t)$ by using the variance estimators.
   \STATE Draw $\xi_t$ from the uniform distribution on $[0,1]$. 
   \STATE $A_t = 1$ if $\xi_t \leq \widehat{w}_{t}(1\mid X_t)$ and $A_t = 0$ otherwise. 
   \ENDFOR
   \STATE Estimate $\theta_0$ by using the AIPWIW estimator $\widehat{\theta}_T$ in \eqref{eq:AIPWIW}.
\end{algorithmic}
\end{algorithm}

\section{Simulation Studies}
This section presents simulation studies. There are two objectives for the simulation. First, through simulation studies, we clarify the problem setting. Second, for the problem, we confirm the soundness of our designed experiment.

\begin{figure}[t]
  \centering
    \includegraphics[width=70mm]{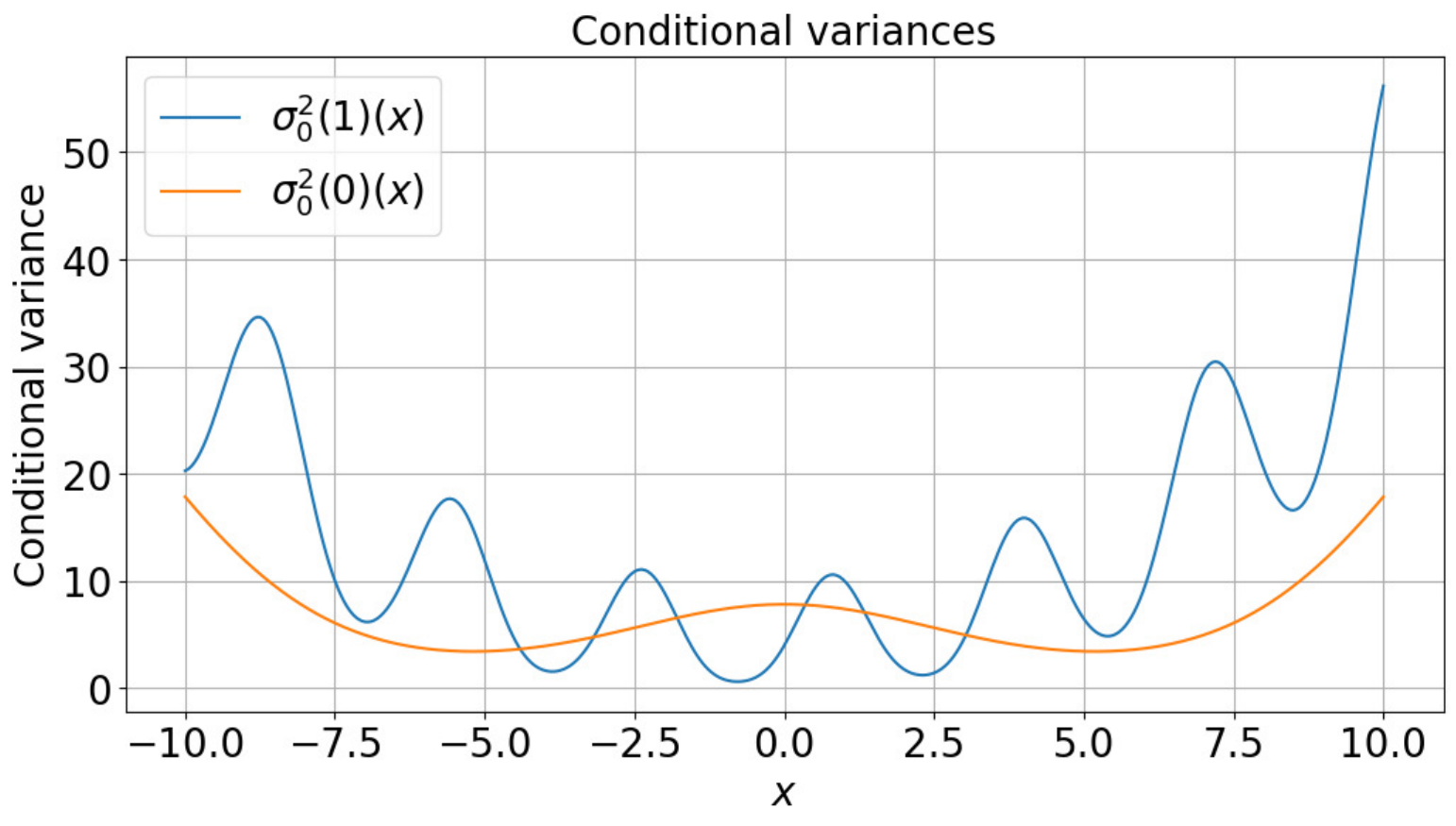}
\vspace{-5mm}
\caption{Heterogeneous variances.}
\label{fig:hetero}
\vspace{1mm}
    \includegraphics[width=70mm]{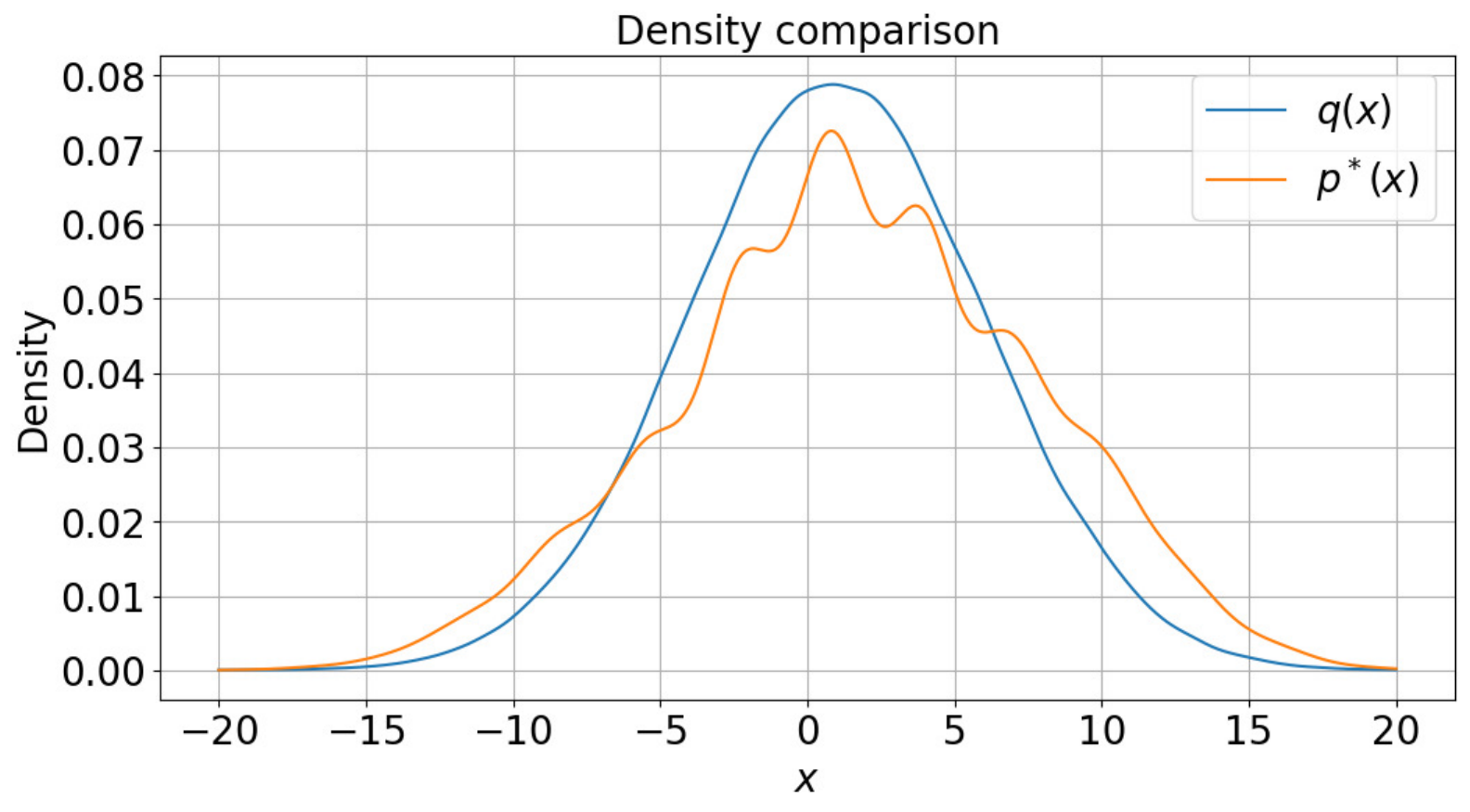}
\vspace{-5mm}
\caption{Comparison between $q(x)$ and $p^*(x)$ in the experiment with the covariates following the Gaussian distribution.}
\label{fig:gaussian}

\vspace{1mm}

    \includegraphics[width=70mm]{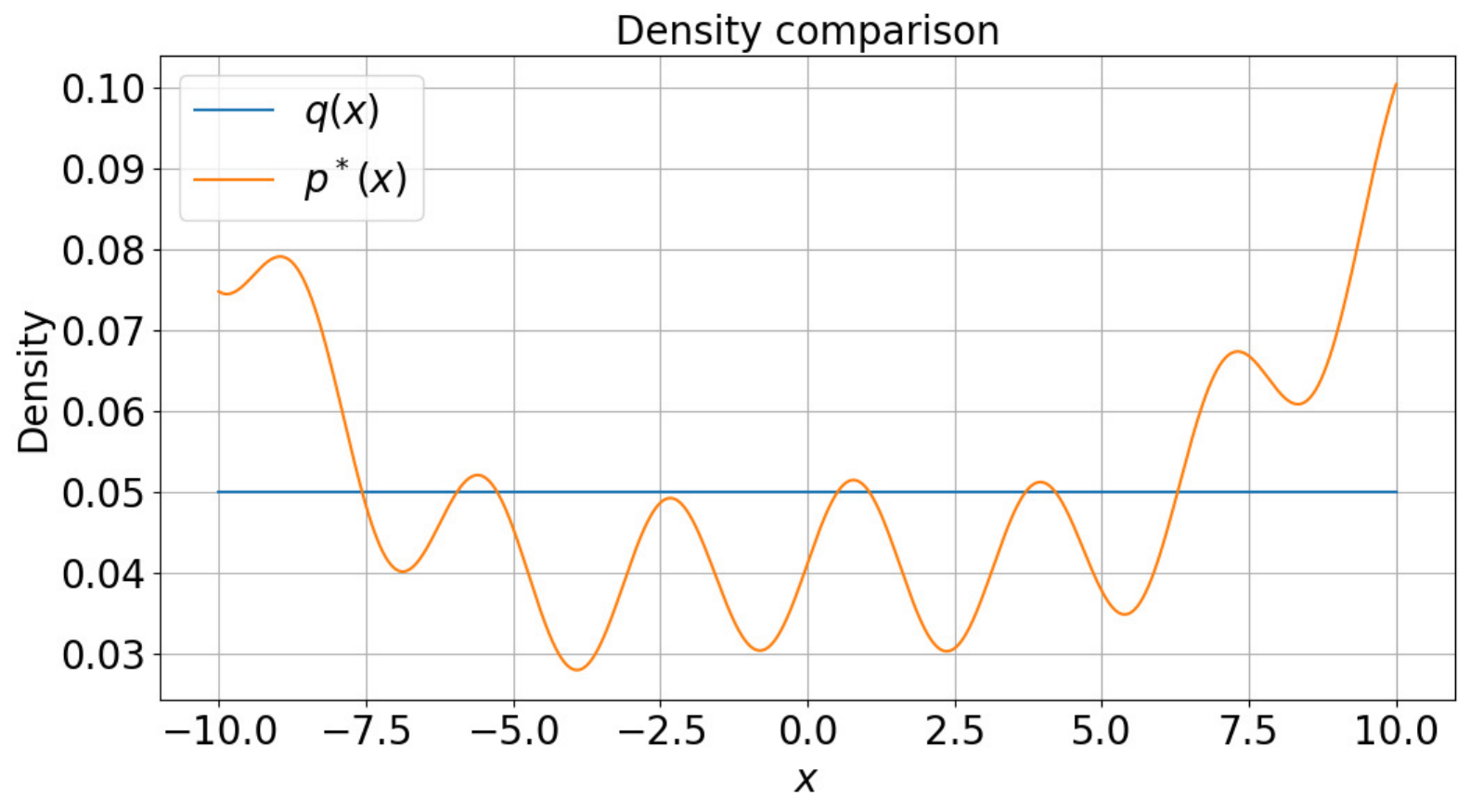}
\vspace{-5mm}
\caption{Comparison between $q(x)$ and $p^*(x)$ in the experiment with the covariates following the Uniform distribution.}
\vspace{1mm}
\label{fig:uniform}
\vspace{-5mm}
\end{figure}

\begin{figure}[t]
  \centering
    \includegraphics[width=70mm]{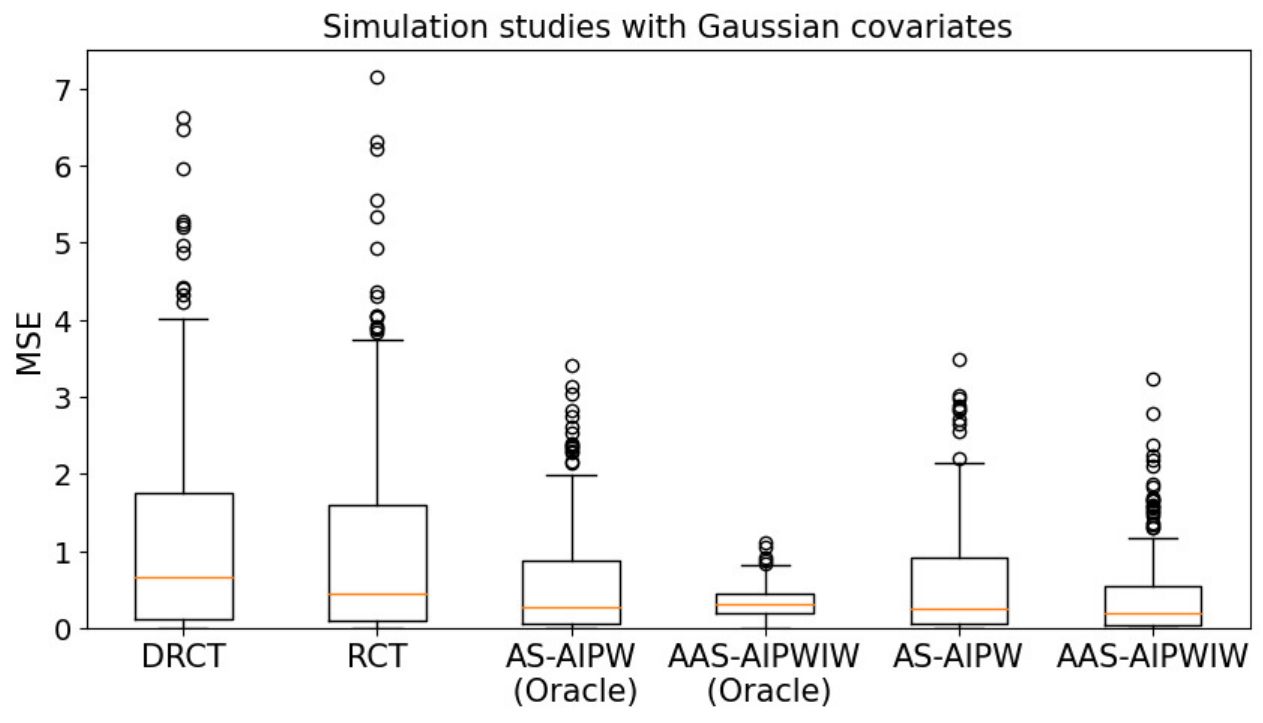}
\vspace{-5mm}
\caption{Results of simulation studies with covariates following with a Gaussian distribution $\mathcal{N}(1, 25)$.}
\label{fig:gaussian_res}
\vspace{1mm}
\includegraphics[width=70mm]{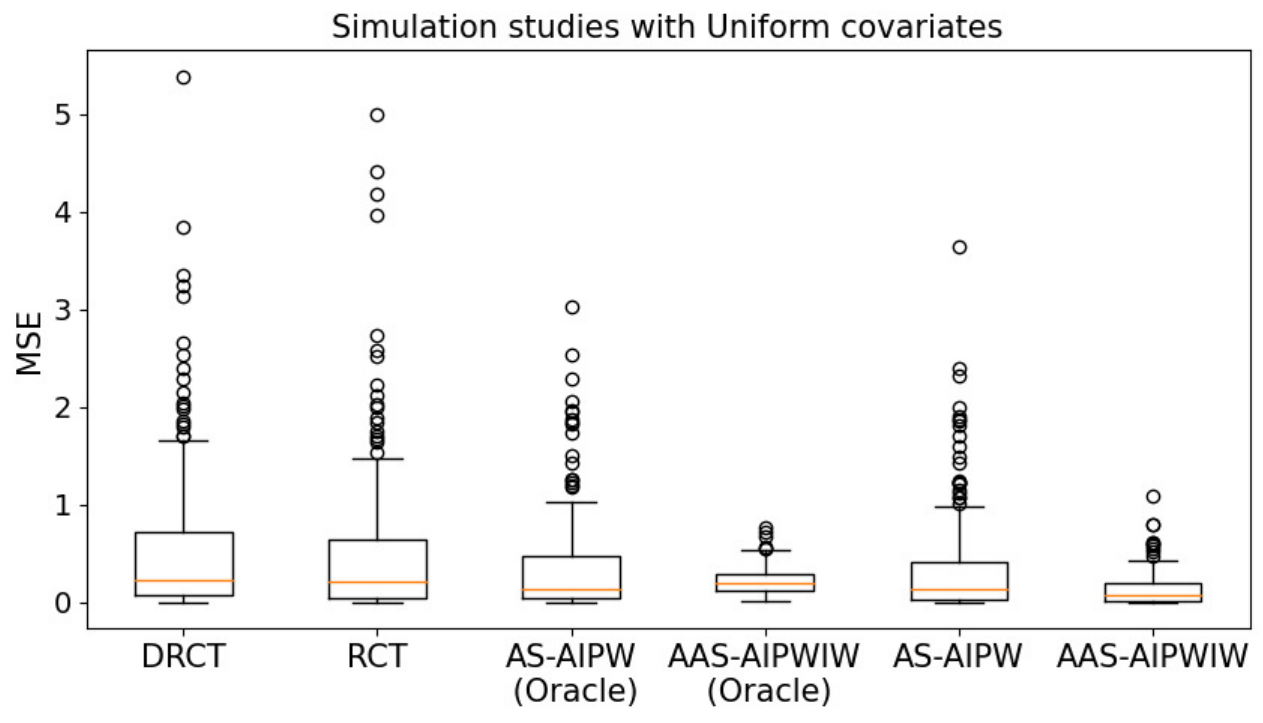}
\vspace{-5mm}
\caption{Results of simulation studies with covariates following with a Uniform distribution $\mathcal{U}(-10, 10)$.}
\label{fig:uniform_res}
\vspace{-5mm}
\end{figure}

\paragraph{Basic setup}
Let $T = 2000$. We specify the means as $\mathbb{E}_{X\sim q(x)}[Y(1)] = 10$ and $\mathbb{E}_{X\sim q(x)}[Y(0)] = 7$, under which the ATE is $\theta_0 = 3$. Let covariates $x$ be one-dimensional. For any $x$, the conditional means are 
\begin{align*}
    &\mu_0(1)(x) \coloneqq  C_1\left(- x + 3x^2 - 1\right),\\
    &\mu_0(0)(x) \coloneqq C_0\left(0.1x + 0.2\right),
\end{align*}
where $C_1$ and $C_0$ are parameters so that $\mathbb{E}_{X\sim q(x)}[\mu_0(1)(X)] = \mathbb{E}_{X\sim q(x)}[Y(1)] = 10$ and $\mathbb{E}_{X\sim q(x)}[\mu_0(0)(X)] = \mathbb{E}_{X\sim q(x)}[Y(0)] = 7$. 

\paragraph{Heterogeneous variances}
We define $\sigma^2_0(a)(x)$  as follows:
\begin{align*}
    &\sigma^2_0(1)(x) \coloneqq 2 + 1.2\mathrm{sin}(2x) +(x + x^2)/25\\
    &\sigma^2_0(0)(x) \coloneqq 2 + 0.8\mathrm{cos}(x / 2)  +x^2 / 50.
\end{align*}
For $x\in[-10, 10]$, we plot $\sigma^2_0(a)(x)$ in Figure~\ref{fig:hetero}. 

\paragraph{Alternative experiments}
We compare our designed experiment with the deterministic RCT (DRCT), the RCT, and the Adaptive-Sampling-and-AIPW (AS-AIPW) experiment. In the deterministic RCT, we deterministically assign treatment $1$ if $t$ is odd and assign treatment $0$ if $t$ is even. In the RCT, we randomly assign treatments with probability $w(1\mid x) = w(0\mid x) = 0.5$ for any $x$. We estimate the ATE by using the sample average in the DRCT and the RCT. In the AS-AIPW experiment, we optimize the propensity score but do not optimize the covariate density (experimental units are just generated from $q(x)$); then, we estimate the ATE with the AIPW estimator without importance weighting. The AS-AIPW estimator is basically the same method proposed in \citet{Laan2008TheCA}, \citet{Hahn2011}, and \citet{Kato2020adaptive}, though we simplified them for simplicity. 

Additionally, we conduct the AS-AIPW and AAS-AIPWIW experiments under the true $w^*$ and $p^*$; that is, they are known at the beginning of an experiment. We refer to the experiments as the AS-AIPW (Oracle) and the AAS-AIPWIW (Oracle). Although we use the true values for $w^*$ and $p^*$ in the sampling phase, we estimate $\mu(a)(x)$ in the estimation phase to construct the AIPW estimator.

\textbf{Performance measure.}
In each of our simulations, we conduct $200$ experiments and compute the performance measures. In this section, in the y-axis of Figures~\ref{fig:gaussian_res}--\ref{fig:uniform_res}, we report the empirical MSEs for $\theta_0$; that is, $\frac{1}{200}\sum^{200}_{i=1}(\theta_0 - \hat{\theta}^{(i)}_T)^2$, where $\hat{\theta}^{(i)}_T$ denotes the estimator of ATE in the $i$-th trial.  Note that the asymptotic MSEs correspond to the asymptotic variances. 

\textbf{Covariates following with a Gaussian distribution.}
Let $q(x)$ be the density of the Gaussian distribution $\mathcal{N}(1, 25)$ with mean $1$ and variance $25$. For $q(x)$, we plot $q(x)$ and $p^*(x)$ in Figure~\ref{fig:gaussian}. We show the results in Figure~\ref{fig:gaussian_res}. 

\paragraph{Covariates following with a Uniform distribution}
Let $q(x)$ be a density of the Uniform distribution with support $[-10, 10]$. For $q(x)$, we plot $q(x)$ and $p^*(x)$ in Figure~\ref{fig:uniform}. We show the results of simulation studies in Figure~\ref{fig:uniform_res}. 

\paragraph{Results} 
From Figures~\ref{fig:gaussian_res} and \ref{fig:uniform_res}, we find that our experiments successfully reduce the MSE compared to the others. Surprisingly, the AS-AIPW and the AAS-AIPW show almost identical or better performances than the oracle ones. Such a paradox often occurs in semiparametric analysis, as pointed out by \citep{Henmi2004paradox,Hitomi_Nishiyama_Okui_2008}.

\section{Related Work and Discussion}
In this section, we discuss related topics.

\subsection{Related Work}
We provide a literature review of related work in the field of adaptive experimental design for efficiently estimating ATEs.

Adaptive sampling has long been investigated with various purposes \citep{Wald1945,Robbins1952,Solomon1970, Rosenberger2001,bai2024primer}. Pioneering contributions for ATE estimation by \citet{Laan2008TheCA} and \citet{Hahn2011} lay the foundation for this field. 

Building upon \citet{Laan2008TheCA} and \citet{Hahn2011}, \citet{Kato2020adaptive} and \citet{Meehan2022} propose refined methods by extending the existing methodologies. These studies derive the semiparametric efficiency bound for ATE estimators and specify the efficient propensity score that minimizes this bound. 

This approach is further refined by \citet{cook2023semiparametric}, who proves the asymptotic normality of the A2IPW estimator under weaker assumptions than those of \citet{Kato2020adaptive}. They also propose an adaptive experiment that ensures finite-sample stability by truncating the propensity score \citep{Waudby-Smith2024}. Furthermore, they derive tighter concentration inequalities using results from \citet{Howard2020TimeuniformNN}, \citet{Waudby-Smith2023}, and \citet{Waudby-Smith2023_nonpara}. \citet{gupta2021efficient} designs an adaptive experiment involving instrumental variables. There are other studies discussing related topics \citep{Blackwell2022, Westort2021}.

Recent work by \citet{Dai2023} introduces an approach based on online stochastic projected gradient descent for experimental design when the optimal treatment allocation rule depends on unknown parameters. Their study rigorously addresses the issue of variance estimation, previously considered negligible at the order of $O(1/\sqrt{T})$ under asymptotic approximation. They propose a method to effectively estimate variance and conduct experimental design while accounting for the impact of variance estimation errors. \citet{zhao2023adaptive} also investigates the variance estimation problem in Neyman allocation.

The AIPW estimator is popular in adaptive experiments. Utilizing its variance reduction and martingale properties, the A2IPW estimator is introduced in \citet{Kato2020adaptive} based on the arguments in \citet{Laan2008TheCA} and \citet{ChernozhukovVictor2018Dmlf}. \citet{Kato2021adr} and \citet{li2023double} propose DML for adaptive experiments. \citet{hadad2019} and \citet{Zhan2021} also focus on this topic, presenting different estimators under a different setting. There are also several approaches for statistical inference in adaptive experiments \citep{Zhang2020, Bibaut2021}.

In the context of active learning, \citet{Sugiyama2006} demonstrates that an appropriate covariate density choice can reduce the asymptotic variance of regression coefficients. This setting is extended to the pool-based setting by \citet{Sugiyama2009}, which is relevant to our context. \citet{Wu2019pool} further extends these approaches to adaptive experiments, contrasting with earlier non-adaptive experiments where observations are gathered without parameter estimation during the experiment. Independently, \citet{zrnic2024active} explores active learning for causal inference.

\citet{Uehara2020} considers off-policy evaluation, a generalization of ATE estimation, under a covariate shift. They show that ATE estimation with importance weighting, as proposed by \citet{Shimodaira2000}, with nonparametrically estimated density-ratio, may not yield an estimator whose asymptotic variance aligns with the semiparametric efficiency bound.

Best-Arm Identification (BAI), another related topic, is a type of stochastic multi-armed bandit (MAB) problem \citep{Thompson1933, Lai1985}, aimed at identifying the best treatment (arm) \citep{Audibert2010, Kaufman2016complexity, Russo2016}. BAI has two settings: fixed-confidence \citep{Garivier2016} and fixed-budget. The fixed-confidence setting is related to sequential testing \citep{bechhofer1968sequential, Garivier2016}, while the fixed-budget setting is BAI with a fixed sample size \citep{Bubeck2009, Bubeck2011, degenne2023existence}. \citet{kato2024locally, kato2024worstcase, kato2024agna} investigate and generalize the Neyman allocation in fixed-budget BAI, proving its asymptotic optimality.

In the MAB problem, covariates are referred to as contexts. While various algorithms with contexts have been investigated in the MAB problem, it is still under exploration how to utilize contexts in BAI. In the fixed-confidence setting, \citet{russac2021abn}, \citet{kato2021role}, and \citet{simchi2024experimentation} investigate the BAI with covariates, which is referred to as contextual information. In the fixed-budget setting, \citet{kato2024adaptivepolicylearning} designs an adaptive experiment for policy learning.

\subsection{Sample Size Calculation}
In statistical hypothesis testing, sample size calculation is important. Here, we discuss how to decide $T$ in our method. Let the null and alternative hypotheses be $H_0: \theta_0 = \mu$ and $H_1(\Delta): \theta_1 = \mu + \Delta$, where $\mu \in \mathbb{R}$ and $\Delta > 0$ are some constants. Let $\alpha \in (0, 1)$ be a significance level such that $\mathbb{P}(\mathrm{reject}\ H_0) = \alpha$, and $\beta \in (0, 1)$ be the Type II error level. Let $R^{\alpha}_T$ be a rejection region defined based on $\alpha$. Then, we are interested in deciding the minimum sample size for hypothesis testing, which is defined as $T^*_{\alpha,\beta}(\Delta) \coloneqq \min\big\{T: \mathbb{P}_{H_1(\Delta)}\big(\widehat{\theta}_T \in R^{\alpha}_T\big) \geq 1 - \beta \big\}$ for some estimator $\widehat{\theta}_T$ of $\theta_0$. Let us assume that $\sqrt{T}\big(\widehat{\theta}_T - \mu\big) \xrightarrow{\mathrm{d}} \mathcal{N}(0, V)$ under $H_0$ and $\sqrt{T}\big(\widehat{\theta}_T - \mu - \Delta\big) \xrightarrow{\mathrm{d}} \mathcal{N}(0, V)$ under $H_1$. Then, when using $t$-statistics for hypothesis testing, we have $T^*_{\alpha,\beta}(\Delta) = \frac{V}{\Delta^2}\big(z_{1-\alpha/2} - z_{\beta}\big)$, where $z_{c}$ is the $c$-quantile of the standard normal distribution \citep{Balsubramani2016,Kato2020adaptive}. For example, in the AAS-AIPWIW, we substitute $\tau^*$ into $V$. This result implies that we can reduce the sample size $T$ for hypothesis testing by reducing the asymptotic variance. For the details of sample size calculation, see Appendix~G in \citet{Kato2020adaptive}.

\subsection{Extensions}
We can consider several extensions of our designed experiment. In Appendix~\ref{appdx:extension}, we propose three extensions: (i) rejection sampling, (ii) batch design, and (iii) experiments with multiple treatments. In the rejection sampling setting, we do not assume that $q(x)$ is known, but we sequentially observe an experimental unit generated from the unknown $q(x)$. By employing rejection sampling, we can decide whether to accept the observed experimental unit and if we accept it, we assign a treatment to it. This setting is feasible without knowing $q(x)$. In (ii), we can update $\widehat{p}_t$ and $\widehat{w}_t$ at certain rounds. For example, we can update them at $t = 100, 200, 300, \dots$ and need to keep using the same $\widehat{p}_t$ and $\widehat{w}_t$ between the two update timings. In (iii), we consider there are treatments more than or equal to three.

In this study, we assumed that $q(x)$ is known. For example, if the population we are interested in consists of the people of a certain country, we can obtain information about their distribution through government statistics. Then, in an experiment, we consider gathering people with different distributions using the method we propose. In applications, we can also consider estimating $q(x)$ using observations from a pilot experiment. If there are $M$ samples from $q(x)$, it is possible to estimate $q(x)$ using those samples. In this case, if $M$ diverges independently of $T$, the current results can be directly applicable (asymptotics of $M\to \infty$ and $T\to \infty$). However, if the divergence rate of $M$ depends on $T$, we need to slightly modify our results and methods for each setting of applications.

\section{Conclusion}
In this study, we designed an adaptive experiment to estimate ATEs efficiently. In our adaptive experiment, we can optimize covariate densities and propensity scores in each round based on past observations obtained until that round. To derive the efficient covariate density and propensity score, we focused on the semiparametric efficiency bound for the ATE. Based on the derived efficient probabilities, we designed an adaptive experiment using the AIPW estimator with importance weighting. Under the designed experiment, we demonstrated that the asymptotic variance of the AIPW estimator aligns with the efficiency bound, minimized with respect to the covariate density and propensity score.

\section*{Impact Statement}
This paper presents an innovative approach to machine learning, specifically focusing on the efficient estimation of ATEs through adaptive experimental design. Our work has the potential to significantly impact various sectors, including healthcare, economics, and social sciences. However, if not applied carefully, the optimization techniques could lead to biased sampling or unethical manipulation of experimental conditions, potentially harming subjects or leading to incorrect conclusions. Furthermore, the adaptive nature of the experiments, relying on sequential sampling and treatment assignment based on past observations, may raise data privacy and consent issues, especially in sensitive fields like healthcare. In conclusion, while our work aims to advance the field of machine learning by providing a novel framework for efficient ATE estimation through adaptive experiments, it is crucial to consider the ethical implications and societal impacts of its application. Researchers and practitioners should be mindful of the potential risks and strive to apply these methods in a responsible and equitable manner.

\bibliographystyle{icml2024}
\bibliography{AEE.bbl}

\appendix
\onecolumn

\section{Proof of Theorem~\ref{thm:lower}}
\label{appdx:thm:lower}
In this section, we prove Theorem~\ref{thm:lower}, the semiparametric efficiency bound for regular estimators of $\theta_0$. Our derivation of the semiparametric efficiency bound follows the approaches of \citet{hahn1998role}, \citet{hirano2003}, \citet{narita2019counterfactual}, and \citet{Uehara2020}. 

Their proof considers a nonparametric model for the distribution of potential outcomes and defines regular subparametric models. Then, (i) we characterize the tangent set for all regular parametric submodels, (ii) verify that the parameter of interest is pathwise differentiable, (iii) verify that a guessed semiparametric 
efficient influence function lies in the tangent set, and (iv) calculate the expected square of the influence function.

\subsection{Stratified Sampling}
As discussed in Section~\ref{sec:semiparametric_efficiency}, we consider a stratified sampling scheme to derive the semiparametric efficiency bound. This sampling scheme assumes the availability of the following two datasets:
\begin{align*}
    &\mathcal{D}_T \coloneqq \{(Y_t, A_t, X_t)\}^T_{t=1},\\
    &\widetilde{\mathcal{D}}_S \coloneqq \{\widetilde{X}_s\}^S_{s=1},
\end{align*}
where $(Y_t, A_t, X_t) \iid \prod_{d\in\{1, 0\}}\Big\{r^d(y\mid x) w(d\mid x)\Big\}^{d=a}p(x)$ and $\widetilde{X}_s \iid q(x)$. We refer to $\mathcal{D}_T$ as the observation data and $\widetilde{\mathcal{D}}_S$ as the evaluation data.
Because we know $q(x)$, we can obtain infinite samples from $q(x)$. Therefore, we first derive the semiparametric efficiency bound under the sampling scheme and then consider $S \to \infty$ and then consider the semiparametric efficiency bound regarding the asymptotics of $T$. 

Our derivation of the semiparametric efficiency bound is inspired by the technique used in \citet{narita2019counterfactual} and \citet{Uehara2020}, but we modify it since we need to consider the asymptotics of $T$ and $S$ separately.

\subsection{Parametric Submodels}
\label{appdx:submodel}
We consider regular parametric submodels with a parameter $\beta \in \mathbb{R}$ to discuss the semiparametric efficiency bound under the stratified sampling scheme. 

We denote the probability densities under the parametric submodels by 
\begin{align}
\label{eq:param_sub}
    f(y, a, x; \beta) = \prod_{d\in\{1, 0\}}\left(r^d(y\mid x; \beta) w(d\mid x; \beta)\right)^{a = d}p(x; \beta)
\end{align}
and $q(\widetilde{x}; \beta)$, where $r^d(y\mid x; \beta)$ is the conditional density of $Y(d)$ given $X = x$, $w(d\mid x; \beta) = \mathrm{Pr}(A_t = d\mid X = x; \beta)$, $p(x;\beta)$ is the density of $X$ in observation data, and  $q(\widetilde{x};\beta)$ is the density of $\widetilde{X}$ in evaluation data. 

We also denote $\prod_{d\in\{1, 0\}}\left\{r^d(y\mid x; \beta) w(d\mid x; \beta)\right\}^{a = d}$ by $f(y, a\mid x; \beta)$.

Then, the log densities under the parametric submodels are given as
\begin{align*}
    &\log \big(f(y, a, x; \beta)\big)\\
    &= \mathbbm{1}[a = 1]\Big(\log \big(r^1(y\mid x)\big) + \log \big(w(1\mid x; \beta)\big)\Big) + \mathbbm{1}[a = 0] \Big(\log \big(r^0(y\mid x)\big) + \log\big(w(0\mid x; \beta)\big)\Big) + \log \big(p(x; \beta)\big),
\end{align*}
and $\log\big(q(\widetilde{x}; \beta)\big)$. 

We consider the joint log-likelihood of $\mathcal{D}_T$ and $\mathcal{D}_S$, which is defined as
\begin{align*}
    \sum^T_{t=1}\log \left(f(Y_t, A_t, X_t; \beta)\right) + \sum^S_{s=1}\log\big(q(\widetilde{X}_s; \beta)\big). 
\end{align*}

By taking the derivative of $\sum^T_{t=1}\log \left(f(Y_t, A_t, X_t; \beta)\right) + \sum^S_{s=1}\log\big(q(\widetilde{X}_s; \beta)\big)$ with respect to $\beta$, we can obtain the corresponding score as
\begin{align*}
    &S(W;\beta) \coloneqq \frac{\mathrm{d}}{\mathrm{d}\beta} \left(\sum^T_{t=1}\log \left(f(Y_t, A_t, X_t; \beta)\right) + \sum^S_{s=1}\log\big(q(\widetilde{X}_s; \beta)\big)\right)\\
    &= \sum^T_{t=1}\left(\mathbbm{1}[A_t=1] \left\{s^1(Y_t\mid X_t; \beta) + \frac{\dot{w}(1\mid X_t;\beta)}{w(1\mid X_t; \beta)}\right\} + \mathbbm{1}[A_t=0] \left(s^0(Y_t\mid X_t; \beta) + \frac{\dot{w}(0\mid X_t;\beta)}{w(0\mid X_t; \beta)}\right) + \zeta(X_t; \beta)\right)\\
    &\ \ \ \ \ \ \ + \sum^S_{s=1}\widetilde{\zeta}(\widetilde{X}_s; \beta),
\end{align*}
where $s^d$, $\dot{\omega}$, $\zeta$, and $\widetilde{\zeta}$ are defined as
\begin{align*}
    s^d(y\mid x; \beta) &\coloneqq \frac{\mathrm{d}}{\mathrm{d}\beta} \log r^d(y\mid x; \beta),\\
    \dot{\omega}(d\mid x; \beta) &\coloneqq \frac{\mathrm{d}}{\mathrm{d}\beta} w(d\mid x; \beta),\\
    \zeta(x; \beta) &\coloneqq \frac{\mathrm{d}}{\mathrm{d}\beta} \log p(x; \beta),\\
    \widetilde{\zeta}(\widetilde{x}; \beta) &\coloneqq \frac{\mathrm{d}}{\mathrm{d}\beta} \log q(\widetilde{x}; \beta).
\end{align*}

\subsection{ATE Function}
Next, we consider the ATE under the parametric submodel. Under our defined parametric submodel, let $\theta(\beta)$ be our parameter of interest as a function of $\beta$, which is given as
\begin{align*}
    \theta(\beta) \coloneqq \mathbb{E}_{X \sim q(x; \beta),(Y, A)\sim f(y, a\mid X; \beta)}\left[Y(1) - Y(0)\right].
\end{align*}
We have the following lemma for the derivative of $\theta(\beta)$.
\begin{lemma} 
Under the parametric submodel defined in \eqref{eq:param_sub}, we have
\label{lem:deriv}
    \begin{align*}
        &\frac{\partial \theta(\beta)}{\partial \beta} = \mathbb{E}_{W}\left[Y(1)s^1(Y(1)\mid X; \beta) - Y(0)s^0(Y(0)\mid X; \beta) + \theta(\widetilde{X}; \beta)\widetilde{\zeta}(\widetilde{X}; \beta)\right],
    \end{align*}
    where 
    \begin{align*}
            \theta(X; \beta) &\coloneqq \mathbb{E}_{Y(1)\sim r^1(y(1)\mid X; \beta)}[Y(1)\mid X] - \mathbb{E}_{Y(0)\sim r^0(y(0)\mid X; \beta)}[Y(0)\mid X].
    \end{align*}
\end{lemma}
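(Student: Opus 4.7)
\textbf{Proof proposal for Lemma~\ref{lem:deriv}.}
The plan is to write $\theta(\beta)$ as a nested integral against the densities $r^1(\cdot\mid x;\beta)$, $r^0(\cdot\mid x;\beta)$, and $q(\cdot;\beta)$, and then differentiate under the integral sign. Note that because the parameter of interest is $\mathbb{E}_{X\sim q(x;\beta)}[Y(1)-Y(0)]$ and the propensity score $w(a\mid x;\beta)$ together with the sampling density $p(x;\beta)$ only govern how observations are generated (not the target functional), the score pieces $\dot{w}/w$ and $\zeta$ of Appendix~\ref{appdx:submodel} should play no role in the final expression. This is the structural reason we only pick up the terms $s^1, s^0,\widetilde{\zeta}$.

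Concretely, I would first rewrite
\begin{align*}
\theta(\beta) \;=\; \int_{\mathcal{X}} \Big(\int y\,r^1(y\mid x;\beta)\,\mathrm{d}y - \int y\,r^0(y\mid x;\beta)\,\mathrm{d}y\Big) q(x;\beta)\,\mathrm{d}x
\;=\; \int_{\mathcal{X}} \theta(x;\beta)\, q(x;\beta)\,\mathrm{d}x.
\end{align*}
Next I would apply the product rule to obtain
\begin{align*}
\frac{\partial \theta(\beta)}{\partial \beta}
\;=\; \int_{\mathcal{X}} \frac{\partial \theta(x;\beta)}{\partial \beta}\, q(x;\beta)\,\mathrm{d}x
\;+\; \int_{\mathcal{X}} \theta(x;\beta)\,\frac{\partial q(x;\beta)}{\partial \beta}\,\mathrm{d}x.
\end{align*}
For the second term, using $\partial q(x;\beta)/\partial\beta = \widetilde{\zeta}(x;\beta)\,q(x;\beta)$ from the definition of the score, it rewrites as $\mathbb{E}_{\widetilde{X}\sim q}[\theta(\widetilde{X};\beta)\widetilde{\zeta}(\widetilde{X};\beta)]$, which matches the third piece of the claimed formula.

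For the first term, I would differentiate inside $\theta(x;\beta)$. Interchanging differentiation and the $y$-integral (justified by the standard regularity assumed for a regular parametric submodel) and then using $\partial r^d(y\mid x;\beta)/\partial\beta = s^d(y\mid x;\beta)\,r^d(y\mid x;\beta)$, one obtains
\begin{align*}
\frac{\partial \theta(x;\beta)}{\partial\beta}
\;=\; \mathbb{E}_{Y(1)\sim r^1(\cdot\mid x;\beta)}\!\big[Y(1)\,s^1(Y(1)\mid x;\beta)\big]
\;-\; \mathbb{E}_{Y(0)\sim r^0(\cdot\mid x;\beta)}\!\big[Y(0)\,s^0(Y(0)\mid x;\beta)\big].
\end{align*}
Integrating against $q(x;\beta)$ and applying the tower property recovers the first two terms of the claimed expression.

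The only genuinely nontrivial step is justifying the interchange of differentiation and integration in $\partial/\partial\beta\int y\,r^d(y\mid x;\beta)\,\mathrm{d}y$; this is standard for regular parametric submodels in a Hellinger-differentiable family (see \citet{Vaart1998}), and combined with Assumption~\ref{asm:dist} giving bounded conditional means and variances, a dominated-convergence argument closes the gap. I do not expect any obstacle beyond this routine verification; the rest is bookkeeping of the score decomposition and the product rule.
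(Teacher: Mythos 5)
Your proof is correct and is exactly the standard pathwise-derivative computation; the paper itself gives no proof of this lemma, deferring instead to Appendix~B of \citet{narita2019counterfactual}, which carries out the same differentiation under the integral sign (product rule over $q(x;\beta)$ and $r^d(y\mid x;\beta)$, with the $\dot{w}/w$ and $\zeta$ score components dropping out because $\theta(\beta)$ does not depend on $w$ or $p$). One point worth making explicit: your derivation yields the first two terms as an expectation with $X\sim q(x;\beta)$ rather than the sampling density $p$, which is the reading consistent with how the lemma is later invoked in the verification step of Lemma~\ref{lem:basic_bound}, where the importance weight $q(X)/p(X)$ converts the $p$-expectation into the $q$-expectation.
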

The statement follows from Appendix~B in \citet{narita2019counterfactual}. 

\subsection{Cram\'{e}r-Rao Lower Bound}
\label{appdx:cramer}
To obtain an intuition about the semiparametric efficiency bound, 
we discuss the Cram\'{e}r-Rao lower bound under a specification of the parametric submodels, which simplifies an argument about the efficiency bound.

To consider the Cram\'{e}r-Rao lower bound, we specify the score functions as follows:
\begin{align}
\label{eq:specify}
    s^1(y\mid x; \beta) &= \frac{1}{T}\frac{\big( y - \mathbb{E}_{Y(1)\sim r^d(y(1)\mid x; \beta)}[Y(1)\mid X = x]\big)}{w(1\mid x)}\\
    s^0(y\mid x; \beta) &= - \frac{1}{T}\frac{\big( y - \mathbb{E}_{Y(0)\sim r^d(y(0)\mid x; \beta)}[Y(0)\mid X = x]\big)}{w(0\mid x)}\nonumber\\
    \dot{w}(d\mid X;\beta_0) &= 0,\nonumber\\
    \zeta(x; \beta) &= 0,\nonumber\\
    \widetilde{\zeta}(\widetilde{x}; \beta) &= \frac{1}{S}\Big\{\theta(\widetilde{x}; \beta) - \theta(\beta)\Big\}.\nonumber
\end{align}
We consider the Cram\'{e}r-Rao lower bound when these score functions are true ones in the DGP. 

We fix $T$ and $S$ and consider a case where $m$ i.i.d. copes of 
\[W\coloneqq \Big(X_1,\cdots,X_{T},A_1,\cdots,A_{T},Y_1,\cdots,Y_{T},\widetilde{X}_1,\cdots,\widetilde{X}_{S}\Big)\]
are available, 
where $W$ is generated from $\prod^T_{t=1}f(Y_t, A_t, X_t; \beta)\prod^S_{s=1}q(\widetilde{x}; \beta)$. 
Then, we consider the Cram\'{e}r-Rao lower bound for estimating $\theta_0$. 

Let $\mathbb{E}_W$ is an expectation operator over $(Y, A, X)$ and $\widetilde{X}$. 

Given a distribution whose density is given as $\prod^T_{t=1}f(Y_t, A_t, X_t; \beta)\prod^S_{s=1}q(\widetilde{X}_s; \beta)$, the Fisher information is given as
\[\mathbb{E}_W\left[\left(\nabla_{\beta}\left(\sum^T_{t=1}\log \left(f(Y_t, A_t, X_t; \beta)\right) + \sum^S_{s=1}\log \left(q\left(\widetilde{X}_s; \beta\right)\right)\right)\right)^2\right]^{-1}.\]

Then, the Cram\'{e}r-Rao lower bound for unbiased estimators of the function $\theta(\beta)$ under the parametric submodel is
\begin{align*}
    &\mathrm{CR}(\theta)\coloneqq \nabla_{\beta}\theta(\beta)\mathbb{E}_W\left[\left(\nabla_{\beta}\left(\sum^T_{t=1}\log \left(f(Y_t, A_t, X_t; \beta)\right) + \sum^S_{s=1}\log \left(q\left(\widetilde{X}_s; \beta\right)\right)\right)\right)^2\right]^{-1}\nabla_{\beta}\theta(\beta).
\end{align*}

\begin{theorem}\label{thm:tabular}
Under the specification of \eqref{eq:specify}, the Cram\'{e}r-Rao lower bound $T\mathrm{CR}(\theta)$ is $\tau^*(\omega, p)$ as $S \to \infty$.
\end{theorem}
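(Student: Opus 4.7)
The plan is a direct computation: evaluate both ingredients of the scalar Cram\'er--Rao bound $\mathrm{CR}(\theta) = (\nabla_\beta \theta(\beta))^2 / I_1$, where $I_1$ is the Fisher information per copy of $W$, and then multiply by $T$ and let $S\to\infty$.

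First I would compute $\nabla_\beta \theta(\beta)$ by invoking Lemma~\ref{lem:deriv} and substituting the specified scores from \eqref{eq:specify}. Since $\dot{w}(\cdot\mid x;\beta)\equiv 0$ and $\zeta(x;\beta)\equiv 0$, only the $s^1,s^0,\widetilde{\zeta}$ contributions survive. The identity $\mathbb{E}_{Y(a)\sim r^a}[Y(a)(Y(a)-\mu_0(a)(x))\mid X=x]=\sigma_0^2(a)(x)$ collapses the observation-data contribution, and $\mathbb{E}_{X\sim q}[\theta(X)(\theta(X)-\theta(\beta))]=\mathrm{Var}_{X\sim q}(\theta(X))$ collapses the evaluation-data contribution. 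The net result is $\nabla_\beta\theta(\beta)$ expressed as a $\tfrac{1}{T}$-weighted observation piece plus a $\tfrac{1}{S}$-weighted evaluation piece.

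Next I would compute $I_1=\mathbb{E}_W[S(W;\beta)^2]$. Three facts make this tractable: (i) every per-unit score is mean-zero under the DGP, so cross-term expectations among the $T+S$ summands all vanish; (ii) within each observation term $\mathbbm{1}[A=1]\mathbbm{1}[A=0]=0$ eliminates the cross-treatment interaction; (iii) the conditional-variance identity reduces $\mathbb{E}[\mathbbm{1}[A=a]\,s^a(Y\mid X)^2]$ to an integral of $\sigma_0^2(a)(x)/w(a\mid x)$ (times the importance-weight factor embedded in the efficient score) against $p(x)$. The resulting $I_1$ splits cleanly into $T\cdot T^{-2}\,(\text{obs piece})+S\cdot S^{-2}\,(\text{eval piece})$. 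Because the scores in \eqref{eq:specify} are chosen proportional to the efficient influence function of $\theta_0$ (the importance-weighted AIPW influence function), the observation and evaluation pieces appearing in $\nabla_\beta\theta(\beta)$ and in $I_1$ coincide. Hence $\mathrm{CR}(\theta)=(\nabla_\beta\theta)^2/I_1=\nabla_\beta\theta$, and multiplying by $T$ and sending $S\to\infty$ annihilates the $1/S$ evaluation contribution, leaving $\mathbb{E}_{X\sim q}\big[\big(\sigma_0^2(1)(X)/w(1\mid X)+\sigma_0^2(0)(X)/w(0\mid X)\big)\,q(X)/p(X)\big]$, which by Theorem~\ref{thm:lower} is $\tau(w,p)=\tau^*(\omega,p)$.

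The main obstacle is the two-measure bookkeeping: $I_1$ integrates against the observation law $p(x)$, whereas $\nabla_\beta\theta(\beta)$ integrates against the target law $q(x)$, so the importance weight $q(x)/p(x)$ must be tracked carefully through the efficient score for the ratio $(\nabla_\beta\theta)^2/I_1$ to collapse exactly to $\tau(w,p)$ rather than to something like $V_q^2/V_p$ with mismatched measures. Verifying that the $1/T$ and $1/S$ normalizations in \eqref{eq:specify} are precisely the ones for which multiplication by $T$ and the limit $S\to\infty$ recover $\tau^*(\omega,p)$ with no stray scalar factor is the key algebraic step.
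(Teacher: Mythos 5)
Your proposal is correct and follows essentially the same route as the paper: compute $C=\nabla_\beta\theta(\beta)$ via Lemma~\ref{lem:deriv}, compute the Fisher information $D$ of the joint likelihood, observe that under \eqref{eq:specify} the aggregate score is exactly the efficient influence function so that $C=D$ and $\mathrm{CR}(\theta)=C^2/D=C$, then multiply by $T$ and let the $1/S$ evaluation term vanish. Your insistence on tracking the importance weight $q(x)/p(x)$ through the scores is in fact a point where you are more careful than the paper's own write-up, whose displayed specification of $s^1,s^0$ omits that factor even though it is needed for the bound to match $\tau(w,p)=\mathbb{E}_{X\sim q}\left[\left(\tfrac{\sigma^2_0(1)(X)}{w(1\mid X)}+\tfrac{\sigma^2_0(0)(X)}{w(0\mid X)}\right)\tfrac{q(X)}{p(X)}\right]$ rather than the unweighted expectation under $p$.
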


\begin{proof}[Proof of Theorem~\ref{thm:tabular}]
We defined the ATE as
\begin{align*}
    &\theta(\beta) \coloneqq \int \Big(y(1) - y(0)\Big) r^1\big(y(1) \mid \widetilde{x}; \beta\big)r^0\big( y(0) \mid \widetilde{x}; \beta\big)q(\widetilde{x}; \beta)\mathrm{d}\mu\big(y(1), y(0), \widetilde{x}\big).
\end{align*}
where $\mu$ is a baseline measure such as Lebesgue or a counting measure. The Cram\'{e}r-Rao lower bound for parametric models defined in \eqref{eq:specify} is given as follows:
\begin{align*}
    T\mathrm{CR}\big(\theta\big) = TC D^{-1}C^{\top}, \label{eq:cracra}
\end{align*}
where
\begin{align*}
   C &\coloneqq \mathbb{E}_{W}\Big[Y(1)s^1(Y(1)\mid X; \beta_{0}) - Y(0)s^0(Y(0)\mid X; \beta_{0}) +  \theta(\widetilde{X}; \beta_{0})\widetilde{\zeta}(\widetilde{X}; \beta_{0})\Big],  \nonumber\\
   D &\coloneqq \mathbb{E}_{W}\left[\left(\nabla_{\beta}\left(\sum^T_{t=1}\log \big(f(Y_t, A_t, X_t; \beta_0)\big) + \sum^S_{s=1}\log \big(q(\widetilde{X}_s; \beta_0)\right)\right)^2\right].\nonumber
\end{align*}
Here, we used Lemma~\ref{lem:deriv}. 

Under the specification in \eqref{eq:specify}, we have
\begin{align*}
    &T\mathrm{CR}\big(\theta\big)\\
    &= T\mathbb{E}_W\left[\frac{1}{T}\left(Y(1)\frac{\big(Y(1) - \mathbb{E}[Y(1)\mid X]\big)}{w(1\mid X)} + Y(0)\frac{\big(Y(0) - \mathbb{E}[Y(0)\mid X]\big)}{w(0\mid X)}\right) +  \frac{1}{S}\theta(\widetilde{X}; \beta_{0})\left(\theta(\widetilde{X}; \beta_{0}) - \theta(\beta_{0})\right)\right]\\
    &\ \ \ \times \mathbb{E}_W\left[T\left(\nabla_{\beta}\log \big(f(Y, A, X; \beta_0)\big)\right)^2 + S\left(\nabla_{\beta}\log \big(q(\widetilde{X}; \beta_0)\right)^2\right]^{-1}\\
    &= \mathbb{E}_W\left[\left(\frac{\big(Y(1) - \mathbb{E}[Y(1)\mid X]\big)^2}{w(1\mid X)} + \frac{\big(Y(0) - \mathbb{E}[Y(0)\mid X]\big)^2}{w(0\mid X)}\right) + \frac{T}{S}\left( \theta^2(\widetilde{X}; \beta_{0})\right)\right].
\end{align*}

Letting $S\to \infty$, we have 
\begin{align*}
    T\mathrm{CR}\big(\theta\big) = \mathbb{E}_W\left[\left(\frac{\big(Y(1) - \mathbb{E}[Y(1)\mid X]\big)^2}{w(1\mid X)} + \frac{\big(Y(0) - \mathbb{E}[Y(0)\mid X]\big)^2}{w(0\mid X)}\right) + \frac{T}{S}\left( \theta^2(\widetilde{X}; \beta_{0})\right)\right]= \tau^*(\omega, p).
\end{align*}
\end{proof}

\subsection{Derivation of the Semiparametric Efficiency Bound}
We extend the Cram\'{e}r-Rao lower for a case with semiparametric models.
As we did in Appendix~\ref{appdx:cramer}, we regard the whole $T + S$ data at hand as one observation and consider the case where we observe $m$ observations. Then, as $m$ goes to infinity, the total sample size $T':=Tm$ goes to infinity. Since we consider a case where $T + S$ samples are repeatedly observed from the same distribution independently, we can employ the conventional approach for deriving the semiparametric efficiency bound. Therefore, to employ this strategy, only in this section, we regard $T'$ as the total ``hypothetical'' sample size in the asymptotic analysis regarding $m\to \infty$ while fixing $T$. 

We explain the definition of the semiparametric efficient influence function. This is a function for one observation \[O=\Big(X_1,\cdots,X_{T},A_1,\cdots,A_{T},Y_1,\cdots,Y_{T},\widetilde{X}_1,\cdots,\widetilde{X}_{S}\Big).\]

The tangent set for this parametric submodel at $\beta = \beta_0$ is given as
\begin{align*}
&\mathcal{T} \coloneqq \left\{\sum^T_{t=1}\left(\sum_{d\in\{1, 0\}} \mathbbm{1}[A_t=d] \left(s^d(Y_t\mid X_t; \beta_0) + \frac{\dot{w}(d\mid X_t;\beta_0)}{w(d\mid X_t; \beta_0)}\right) + \zeta(X_t; \beta_0) \right) + \sum^S_{s=1}\widetilde{\zeta}\left(\widetilde{X}_s; \beta_0\right)  \in L_2(W)\right\},
\end{align*}
where we have
\begin{align*}
    &\int s^d(y\mid x; \beta_0) r^d(y\mid x; \beta_0) \mathrm{d}y = 0\qquad \forall x \in\mathcal{X},\\
    &\dot{w}(1\mid x;\beta_0) + \dot{w}(0\mid x;\beta_0) = 0\qquad \forall x \in\mathcal{X},\\
    &\int \zeta(x; \beta_0) p(x; \beta_0)\mathrm{d}x = 0,\\
    &\int \widetilde{\zeta}(\widetilde{x}; \beta_0) p(\widetilde{x}; \beta_0)\mathrm{d}\widetilde{x} = 0.
\end{align*}

Here, $\theta(\beta)$ is pathwise differentiable at $\beta_0$ if there exists a function $\psi(W) \in \mathcal{T}$ such that $\mathbb{E}[\psi^\top(W)\psi(W)] < \infty$ and for all regular parametric submodels
\begin{align}
\label{eq:riesz}
    \frac{\partial \theta(\beta_0)}{\partial \beta} \coloneqq \mathbb{E}_{W}\Big[\psi(W)S(W; \beta_0)\Big].
\end{align}

We refer to $\psi(W)$ as the semiparametric efficient influence function. As the following proposition, the semiparametric efficient influence function gives a lower bound for the asymptotic variance of regular estimators. 

\begin{proposition}[From Theorem~25.20 in \citet{Vaart1998}.]
The semiparametric efficient influence function $\psi(W)$ is the gradient of $\theta$  w.r.t the model $\mathcal{M}$, which has the smallest $L_2$-norm. It satisfies that for any regular estimator $\widehat{\theta}$ of $\theta_0$ regarding a given parametric submodel, $\mathrm{AMSE}\big(\widehat{\theta}\big)\geq \mathrm{Var}\big(\psi(W)\big)$, where $\mathrm{AMSE}\big(\widehat{\theta}\big)$ is the second moment of the limiting distribution of $\sqrt{m}\left(\widehat{\theta} - \theta_0\right)$. 
\end{proposition}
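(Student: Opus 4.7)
The plan is to invoke the H\'{a}jek--Le Cam convolution theorem for semiparametric models, which is essentially Theorem~25.20 of \citet{Vaart1998}; my proposal is therefore a sketch of how the cited machinery applies to the tangent set $\mathcal{T}$ and parameter functional $\theta(\beta)$ constructed earlier in the appendix. First, I would verify that the efficient influence function $\psi(W)$ is well-defined. Pathwise differentiability at $\beta_0$, established in Lemma~\ref{lem:deriv}, gives the representation \eqref{eq:riesz} of $\partial\theta(\beta_0)/\partial\beta$ as a bounded linear functional on $\mathcal{T}$; since $\mathcal{T}$ is a linear subspace of $L_2(W)$, its closure $\overline{\mathcal{T}}$ is a Hilbert space, so by the Riesz representation theorem there exists a unique $\psi \in \overline{\mathcal{T}}$ such that $\partial\theta(\beta_0)/\partial\beta = \mathbb{E}_W[\psi(W) S(W;\beta_0)]$ for every $S \in \mathcal{T}$.

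Second, I would establish the minimum-norm property. If $\phi \in L_2(W)$ is any other gradient, i.e.\ $\mathbb{E}_W[\phi(W) S(W;\beta_0)] = \partial\theta(\beta_0)/\partial\beta$ for every $S \in \mathcal{T}$, then $\phi - \psi$ is orthogonal to $\mathcal{T}$. Since $\psi \in \overline{\mathcal{T}}$, Pythagoras yields $\|\phi\|_{L_2}^2 = \|\psi\|_{L_2}^2 + \|\phi - \psi\|_{L_2}^2 \geq \|\psi\|_{L_2}^2$, which is the claimed minimum-norm characterization of the semiparametric efficient influence function.

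Third, I would derive the lower bound on regular estimators. For each one-dimensional regular submodel with score $S \in \mathcal{T}$, the classical parametric Cram\'{e}r--Rao inequality gives that the asymptotic variance of $\sqrt{m}(\widehat{\theta} - \theta(\beta))$ is at least $\big(\partial\theta(\beta_0)/\partial\beta\big)^2 / \mathbb{E}_W[S^2] = \langle \psi, S\rangle_{L_2}^2 / \|S\|_{L_2}^2$. Taking the supremum over nonzero $S \in \mathcal{T}$ and applying Cauchy--Schwarz with equality along the direction $S \propto \psi$, the supremum equals $\|\psi\|_{L_2}^2 = \mathrm{Var}(\psi(W))$, so $\mathrm{AMSE}(\widehat{\theta}) \geq \mathrm{Var}(\psi(W))$. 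Equivalently, the H\'{a}jek convolution theorem represents the limit law of $\sqrt{m}(\widehat{\theta} - \theta_0)$ as the convolution of $\mathcal{N}(0, \mathrm{Var}(\psi(W)))$ with an independent distribution, from which the variance lower bound is immediate.

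The main obstacle is making the ``supremum over regular submodels'' step rigorous: this requires local asymptotic normality (LAN) along each one-dimensional submodel together with the H\'{a}jek--Le Cam asymptotic minimax theorem, which constitute the technical core of semiparametric efficiency theory. Since these prerequisites can be imported wholesale from \citet{Vaart1998}, the proof reduces to checking that the tangent set $\mathcal{T}$ and gradient $\psi$ constructed in Appendix~\ref{appdx:submodel} and Lemma~\ref{lem:deriv} satisfy the hypotheses of that framework, which follows from the regularity assumptions on the parametric submodels \eqref{eq:param_sub} together with the boundedness conditions in Assumption~\ref{asm:dist}.
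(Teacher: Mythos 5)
The paper does not supply its own proof of this proposition; it is imported verbatim from Theorem~25.20 of \citet{Vaart1998}, and the appendix only verifies the hypotheses (pathwise differentiability via Lemma~\ref{lem:deriv} and membership $\psi\in\mathcal{T}$) for the specific tangent set at hand. Your sketch --- Riesz representation on the closed tangent space, the Pythagorean minimum-norm argument, and the supremum of one-dimensional local bounds justified by LAN and the H\'{a}jek--Le Cam convolution/minimax theorems --- is a faithful reconstruction of exactly that textbook argument, so it is correct and consistent with what the paper relies on.
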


This states that $\mathrm{Var}\big(\psi(W)\big)$ is a lower bound in estimating $\theta_0$. We call $T\mathrm{Var}\big(\psi(W)\big)$ the semiparametric efficiency bound because what we want to consider is a lower bound of $\sqrt{T}\left(\widehat{\theta}-\theta_0\right)$. 

Lastly, we derive the semiparametric efficient influence function in the following lemma.
\begin{lemma}
\label{lem:basic_bound}
Under the parametric submodel defined in \eqref{eq:param_sub}, we have
\begin{align*}
    &\psi(W) = \frac{1}{T}\sum^T_{t=1}\left(\frac{\mathbbm{1}[A_t = 1]\big( Y_t - \mathbb{E}[Y_t(1)\mid X_t]\big)}{w(1\mid X_t)} - \frac{\mathbbm{1}[A_t = 0]\big( Y_t - \mathbb{E}[Y_t(0)\mid X]\big)}{w(0\mid X_t)}\right)\frac{q(X_t)}{p(X_t)} + \frac{1}{S}\sum^S_{s=1}\left(\theta_0(\widetilde{X}_s) - \theta_0\right).
\end{align*}
The (scaled) efficiency bound $T\mathrm{Var}\big(\psi(W)\big)$ is 
\[\mathbb{E}_{X\sim p(x)}\left[\left(\frac{\sigma^2_0(1)(X)}{w(1\mid X)} + \frac{\sigma^2_0(0)(X)}{w(0\mid X)}\right)\frac{q^2(X)}{p^2(X)}\right]+\frac{T}{S}\mathrm{Var}_{\widetilde{X}\sim q(x)}\big(\theta(\widetilde{X})\big).\]
\end{lemma}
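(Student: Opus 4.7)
The plan is the standard four-step recipe for computing a semiparametric efficient influence function: (1) guess the candidate $\psi(W)$; (2) check $\psi(W)\in\mathcal{T}$; (3) verify the pathwise differentiability identity \eqref{eq:riesz} against Lemma~\ref{lem:deriv}; and (4) evaluate $\mathrm{Var}(\psi(W))$. The guess is motivated by combining the classical AIPW influence function for ATEs with the covariate-shift importance weight $q(X_t)/p(X_t)$ (as in \citet{Uehara2020}), averaged over the $T$ observation-data units, and augmented with a mean-zero direct contribution $\theta_0(\widetilde X_s)-\theta_0$ averaged over the $S$ evaluation-data units.

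\textbf{Steps (1)--(2).} Writing $A_t := \bigl(\tfrac{\mathbbm{1}[A_t=1](Y_t-\mu_0(1)(X_t))}{w(1\mid X_t)}-\tfrac{\mathbbm{1}[A_t=0](Y_t-\mu_0(0)(X_t))}{w(0\mid X_t)}\bigr)\tfrac{q(X_t)}{p(X_t)}$ and $B_s := \theta_0(\widetilde X_s)-\theta_0$, one decomposes $A_t$ on the events $\{A_t=d\}$ so that it fits the template of Appendix~\ref{appdx:submodel} with $\dot w \equiv 0$ and $\zeta \equiv 0$; the zero-mean identity $\int (y-\mu_0(d)(x))r^d(y\mid x;\beta_0)\,dy = 0$ confirms the required moment restriction on the $s^d$ slot. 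Likewise $B_s$ plays the role of $\widetilde\zeta(\widetilde X_s;\beta_0)$ and integrates to zero against $q$. Hence $\psi(W)\in\mathcal{T}$.

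\textbf{Step (3).} I would expand $\mathbb{E}[\psi(W)S(W;\beta_0)]$. Independence of $\mathcal{D}_T$ and $\widetilde{\mathcal{D}}_S$ kills all observation-evaluation cross products (each factor has zero mean), and i.i.d.-ness across $t$ and $s$ together with $\mathbb{E}[A_t]=\mathbb{E}[B_s]=0$ kills all off-diagonal within-data-type products; the $1/T$ and $1/S$ normalizations exactly cancel the resulting $T$ and $S$ diagonal counts. What remains is $\mathbb{E}[A_1 a_1]+\mathbb{E}[B_1\widetilde\zeta(\widetilde X_1;\beta_0)]$, where $a_1$ denotes the single-observation score. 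Conditioning on $X_1$ and using $\mathbb{E}[\mathbbm{1}[A_1=d]g(Y_1)\mid X_1]=w(d\mid X_1)\mathbb{E}[g(Y(d))\mid X_1]$ together with $\mathbb{E}[(Y(d)-\mu_0(d))\mid X]=0$ collapses the $\dot w/w$ and $\zeta$ contributions to zero, while the weight $q/p$ converts the outer expectation from $p$ to $q$. Matching with Lemma~\ref{lem:deriv} then recovers $\partial\theta(\beta_0)/\partial\beta$.

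\textbf{Step (4) and the main obstacle.} By the independence structure used above, $\mathrm{Var}(\psi(W))=\tfrac1T\mathrm{Var}(A_1)+\tfrac1S\mathrm{Var}(B_1)$. For $A_1$, the orthogonality $\mathbbm{1}[A=1]\mathbbm{1}[A=0]=0$ eliminates the squared cross term, and conditioning on $X_1$ together with $\mathbb{E}[\mathbbm{1}[A=d](Y-\mu_0(d))^2\mid X]=w(d\mid X)\sigma^2_0(d)(X)$ yields the conditional second moment $\bigl(\tfrac{\sigma^2_0(1)(X)}{w(1\mid X)}+\tfrac{\sigma^2_0(0)(X)}{w(0\mid X)}\bigr)\tfrac{q^2(X)}{p^2(X)}$, whose expectation under $X\sim p$ supplies the first claimed term; $\mathrm{Var}(B_1)=\mathrm{Var}_{\widetilde X\sim q}(\theta(\widetilde X))$ is immediate. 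Scaling by $T$ produces the stated bound. The main obstacle is the orthogonality bookkeeping in step (3): one must carefully verify that the nuisance score directions $\dot w(d\mid x;\beta_0)/w(d\mid x;\beta_0)$ and $\zeta(x;\beta_0)$ contribute nothing to $\mathbb{E}[\psi(W)S(W;\beta_0)]$, which is the Neyman-orthogonality property that both places $\psi$ in $\mathcal{T}$ and certifies it as \emph{efficient} rather than merely an influence function.
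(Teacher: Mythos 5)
Your proposal is correct and follows essentially the same guess-and-verify route as the paper: check $\mathbb{E}[\psi^2(W)]<\infty$ and tangent-set membership with $\dot w\equiv 0$, $\zeta\equiv 0$, verify the pathwise-differentiability identity \eqref{eq:riesz} against Lemma~\ref{lem:deriv} using the mean-zero score identities and the $q/p$ change of measure, and compute the variance via $\mathbbm{1}[A=1]\mathbbm{1}[A=0]=0$ and the independence of $\mathcal{D}_T$ and $\widetilde{\mathcal{D}}_S$. The only (cosmetic) issue is that you overload $A_t$ for both the treatment indicator and the AIPW summand; otherwise the bookkeeping matches the paper's.
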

From this lemma, as $S\to \infty$, we obtain $\tau^*$ in Theorem~\ref{thm:lower}.

\subsection{Proof of \cref{lem:basic_bound}}

Following existing studies, such as \citet{hahn1998role}, \citet{narita2019counterfactual}, and \citet{Uehara2020}, we consider a nonparametric model for the distribution of potential outcomes and define regular subparametric models. Then, we calculate a gradient (a candidate of the semiparametric efficient influence function) of the target function $\theta_0$, characterize the tangent set for all regular parametric submodels, verify that the parameter of interest is pathwise differentiable, verify that the
efficient influence function lies in the tangent space and calculates the expected square of the influence function.

As discussed in Appendices~\ref{appdx:submodel}, we define parametric submodels as
\begin{align*}
  p(W;\beta)&=\prod^T_{t=1}\prod_{d\in\{1, 0\}}\left(r^d(Y_t\mid X_t; \beta) w(d\mid X_t; \beta)\right)^{A_t = d}p(X_t; \beta)\prod^S_{s=1} q(\widetilde{X}_s;\beta),\\
  W&=\Big\{\{Y_t, A_t, X_t\}_{t=1}^{T}, \{\widetilde{X}_s\}_{s=1}^{S}\Big\}.
\end{align*}

Then, recall that in Appendix~\ref{appdx:submodel}, we define the score function as
\begin{align*}
    &S(W;\beta) \coloneqq \frac{\mathrm{d}}{\mathrm{d}\beta} \left(\sum^T_{t=1}\log \left(f(Y_t, A_t, X_t; \beta)\right) + \sum^S_{s=1}\log\big(q(\widetilde{X}_s; \beta)\big)\right)\\
    &= \sum^T_{t=1}\left(\mathbbm{1}[A_t=1] \left(s^1(Y_t\mid X_t; \beta) + \frac{\dot{w}(1\mid X_t;\beta)}{w(1\mid X_t; \beta)}\right) + \mathbbm{1}[A_t=0] \left(s^0(Y_t\mid X_t; \beta) + \frac{\dot{w}(0\mid X_t;\beta)}{w(0\mid X_t; \beta)}\right) + \zeta(X_t; \beta)\right)\\
    &\ \ \ \ \ \ \ + \sum^S_{s=1}\widetilde{\zeta}(\widetilde{X}_s; \beta),
\end{align*}
where recall that $s^d$, $\dot{\omega}$, $\zeta$, and $\widetilde{\zeta}$ are defined as
\begin{align*}
    s^d(y\mid x; \beta) &\coloneqq \frac{\mathrm{d}}{\mathrm{d}\beta} \log r^d(y\mid x; \beta),\\
    \dot{\omega}(d\mid x; \beta) &\coloneqq \frac{\mathrm{d}}{\mathrm{d}\beta} w(d\mid x; \beta),\\
    \zeta(x; \beta) &\coloneqq \frac{\mathrm{d}}{\mathrm{d}\beta} \log p(x; \beta),\\
    \widetilde{\zeta}(\widetilde{x}; \beta) &\coloneqq \frac{\mathrm{d}}{\mathrm{d}\beta} \log q(\widetilde{x}; \beta).
\end{align*}

As shown in Lemma~\ref{lem:deriv}, we have
\begin{align*}
    &\frac{\partial \theta(\beta)}{\partial \beta} = \mathbb{E}_{W}\Big[Y(1)s^1(Y(1)\mid X; \beta) - Y(0)s^0(Y(0)\mid X; \beta) +  \theta(\widetilde{X}; \beta)\widetilde{\zeta}(\widetilde{X}; \beta)\Big],
\end{align*}
where recall that
\begin{align*}
        \theta(X; \beta) &\coloneqq \mathbb{E}_{(Y, A)\sim f(y, a\mid X; \beta)}[Y(1)\mid X] - \mathbb{E}_{(Y, A)\sim f(y, a\mid X; \beta)}[Y(0)\mid X].
\end{align*}

Then, based on the above preparation, we prove \cref{lem:basic_bound} by guess and verify.

\begin{proof}[Proof of \cref{lem:basic_bound}]
Let us denote $\mathbb{E}_{\widetilde{X}\sim q(x), (Y, A)\sim f(y, a\mid X)}$ by $\mathbb{E}_W$.
    We first verify that $\mathbb{E}_W[\psi^2(W)] < \infty$. Here, we have 
    \begin{align*}
        \mathbb{E}_W\left[\left(\frac{\mathbbm{1}[A = 1]\big( Y - \mathbb{E}[Y(1)\mid \widetilde{X}]\big)}{w(1\mid \widetilde{X})} - \frac{\mathbbm{1}[A = 0]\big( Y - \mathbb{E}[Y(0)\mid \widetilde{X}]\big)}{w(0\mid \widetilde{X})}\right)\frac{q(\widetilde{X})}{p(\widetilde{X})}\mid \widetilde{X} \right] = 0.
    \end{align*}
    Therefore, we have
    \begin{align*}
        &\mathbb{E}_W\left[\left(\left(\frac{\mathbbm{1}[A = 1]\big( Y - \mathbb{E}[Y(1)\mid \widetilde{X}]\big)}{w(1\mid \widetilde{X})} - \frac{\mathbbm{1}[A = 0]\big( Y - \mathbb{E}[Y(0)\mid \widetilde{X}]\big)}{w(0\mid \widetilde{X})}\right)\frac{q(\widetilde{X})}{p(\widetilde{X})}\right)\left(\theta_0(\widetilde{X}) - \theta_0\right)\right]\\
        &=\mathbb{E}_W\left[\left(\theta_0(\widetilde{X}) 
 - \theta_0\right) \mathbb{E}_W\left[\left(\frac{\mathbbm{1}[A = 1]\big( Y - \mathbb{E}[Y(1)\mid \widetilde{X}]\big)}{w(1\mid \widetilde{X})} - \frac{\mathbbm{1}[A = 0]\big( Y - \mathbb{E}[Y(0)\mid \widetilde{X}]\big)}{w(0\mid X)}\right)\frac{q(\widetilde{X})}{p(\widetilde{X})}\mid \widetilde{X}\right]\right]\\
        &= 0.
    \end{align*}
     Hence, we have
    \begin{align*}
        &\mathbb{E}_W\left[\psi^2(W)\right]\\
        &= \mathbb{E}_W\Bigg[\Bigg(\frac{1}{T}\sum^T_{t=1}\left(\frac{\mathbbm{1}[A_t = 1]\big( Y_t - \mathbb{E}[Y_t(1)\mid \widetilde{X}_t]\big)}{w(1\mid \widetilde{X}_t)} - \frac{\mathbbm{1}[A_t = 0]\big( Y_t - \mathbb{E}[Y_t(0)\mid \widetilde{X}_t]\big)}{w(0\mid \widetilde{X}_t)}\right)\frac{q(\widetilde{X}_t)}{p(\widetilde{X}_t)}\\
        &\ \ \ \ \ \ \ \ \ \ \ \ \ \ \ \ \ \ \ \ \ \ \ \ \ \ \ \ \ \ \ \ \ \ \ \ \ \ \ \ \ \ \ \ \ \ \ \ \ \ \ \ \ \ \ \ \ \ \ \ \ \ \ \ \ \ \ \ \ \ \ \ \ \ \ \ \ \ \ \ \ \ \ \ \ \ \ \ \ \ \ \ \ \ \ \ \ \ \ \ \ \ \ \ \ \ \ \ \ \ \ \ \ \ \ \ \ \ \ \ \ \ \ \ \ \ + \frac{1}{S}\sum^S_{s=1}\left(\theta_0(\widetilde{X}_s) - \theta_0\right)\Bigg)^2\Bigg]\\
        &= \mathbb{E}_W\left[\frac{1}{T}\left(\frac{\sigma^2_0(1)(\widetilde{X}_t)}{w(1\mid \widetilde{X}_t)} + \frac{\sigma^2_0(0)\widetilde{X}_t)}{w(0\mid \widetilde{X}_t)}\right)\frac{q^2(X_t)}{p^2(\widetilde{X}_t)} + \frac{1}{S}\Big\{\theta_0(\widetilde{X}_s) - \theta_0\Big\}^2\right] < \infty.
    \end{align*}

We next verify that \eqref{eq:riesz} holds ($\frac{\partial \theta(\beta_0)}{\partial \beta} \coloneqq \mathbb{E}_W\Big[\psi(W)S(W; \beta_0)\Big]$) by showing 
\begin{align}
\label{eq:target_lower1}
    &\mathbb{E}_W\Big[\psi(W)S(W; \beta_0)\Big]= \mathbb{E}_W\Big[Y(1)s^1(Y(1)\mid X; \beta_0) - Y(0)s^0(Y(0)\mid X; \beta_0) +  \theta(\widetilde{X}; \beta_0)\widetilde{\zeta}(\widetilde{X}; \beta_0)\Big],
\end{align}
from Lemma~\ref{lem:deriv}.

Under our choice of $\psi(W)$, the LHS of \eqref{eq:target_lower1} becomes 
\begin{align*}
    &\mathbb{E}_W\Big[\psi(W)S(W; \beta)\Big]\\
    &=\mathbb{E}_W\Bigg[\Bigg(\frac{1}{T}\sum^T_{t=1}\left(\frac{\mathbbm{1}[A_t = 1]\big( Y_t - \mathbb{E}[Y_t(1)\mid X]\big)}{w(1\mid X_t)} - \frac{\mathbbm{1}[A_t = 0]\big( Y_t - \mathbb{E}[Y_t(0)\mid X_t]\big)}{w(0\mid X_t)}\right)\frac{q(X_t)}{p(X_t)}\\
    &\ \ \ \ \ \ \ \ \ \ \ \ \ \ \ \ \ \ \ \ \ \ \ \ \ \ \ \ \ \ \ \ \ \ \ \ \ \ \ \ \ \ \ \ \ \ \ \ \ \ \ \ \ \ \ \ \ \ \ \ \ \ \ \ \ \ \ \ \ \ \ \ \ \ \ \ \ \ \ \ \ \ \ \ \ \ \ \ \ \ \ \ \ \ \ \ \ \ \ \ \ \ \ \ \ \ \ \ \ \ \ \ \ \ \ \ \ \ \ \ \ \ \ \ \ \ + \frac{1}{S}\sum^S_{s=1}\Big\{\theta_0(\widetilde{X}_s) - \theta_0\Big\}\Bigg)\\
    &\times \Bigg(\sum^T_{t=1}\left(\mathbbm{1}[A_t=1] \left(s^1(Y_t(1)\mid X_t; \beta) + \frac{\dot{w}(1\mid X;\beta)}{w(1\mid X_t; \beta)}\right) + \mathbbm{1}[A_t=0] \left(s^0(Y_t(0)\mid X_t; \beta) + \frac{\dot{w}(0\mid X_t;\beta)}{w(0\mid X_t; \beta)}\right)\right)\\
    &\ \ \ \ \ \ \ \ \ \ \ \ \ \ \ \ \ \ \ \ \ \ \ \ \ \ \ \ \ \ \ \ \ \ \ \ \ \ \ \ \ \ \ \ \ \ \ \ \ \ \ \ \ \ \ \ \ \ \ \ \ \ \ \ \ \ \ \ \ \ \ \ \ \ \ \ \ \ \ \ \ \ \ \ \ \ \ \ \ \ \ \ \ \ \ \ \ \ \ \ \ \ \ \ \ \ \ \ \ \ \ \ \ \ \ \ \ \ \ \ \ \ \ \ \ \ + \sum^S_{s=1}\widetilde{\zeta}(\widetilde{X}_s; \beta)\Bigg)\Bigg]\\
    &=\mathbb{E}_W\Bigg[\frac{1}{T}\sum^T_{t=1}\frac{\mathbbm{1}[A_t = 1]\big( Y_t - \mathbb{E}[Y_t(1)\mid X_t]\big)}{w(1\mid X_t)}\frac{q(X_t)}{p(X_t)}\left(s^1(Y_t(1)\mid X_t; \beta) + \frac{\dot{w}(1\mid X_t;\beta)}{w(1\mid X_t; \beta)}\right)\\
    &\ \ \ \ \ \ \ \ \ \ \ \ \ \ \ \ \ \ \ \ + \frac{1}{S}\sum^S_{s=1}\theta_0(\widetilde{X}_s)\sum^T_{t=1}\mathbbm{1}[A_t = 1]\left(s^1(Y_t(1)\mid X_t; \beta) + \frac{\dot{w}(1\mid X_t;\beta)}{w(1\mid X_t; \beta)}\right)\Bigg]\\
    &\ \ \ - \mathbb{E}_W\Bigg[\frac{1}{T}\sum^T_{t=1}\frac{\mathbbm{1}[A_t = 0]\big( Y_t - \mathbb{E}[Y_t(0)\mid X]\big)}{w(0\mid X_t)}\frac{q(X)}{p(X_t)}\left(s^0(Y_t(1)\mid X_t; \beta) + \frac{\dot{w}(0\mid X_t;\beta)}{w(0\mid X_t; \beta)}\right)\\
    &\ \ \ \ \ \ \ \ \ \ \ \ \ \ \ \ \ \ \ \ + \frac{1}{S}\sum^S_{s=1}\theta_0(\widetilde{X}_s)\sum^T_{t=1}\mathbbm{1}[A_t = 0]\left(s^0(Y_t(0)\mid X_t; \beta) + \frac{\dot{w}(0\mid X_t;\beta)}{w(0\mid X_t; \beta)}\right)\Bigg]\\
    &\ - \theta_0\mathbb{E}_W\Bigg[\frac{1}{T}\sum^T_{t=1}\left(\mathbbm{1}[A_t=1] \left(s^1(Y_t(1)\mid X_t; \beta) + \frac{\dot{w}(1\mid X_t;\beta)}{w(1\mid X_t; \beta)}\right) + \mathbbm{1}[A_t=0] \left(s^0(Y_t(0)\mid X_t; \beta) + \frac{\dot{w}(0\mid X_t;\beta)}{w(0\mid X_t; \beta)}\right)\right)\\
    &\ \ \ \ \ \ \ \ \ \ \ \ \ \ \ \ \ \ \ \ \ \ \ \ \ \ \ \ \ \ \ \ \ \ \ \ \ \ \ \ \ \ \ \ \ \ \ \ \ \ \ \ \ \ \ \ \ \ \ \ \ \ \ \ \ \ \ \ \ \ \ \ \ \ \ \ \ \ \ \ \ \ \ \ \ \ \ \ \ \ \ \ \ \ \ \ \ \ \ \ \ \ \ \ \ \ \ \ \ \ \ \ \ \ \ \ \ \ \ \ \ \ \ \ \ \ + \sum^S_{s=1}\widetilde{\zeta}(\widetilde{X}_s; \beta)\Bigg]\\
    &\ \ \ + \mathbb{E}_W\Bigg[\left(\frac{1}{T}\sum^T_{t=1}\left(\frac{\mathbbm{1}[A_t = 1]\big( Y_t - \mathbb{E}[Y_t(1)\mid X_t]\big)}{w(1\mid X_t)} - \frac{\mathbbm{1}[A_t = 0]\big( Y_t - \mathbb{E}[Y_t(0)\mid X_t]\big)}{w(0\mid X_t)}\right)\frac{q(X_t)}{p(X_t)}  + \frac{1}{S}\sum^S_{s=1}\theta_0(\widetilde{X}_s)\right)\\
    &\ \ \ \ \ \ \ \ \ \ \ \ \ \ \ \ \ \ \ \ \ \ \ \ \ \ \ \ \ \ \ \ \ \ \ \ \ \ \ \ \ \ \ \ \ \ \ \ \ \ \ \ \ \ \ \ \ \ \ \ \ \ \ \ \ \ \ \ \ \ \ \ \ \ \ \ \ \ \ \ \ \ \ \ \ \ \ \ \ \ \ \ \ \ \ \ \ \ \ \ \ \ \ \ \ \ \ \ \ \ \ \ \ \ \ \ \ \ \ \ \ \ \ \ \ \ \times \sum^S_{s=1}\widetilde{\zeta}(\widetilde{X}_s; \beta)\Bigg],
\end{align*}
where we used $\mathbbm{1}[A_t = 1]\mathbbm{1}[A_t = 0] = 0$. 

Here, it holds that
\begin{align*}
    &\mathbb{E}_W\Bigg[\mathbbm{1}[A = d]\left(s^d(Y(d)\mid X; \beta) + \frac{\dot{w}(d\mid X;\beta)}{w(d\mid X; \beta)}\right)\mid X\Bigg]\\
    &= \mathbb{E}_W\Bigg[\mathbbm{1}[A = d]\mid X\Bigg]\left(s^d(Y(d)\mid X; \beta) + \frac{\dot{w}(d\mid X;\beta)}{w(d\mid X; \beta)}\right)\\
    &= w(d\mid X; \beta)s^d(Y(d)\mid X; \beta) + \dot{w}(d\mid X;\beta)= 0,
\end{align*}
and 
\begin{align*}
    &\mathbb{E}_W\Bigg[\frac{1}{T}\sum^T_{t=1}\left(\frac{\mathbbm{1}[A_t = 1]\big( Y_t - \mathbb{E}[Y_t(1)\mid X_t]\big)}{w(1\mid X_t)} - \frac{\mathbbm{1}[A_t = 0]\big( Y_t - \mathbb{E}[Y_t(0)\mid X_t]\big)}{w(0\mid X_t)}\right)\frac{q(X_t)}{p(X_t)} + \frac{1}{S}\sum^S_{s=1}\left(\theta_0(\widetilde{X}_s) - \theta_0\right)\Bigg]\\
    &= 0.
\end{align*}
Therefore, we have
\begin{align*}
    &\mathbb{E}_W\Big[\psi(W)S(W; \beta_0)\Big]=\mathbb{E}_W\Bigg[Y(1)\frac{q(X)}{p(X)}s^1(Y(1)\mid X; \beta) - Y(0)\frac{q(X)}{p(X)}s^0(Y(1)\mid X; \beta) + \theta_0(\widetilde{X})\zeta(\widetilde{X}; \beta)\Bigg].
\end{align*}

Here, from Lemma~\ref{lem:deriv}, we have
\begin{align*}
    &\mathbb{E}_W\Bigg[Y(1)\frac{q(X)}{p(X)}s^1(Y(1)\mid X; \beta) - Y(0)\frac{q(X)}{p(X)}s^0(Y(1)\mid X; \beta) + \theta_0(\widetilde{X})\zeta(\widetilde{X}; \beta)\Bigg]\\
    &= \mathbb{E}_W\Big[Y(1)s^1(Y(1)\mid X; \beta) - Y(0)s^0(Y(0)\mid X; \beta) +  \theta(\widetilde{X}; \beta)\widetilde{\zeta}(\widetilde{X}; \beta)\Big].
\end{align*}

Lastly, we verify that $\psi \in\mathcal{T}$. We define the functions as
\begin{align*}
    s^d(y\mid x; \beta_0) &= \frac{1}{T}\frac{\big( y - \mathbb{E}[Y(d)\mid X = x]\big)}{w(d\mid x)},\\
    \dot{w}(d\mid X;\beta_0) &= 0,\\
    \zeta(x; \beta_0) &= 0,\\
    \widetilde{\zeta}(\widetilde{x}; \beta_0) &= \frac{1}{S}\Big\{\theta_0(\widetilde{x}) - \theta_0\Big\}.
\end{align*}

Then, $\psi \in \mathcal{T}$. The proof is complete.
\end{proof}

\section{Proof of Theorem~\ref{thm:covariate_prob}}

\begin{proof}
We solve the following constraint optimization problem:

\begin{align*}
\min_{z\in\mathcal{P}}&\ \int \left(\left(\frac{\sigma^2_0(1)(x)}{w(1\mid x)} + \frac{\sigma^2_0(0)(x)}{w(0\mid x)} \right)\frac{q^2(x)}{p(x)} \right)\mathrm{d}x\\
\mathrm{s.t.}&\ \int p(x) dx = 1.
\end{align*}

Then, by using the Lagrangian variables $\alpha\in (-\infty, +\infty)$, we consider its Lagrange functional
\begin{align*}
\mathcal{L}(z; \alpha) =& \int \left(\frac{\sigma^2_0(1)(x)}{w(1\mid x)} + \frac{\sigma^2_0(0)(x)}{w(0\mid x)}\right)\frac{q^2(x)}{p(x)}\mathrm{d}x + \alpha\left(\int p(x) \mathrm{d}x - 1\right).
\end{align*}
Then we consider the infimum of $\mathcal{L}$ with respect to $z$ and maximization of $\mathcal{L}$ with respect to $\alpha$, that is, $\inf_{p\in\mathcal{P}}\sup_{\alpha\in(-\infty, +\infty)}\mathcal{L}(p; \alpha)$. Next, we consider the dual problem defined as $\sup_{\alpha\in(-\infty, +\infty)}\inf_{z\in\mathcal{P}}\mathcal{L}(z; \alpha)$. Owing to the convexity of $\int \left(\frac{\sigma^2_0(1)(x)}{w(1\mid x)} + \frac{\sigma^2_0(0)(x)}{w(0\mid x)}\right)\frac{q^2(x)}{p(x)}dx$ for $p(\cdot)\in\mathbb{R}^+$, the following equality is obtained \citep{rockafellar2015convex}:
\begin{align*}
\inf_{p\in\mathcal{P}}\sup_{\alpha\in(0, +\infty)}\mathcal{L}(p; \alpha) = \sup_{\alpha\in(0, +\infty)}\inf_{p\in\mathcal{P}}\mathcal{L}(p; \alpha).
\end{align*}
Hence, we discuss the solution to the dual problem. Given $\alpha$, we apply the Euler-Lagrange equation for calculating $\inf_{p} \mathcal{L}(p;\alpha)$ \citep{gelfand2000calculus}. The minimizer $p^*$ satisfies the following equation:
\begin{align*}
-\left(\frac{\sigma^2_0(1)(x)}{w(1\mid x)} + \frac{\sigma^2_0(0)(x)}{w(0\mid x)}\right)\frac{q^2(x)}{p^{*2}(x)}  + \alpha = 0.
\end{align*}
Therefore, we obtain the following KKT condition:
\begin{align*}
&-\left(\frac{\sigma^2_0(1)(x)}{w(1\mid x)} + \frac{\sigma^2_0(0)(x)}{w(0\mid x)}\right)\frac{q^2(x)}{p^{*2}(x)}  + \alpha^* = 0,\\
&0 < p^*(x),\\
&\alpha^*\left(\int p(x) dx - 1\right) = 0.
\end{align*}
Then, we obtain the optimal solutions as follows: 
\begin{itemize}
\item if $\alpha^*=0$, we have $-\left(\frac{\sigma^2_0(1)(x)}{w(1\mid x)} + \frac{\sigma^2_0(0)(x)}{w(0\mid x)}\right)\frac{q^2(x)}{p^{*2}(x)} =0$.
In this case, for $0 < p^*(x)$, there is no feasible solution. 
\item if $\alpha^* \neq 0$, we have $-\left(\frac{\sigma^2_0(1)(x)}{w(1\mid x)} + \frac{\sigma^2_0(0)(x)}{w(0\mid x)} + \theta_0(x)\right)\frac{q^2(x)}{p^{*2}(x)} + \theta_0 + \alpha^* = 0$ and $\alpha^*\left(\int p(x) dx - 1\right) = 0$. From these results, we have
\begin{align*}
&p^*(x) = \sqrt{\frac{\frac{\sigma^2_0(1)(x)}{w(1\mid x)} + \frac{\sigma^2_0(0)(x)}{w(0\mid x)} + \theta_0(x)}{\alpha^* - \theta_0}}q(x),\\
&\alpha^* - \theta_0 = \left(\mathbb{E}_q\left[\sqrt{\frac{\sigma^2_0(1)(x)}{w(1\mid x)} + \frac{\sigma^2_0(0)(x)}{w(0\mid x)}+ \theta_0(x)}\right]\right)^2.
\end{align*}
\end{itemize}
Therefore, we have
\begin{align*}
&p^*(x) = \frac{\sqrt{\frac{\sigma^2_0(1)(x)}{w(1\mid x)} + \frac{\sigma^2_0(0)(x)}{w(0\mid x)}}}{\mathbb{E}_q\left[\sqrt{\frac{\sigma^2_0(1)(x)}{w(1\mid x)} + \frac{\sigma^2_0(0)(x)}{w(0\mid x)}}\right]}q(x).
\end{align*}
\end{proof}

\section{Proof of Theorem~\ref{thm:asymp_dist}}
\label{appdx:minimax_opt}
Let $\mathbb{E}$ be an expectation operator over $(Y_t, A_t, X_t)$ and $\widetilde{X}_s$; that is, \[\mathbb{E}_{P_t} = \mathbb{E}_{\widetilde{X}_s \sim q(x), X_t\sim \widehat{p}_t(x), A_t \sim \widehat{w}_t(a\mid X_t), Y_t(A_t) \sim r^{A_t}(y\mid X_t)}.\]

Let us define
\begin{align*}
\xi_t &= \Big(\psi_t(Y_t, A_t, X_t; \widehat{w}_t, \widehat{p}_t) - \theta_0\Big)/\sqrt{\tau^*}\\
&= \left(\left(\frac{\mathbbm{1}[A_t = 1]\big(Y_t - \widehat{\mu}_t(1)(X_t)\big)}{\widehat{w}_t(1\mid X_t)} - \frac{\mathbbm{1}[A_t = 0]\big(Y_t - \widehat{\mu}_t(0)(X_t)\big)}{\widehat{w}_t(0\mid X_t)}\right)\frac{q(X_t)}{\widehat{p}_t(X_t)}+ \mathbb{E}_{\widetilde{X}\sim q(x)}\left[\widehat{\theta}_t(\widetilde{X})\right] - \theta_0\right) / \sqrt{\tau^*},
\end{align*}
where
\begin{align*}
    \tau^* = \mathbb{E}_{X\sim p^*(x)}\left[\left(\frac{\sigma^2_0(1)(X)}{w^*(1\mid X)} + \frac{\sigma^2_0(0)(X)}{w^*(0\mid X)}\right)\frac{q^2(X)}{p^{*2}(X)}\right]. 
\end{align*}

First, the sequence $\{\xi_t\}^T_{t=1}$ is a martingale difference sequence (MDS). 
\begin{lemma}
\label{lem:mds}
    Under the AAS-AIPWIS experiment, $\mathbb{E}_{Y_t(a) \sim r^a(y\mid x), A_t \sim \widehat{w}_t(a\mid x), X_t\sim \widehat{p}_t(x)}[\xi_t|\mathcal{F}_{t-1}] = 0$ holds. 
\end{lemma}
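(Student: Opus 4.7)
The plan is a direct computation by iterated expectation, exploiting the key structural feature of the AAS-AIPWIS experiment: by construction, $\widehat{w}_t$, $\widehat{p}_t$, $\widehat{\mu}_t$, and hence $\widehat{\theta}_t = \widehat{\mu}_t(1) - \widehat{\mu}_t(0)$ are all $\mathcal{F}_{t-1}$-measurable, so conditioning on $\mathcal{F}_{t-1}$ freezes every nuisance quantity. I will peel the conditional expectation in two stages: first condition on $(\mathcal{F}_{t-1}, X_t)$ to integrate out $(A_t, Y_t)$ via the assigned propensity $\widehat{w}_t(\cdot\mid X_t)$ and the outcome density $r^{a}(y\mid X_t)$, and then condition only on $\mathcal{F}_{t-1}$ to integrate out $X_t\sim \widehat{p}_t$.

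At the inner stage, since $A_t\sim \widehat{w}_t(\cdot\mid X_t)$ is independent of the potential outcomes given $X_t$, and $\mathbb{E}[Y_t(a)\mid X_t]=\mu_0(a)(X_t)$, the IPW identity gives
\begin{align*}
\mathbb{E}\!\left[\frac{\mathbbm{1}[A_t=a](Y_t-\widehat{\mu}_t(a)(X_t))}{\widehat{w}_t(a\mid X_t)}\,\Big|\,X_t,\mathcal{F}_{t-1}\right]=\mu_0(a)(X_t)-\widehat{\mu}_t(a)(X_t),
\end{align*}
for $a\in\{1,0\}$. Subtracting the $a=0$ expression from the $a=1$ expression, multiplying by the $\mathcal{F}_{t-1}$-measurable factor $q(X_t)/\widehat{p}_t(X_t)$, and using $\theta_0(x)\coloneqq \mu_0(1)(x)-\mu_0(0)(x)$, the conditional expectation of the bracketed term in $\xi_t$ reduces to $\bigl(\theta_0(X_t)-\widehat{\theta}_t(X_t)\bigr)\,q(X_t)/\widehat{p}_t(X_t)$.

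At the outer stage, integrating against $X_t\sim \widehat{p}_t$ produces an importance-weighted expectation under $q$:
\begin{align*}
\mathbb{E}\!\left[\bigl(\theta_0(X_t)-\widehat{\theta}_t(X_t)\bigr)\frac{q(X_t)}{\widehat{p}_t(X_t)}\,\Big|\,\mathcal{F}_{t-1}\right]=\int \bigl(\theta_0(x)-\widehat{\theta}_t(x)\bigr)\,q(x)\,\mathrm{d}x=\theta_0-\mathbb{E}_{\widetilde{X}\sim q(x)}\!\left[\widehat{\theta}_t(\widetilde{X})\right].
\end{align*}
Adding the deterministic remainder $\mathbb{E}_{\widetilde{X}\sim q(x)}[\widehat{\theta}_t(\widetilde{X})]-\theta_0$ of $\xi_t$ (which is $\mathcal{F}_{t-1}$-measurable and therefore passes through the conditional expectation unchanged), everything cancels and the result is divided by $\sqrt{\tau^*}$ to conclude $\mathbb{E}[\xi_t\mid \mathcal{F}_{t-1}]=0$.

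There is no real obstacle here; the lemma is essentially the double-robustness identity for AIPW combined with the change-of-measure built into the AIPWIW estimator. The only subtlety worth flagging in the write-up is the need to invoke the $\mathcal{F}_{t-1}$-measurability of $\widehat{p}_t$ and $\widehat{w}_t$ explicitly (so that they can be pulled outside the inner expectation and then cancelled by the importance weight), together with Assumption~\ref{asm:dist} and the bounded-away-from-zero nature of $\widehat{w}_t$ (guaranteed by the thresholding of $\widehat{\sigma}^2_t$ between $\underline{C}$ and $\overline{C}$) to ensure that every term is integrable and Fubini applies.
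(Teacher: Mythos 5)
Your proposal is correct and follows essentially the same route as the paper: condition on $(X_t,\mathcal{F}_{t-1})$ to reduce each IPW term to $\mu_0(a)(X_t)-\widehat{\mu}_t(a)(X_t)$, then integrate over $X_t\sim\widehat{p}_t$ so that the importance weight $q/\widehat{p}_t$ converts the bias term into $\theta_0-\mathbb{E}_{\widetilde{X}\sim q(x)}[\widehat{\theta}_t(\widetilde{X})]$, which cancels with the augmentation term. If anything, your write-up of the outer-stage cancellation is more explicit than the paper's, and your remarks on $\mathcal{F}_{t-1}$-measurability and integrability are appropriate supporting detail rather than a deviation.
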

\begin{proof}[Proof of Lemma~\ref{lem:mds}]
    For each $a\in\{1, 0\}$, we have
\begin{align*}
    &\mathbb{E}_{P_t}\left[\frac{\mathbbm{1}[A_t = a]\big(Y_t - \widehat{\mu}_t(a)(X_t)\big)}{\widehat{w}_t(a\mid X_t)}\mid X_t, \mathcal{F}_{t-1}\right]\\
    &=\frac{\mathbb{E}_{P_t}\left[\mathbbm{1}[A_t = a]\big(Y_{t}(a)- \widehat{\mu}_{t}(a)(X_t)\big)\mid X_t, \mathcal{F}_{t-1}\right]}{\widehat{w}_t(a\mid X_t)}\\
    &=\frac{\widehat{w}_t(a| X_t)\big(\mu(a)(X_t) - \widehat{\mu}_{t}(a)(X_t)\big)}{\widehat{w}_t(a| X_t)}\\
    &= \mu(a)(X_t) - \widehat{\mu}_{t}(a)(X_t).
\end{align*}
Then, we have
\begin{align*}
    &\mathbb{E}_{P_t}\left[\xi_t\mid X_t, \mathcal{F}_{t-1}\right]\\
    &= \mathbb{E}_{P_t}\left[\left(\psi_t(Y_t, A_t, X_t; \widehat{w}_t, \widehat{p}_t) - \theta_0 \right)/\sqrt{\tau^*}\mid X_t, \mathcal{F}_{t-1}\right]\\
    &= \left(\left(\mu(1)(X_t) - \mu(0)(X_t) - \widehat{\mu}_{t}(1)(X_t) + \widehat{\mu}_{t}(0)(X_t)\right)\frac{q(X_t)}{\widehat{p}_t(X_t)}  + \mathbb{E}_{\widetilde{X}\sim q(x)}\left[\widehat{\theta}_t(\widetilde{X})\right]- \theta_0 \right)/\sqrt{\tau^*}.
\end{align*}
Note that $\mathbb{E}_{P_t}\left[\widehat{\mu}_{t}(1)(X_t) - \widehat{\mu}_{t}(0)\mid \mathcal{F}_{t-1}\right] - \mathbb{E}_{P_t}\left[\widehat{\theta}_t(\widetilde{X})\right] = 0$. 
Therefore, 
\begin{align*}
    &\mathbb{E}_{P_t}\left[\xi_t\mid \mathcal{F}_{t-1}\right] = 0.
\end{align*}
\end{proof}

Then, we show Theorem~\ref{thm:asymp_dist} by using the martingale central limit theorem (CLT), which is given as follows.

\begin{proposition}[Central Limit Theorem for an MDS; from Proposition~7.9, p.~194, \citet{Hamilton1994}; also see \citet{White1994}]
\label{prp:marclt} 
Let $\{W_t\}^\infty_{t=1}$ be a scalar MDS with some random variable $W_t$. Let $\overline{W} = \frac{1}{T}\sum^T_{t=1}W_t$. Suppose that 
\begin{description}
\item[(a)] $\mathbb{E}[W^2_t] = \sigma^2_t$, a positive value with $\frac{1}{T}\sum^T_{t=1}\sigma^2_t\to\sigma^2$, a positive value; 
\item[(b)] $\mathbb{E}[|W_t|^r] < \infty$ for some $r>2$ and all $t\in\mathbb{N}$;
\item[(c)] $\frac{1}{T}\sum^{T}_{t=1}W^2_t\xrightarrow{p}\sigma^2$. 
\end{description}
Then $\sqrt{T}\overline{W}\xrightarrow{d}\mathcal{N}(\bm{0}, \sigma^2)$.
\end{proposition}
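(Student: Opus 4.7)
The plan is to derive this martingale CLT via McLeish's (1974) characteristic-function method, following the exposition in Hall and Heyde (1980, Ch.~3). Introduce the triangular array $Y_{T,t} := W_t/\sqrt{T}$, so that $\{Y_{T,t}\}_{t=1}^{T}$ remains a martingale difference sequence with respect to the filtration generated by $W_1,\dots,W_t$, and $\sum_{t=1}^{T} Y_{T,t} = \sqrt{T}\,\overline{W}$. The goal reduces to showing $\sum_{t=1}^T Y_{T,t} \xrightarrow{d} \mathcal{N}(0,\sigma^2)$ by verifying (I) uniform asymptotic negligibility, $\max_{t\le T} |Y_{T,t}| \xrightarrow{p} 0$, and (II) the quadratic-variation limit, $\sum_{t=1}^T Y_{T,t}^2 \xrightarrow{p} \sigma^2$.

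Condition (II) is exactly hypothesis (c) after dividing by $T$. Condition (I) follows from (b) by a Lyapunov-type union bound: for any $\epsilon > 0$,
\begin{align*}
\mathbb{P}\!\bigl(\max_{t\le T}|Y_{T,t}| > \epsilon\bigr) \le \sum_{t=1}^{T}\frac{\mathbb{E}|W_t|^r}{\epsilon^r T^{r/2}} = O(T^{1-r/2}) \to 0,
\end{align*}
using $r>2$ together with the (standard) uniform reading of (b), $\sup_t \mathbb{E}|W_t|^r < \infty$. Condition (a) is what guarantees the limit variance $\sigma^2$ is strictly positive, so the normal law is nondegenerate.

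With (I) and (II) established, I would apply McLeish's algebraic decomposition: for $|\lambda y|$ bounded,
\begin{align*}
e^{i\lambda y} = (1+i\lambda y)\exp\!\Bigl(-\tfrac{\lambda^2}{2}y^2 + r(y)\Bigr), \qquad |r(y)| \le C|\lambda y|^3.
\end{align*}
Taking the product over $t=1,\dots,T$ and writing $S_T := \sum_t Y_{T,t}$,
\begin{align*}
e^{i\lambda S_T} = M_T(\lambda)\,\exp\!\Bigl(-\tfrac{\lambda^2}{2}\sum_{t=1}^{T}Y_{T,t}^2 + R_T\Bigr),
\end{align*}
where $M_T(\lambda) := \prod_{t=1}^T(1+i\lambda Y_{T,t})$ is a complex martingale with $\mathbb{E}[M_T(\lambda)] = 1$ by the MDS property, and $|R_T| \le C|\lambda|^3 \max_{t}|Y_{T,t}|\cdot\sum_t Y_{T,t}^2 \xrightarrow{p} 0$ by (I) and (II). Taking expectations and passing to the limit, the exponential factor converges in probability to the deterministic constant $e^{-\lambda^2\sigma^2/2}$; combined with $L^2$-boundedness of $M_T(\lambda)$, dominated convergence then yields $\mathbb{E}[\exp(i\lambda\sqrt{T}\,\overline{W})] \to \exp(-\lambda^2\sigma^2/2)$ for every $\lambda\in\mathbb{R}$. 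L\'evy's continuity theorem delivers the asserted weak convergence.

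The main obstacle is rigorously separating $M_T(\lambda)$ from the exponential factor inside a single expectation, since the two are not independent. The standard resolution exploits $\mathbb{E}|M_T(\lambda)|^2 = \mathbb{E}\prod_t(1+\lambda^2 Y_{T,t}^2) \le \mathbb{E}\exp\!\bigl(\lambda^2\sum_t Y_{T,t}^2\bigr)$, which is uniformly bounded once (II) holds (and the right-hand side concentrates near $e^{\lambda^2\sigma^2}$); this gives uniform integrability for $M_T(\lambda)$. Combined with convergence in probability of the exponential factor, one can apply a bounded-convergence argument. A secondary subtlety is that condition (b) is needed with $r>2$ strictly: the Lyapunov slack is essential both for (I) and to prevent mass escaping to infinity in the dominated-convergence step.
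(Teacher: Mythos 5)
The paper does not prove this proposition at all: it is imported verbatim from \citet{Hamilton1994} (Proposition~7.9, attributed to White) and used as a black box in the proof of Theorem~\ref{thm:asymp_dist}. Your proposal is therefore a reconstruction of the textbook argument rather than an alternative to anything in the paper, and the route you choose --- McLeish's characteristic-function method with the factorization $e^{i\lambda S_T} = M_T(\lambda)\exp(-\tfrac{\lambda^2}{2}\sum_t Y_{T,t}^2 + R_T)$ --- is indeed the standard proof of this result (Hall and Heyde, Theorem~3.2). Your verification of asymptotic negligibility from (b) via the Lyapunov union bound is correct, provided (b) is read uniformly as $\sup_t \mathbb{E}|W_t|^r \le M < \infty$, which you rightly flag as the intended reading (it is how White states the condition; the literal ``finite for each $t$'' version is too weak).

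There is, however, one genuine gap in the step you yourself identify as the main obstacle. You claim that $\mathbb{E}|M_T(\lambda)|^2 = \mathbb{E}\prod_t(1+\lambda^2 Y_{T,t}^2) \le \mathbb{E}\exp\bigl(\lambda^2\sum_t Y_{T,t}^2\bigr)$ ``is uniformly bounded once (II) holds.'' It is not: (II) is convergence in probability of $\sum_t Y_{T,t}^2$ to $\sigma^2$, which gives no control whatsoever over $\mathbb{E}\exp(\lambda^2\sum_t Y_{T,t}^2)$ --- under hypothesis (b) with $r$ only slightly above $2$, the quadratic variation has bounded $r/2$-th moments but need not have any exponential moments, so the right-hand side can be infinite for every $T$. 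The standard repair, which is the essential device in McLeish's original proof, is to truncate the array via a stopping time: replace $Y_{T,t}$ by $\tilde{Y}_{T,t} = Y_{T,t}\,\mathbbm{1}\bigl[\sum_{s<t}Y_{T,s}^2 \le 2\sigma^2\bigr]$, which preserves the MDS property, coincides with the original array on an event of probability tending to one by (II), and forces $\sum_t \tilde{Y}_{T,t}^2 \le 2\sigma^2 + \max_t Y_{T,t}^2$, whence $\mathbb{E}|M_T(\lambda)|^2 \le e^{2\lambda^2\sigma^2}\,\mathbb{E}\bigl[1+\lambda^2\max_t Y_{T,t}^2\bigr]$, and the last expectation is bounded (in fact vanishing) by your Lyapunov computation from (b). Without this truncation the dominated-convergence step at the end has no justification. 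With it, the rest of your argument goes through, and condition (a) plays exactly the role you assign it: identifying the (nondegenerate) limit constant $\sigma^2$.
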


Thus, to apply the martingale CLT, it is enough to verify 
\begin{description}
\item[(a)] $\frac{1}{T}\sum^T_{t=1}\mathbb{E}_{Y_t(a) \sim r^a(y\mid x), A_t \sim \widehat{w}_t(a\mid x), X_t\sim \widehat{p}_t(x)}[\xi^2_t]\to 1$; 
\item[(b)] $\mathbb{E}_{Y_t(a) \sim r^a(y\mid x), A_t \sim \widehat{w}_t(a\mid x), X_t\sim \widehat{p}_t(x)}[|\xi_t|^r] < \infty$ for some $r>2$ and all $t\in\mathbb{N}$;
\item[(c)] $\frac{1}{T}\sum^{T}_{t=1}\xi^2_t\xrightarrow{p}1$. 
\end{description}

Under this proof strategy, we show Theorem~\ref{thm:asymp_dist} below. Our proof is inspired by \citet{Kato2020adaptive}. We denote $\mathbb{E}_{Y_t(a) \sim r^a(y\mid x), A_t \sim \widehat{w}_t(a\mid x), X_t\sim \widehat{p}_t(x)}$ by $\mathbb{E}_{P_t}$. In the proof, we use the following results.
\begin{definition}\label{dfn:uniint}[Uniform integrability, \citet{Hamilton1994}, p.~191] Let $W_t \in \mathbb{R}$ be a random variable with a probability measure $P$.  A sequence $\{W_t\}$  is said to be uniformly integrable if for every $\epsilon > 0$ there exists a number $c>0$ such that 
\begin{align*}
\mathbb{E}[|W_t|\cdot \mathbbm{1}[|W_t| \geq c]] < \epsilon
\end{align*}
for all $t$.
\end{definition}
\begin{proposition}[Sufficient conditions for uniform integrability; Proposition~7.7, p.~191. \citet{Hamilton1994}]\label{prp:suff_uniint} Let $W_t, Z_t \in\mathbb{R}$ be random variables. Let $P$ be a probability measure of $Z_t$. (a) Suppose there exist $r>1$ and $M<\infty$ such that $\mathbb{E}_{P}[|W_t|^r]<M$ for all $t$. Then $\{W_t\}$ is uniformly integrable. (b) Suppose there exist $r>1$ and $M < \infty$ such that $\mathbb{E}_{P}[|Z_t|^r]<M$ for all $t$. If $W_t = \sum^\infty_{j=-\infty}h_jZ_{t-j}$ with $\sum^\infty_{j=-\infty}|h_j|<\infty$, then $\{W_t\}$ is uniformly integrable.
\end{proposition}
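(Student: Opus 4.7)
The plan is to handle the two parts separately, with part (a) reducing to a simple truncation argument and part (b) reducing to part (a) via a uniform $L^r$ bound produced by Minkowski's inequality.

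For part (a), I would start from the elementary observation that, on the event $\{|W_t| \geq c\}$, the condition $r>1$ gives $|W_t| = |W_t| \cdot |W_t|^{r-1} / |W_t|^{r-1} \leq |W_t|^r \cdot c^{-(r-1)}$. Taking expectations and using the uniform hypothesis $\mathbb{E}_P[|W_t|^r] < M$ yields $\mathbb{E}_P[|W_t| \mathbbm{1}[|W_t| \geq c]] \leq M c^{-(r-1)}$, with the bound not depending on $t$. Given $\epsilon > 0$, choosing $c > (M/\epsilon)^{1/(r-1)}$ verifies Definition~\ref{dfn:uniint} uniformly in $t$.

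For part (b), the plan is to produce a uniform-in-$t$ bound on $\mathbb{E}_P[|W_t|^r]$ and then invoke part (a). First, I would verify that the infinite series $\sum_j h_j Z_{t-j}$ converges in $L^r$. The truncated sums $W_t^{(N)} = \sum_{|j| \leq N} h_j Z_{t-j}$ satisfy, by Minkowski's inequality on finite sums, $\|W_t^{(N)} - W_t^{(M)}\|_r \leq M^{1/r} \sum_{M < |j| \leq N} |h_j|$, which tends to $0$ as $M, N \to \infty$ by absolute summability. Hence $\{W_t^{(N)}\}$ is Cauchy in $L^r$, converging to a limit that coincides a.s.\ with $W_t$. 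Continuity of the $L^r$ norm then gives $\|W_t\|_r \leq \sum_j |h_j| \|Z_{t-j}\|_r \leq M^{1/r} \sum_j |h_j|$, a bound independent of $t$. Raising this to the $r$-th power provides a uniform $L^r$ bound of the form $M' = M (\sum_j |h_j|)^r$, and applying part (a) to $\{W_t\}$ with constant $M'$ concludes.

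The main (mild) obstacle is justifying Minkowski's inequality on the doubly infinite series rather than just on a finite sum; the Cauchy-sequence argument outlined above is the cleanest way to handle this, and it uses only $r>1$ together with absolute summability of $\{h_j\}$. Once the uniform $L^r$ bound on $W_t$ is in hand, part (b) is an immediate consequence of part (a), and no further probabilistic machinery is required.
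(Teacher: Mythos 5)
Your proof is correct. Note, however, that the paper does not prove this proposition at all: it is imported verbatim as Proposition~7.7 from \citet{Hamilton1994} and used as a black box in the proof of Theorem~\ref{thm:asymp_dist}, so there is no in-paper argument to compare against. Your part~(a) is the standard truncation/Markov bound $\mathbb{E}_P\big[|W_t|\mathbbm{1}[|W_t|\geq c]\big]\leq M c^{-(r-1)}$, and your part~(b) correctly reduces to part~(a) via a uniform $L^r$ bound from Minkowski's inequality, with the $L^r$-Cauchy argument adequately handling the doubly infinite sum; this matches the textbook proof. The only blemish is cosmetic: you reuse the symbol $M$ both for the uniform moment bound and as a truncation index in the Cauchy-sequence step, which should be renamed to avoid confusion.
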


\begin{proposition}[$L^r$ convergence theorem, p~165, \citet{loeve1977probability}]
\label{prp:lr_conv_theorem}
Let $W_t$ be a random variable with probability measure $P$ and $w$ be a constant. 
Let $0<r<\infty$, suppose that $\mathbb{E}\big[|W_t|^r\big] < \infty$ for all $t$ and that $W_t \xrightarrow{\mathrm{p}}z$ as $n\to \infty$. The following are equivalent: 
\begin{description}
\item{(i)} $W_t\to w$ in $L^r$ as $t\to\infty$;
\item{(ii)} $\mathbb{E}\big[|W_t|^r\big]\to \mathbb{E}_{P}\big[|w|^r\big] < \infty$ as $t\to\infty$; 
\item{(iii)} $\big\{|W_t|^r, t\geq 1\big\}$ is uniformly integrable.
\end{description}
\end{proposition}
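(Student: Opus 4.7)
I would establish the equivalence by proving the cycle of implications $(\mathrm{i}) \Rightarrow (\mathrm{ii}) \Rightarrow (\mathrm{iii}) \Rightarrow (\mathrm{i})$. The first step is the easiest: assuming $W_t \to w$ in $L^r$, the reverse triangle inequality for the $L^r$ (semi)norm (or, when $0<r<1$, the analogous inequality $\bigl|\,\mathbb{E}[|W_t|^r] - \mathbb{E}[|w|^r]\,\bigr| \le \mathbb{E}[|W_t - w|^r]$) forces $\mathbb{E}[|W_t|^r] \to |w|^r = \mathbb{E}[|w|^r]$, which is (ii).

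For $(\mathrm{ii}) \Rightarrow (\mathrm{iii})$ the strategy is truncation combined with bounded convergence. Continuous mapping applied to $W_t \xrightarrow{\mathrm{p}} w$ gives $|W_t|^r \xrightarrow{\mathrm{p}} |w|^r$. Fix $M > 0$; since $x \mapsto x \wedge M$ is bounded and continuous, the bounded convergence theorem yields $\mathbb{E}[|W_t|^r \wedge M] \to \mathbb{E}[|w|^r \wedge M]$. Subtracting this from the hypothesis $\mathbb{E}[|W_t|^r] \to \mathbb{E}[|w|^r]$ produces $\mathbb{E}[(|W_t|^r - M)^+] \to \mathbb{E}[(|w|^r - M)^+]$ as $t \to \infty$. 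Letting $M \to \infty$, dominated convergence (using $\mathbb{E}[|w|^r] < \infty$) sends the right-hand side to $0$. Combined with the elementary bound $|W_t|^r\,\mathbbm{1}\{|W_t|^r > 2M\} \le 2(|W_t|^r - M)^+$, this shows that $\sup_{t \ge t_0} \mathbb{E}[|W_t|^r\,\mathbbm{1}\{|W_t|^r > 2M\}]$ can be made arbitrarily small by choosing $M$ large and $t_0$ appropriately; the finitely many indices $t < t_0$ are handled by the automatic uniform integrability of any finite family of integrable random variables, yielding (iii).

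The remaining direction $(\mathrm{iii}) \Rightarrow (\mathrm{i})$ is a standard Vitali-type argument. Continuous mapping again gives $|W_t - w|^r \xrightarrow{\mathrm{p}} 0$. The elementary inequality $|a - b|^r \le C_r\bigl(|a|^r + |b|^r\bigr)$, with $C_r = 2^{r-1}$ for $r \ge 1$ and $C_r = 1$ for $0 < r < 1$, combined with uniform integrability of $\{|W_t|^r\}$ and the trivial uniform integrability of the constant $|w|^r$, gives uniform integrability of $\{|W_t - w|^r\}$. Convergence in probability together with uniform integrability is well known to imply $L^1$ convergence, which here reads $\mathbb{E}[|W_t - w|^r] \to 0$, i.e.\ $W_t \to w$ in $L^r$.

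I expect the main obstacle to be $(\mathrm{ii}) \Rightarrow (\mathrm{iii})$, since it is the only direction where a scalar moment convergence must be leveraged to extract a tail-uniformity property. The delicate points are: (a) choosing the truncation level $M$ relative to a given $\epsilon$ via $\mathbb{E}[(|w|^r - M)^+] \downarrow 0$ as $M \to \infty$; (b) separating the small-$t$ regime (handled by UI of finitely many integrable variables) from the large-$t$ regime (handled by the truncation estimate above); and (c) the passage from the ``excess-over-$M$'' quantity $\mathbb{E}[(|W_t|^r - M)^+]$ to the genuine tail integral $\mathbb{E}[|W_t|^r\,\mathbbm{1}\{|W_t|^r > c\}]$ appearing in Definition~\ref{dfn:uniint}, via the bound noted above.
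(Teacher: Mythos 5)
Your proof is correct; the paper does not prove this proposition at all but simply imports it by citation from Lo\`{e}ve, and your cycle $(\mathrm{i})\Rightarrow(\mathrm{ii})\Rightarrow(\mathrm{iii})\Rightarrow(\mathrm{i})$ --- reverse triangle inequality (with the $|a|^r-|b|^r$ variant for $0<r<1$), truncation plus bounded convergence to extract uniform integrability from moment convergence, and a Vitali argument using $|a-b|^r\le C_r(|a|^r+|b|^r)$ --- is exactly the standard textbook argument being cited. The delicate points you flag (the $2(|W_t|^r-M)^+$ tail bound and splitting off finitely many indices) are handled correctly, so there is nothing to add.
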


Then, we prove Theorem~\ref{thm:asymp_dist} as follows.
\begin{proof}[Proof of Theorem~\ref{thm:asymp_dist}]
We check conditions (a)--(b) in the following Steps~1--3.

\paragraph{Step~1: check of condition~(a)}
Note that $\sqrt{\tau^*}$ is non-random variable. Hence, we focus on the expectation of 
\[\left(\frac{\mathbbm{1}[A_t = 1]\big(Y_t - \widehat{\mu}_t(1)(X_t)\big)}{\widehat{w}_t(1\mid X_t)} - \frac{\mathbbm{1}[A_t = 0]\big(Y_t - \widehat{\mu}_t(0)(X_t)\big)}{\widehat{w}_t(0\mid X_t)}\right)\frac{q(X_t)}{\widehat{p}_t(X_t)} + \mathbb{E}_{\widetilde{X}\sim q(x)}\left[\widehat{\theta}_t(\widetilde{X})\right] - \theta_0\]
conditioned on $\mathcal{F}_{t-1}$. 

Instead of $\sum^T_{t=1}\mathbb{E}_P[\xi2_t(P)]$, we first consider the convergence of $\sum^T_{t=1}\mathbb{E}_P[(xi^2_t(P)|\mathcal{F}_{t-1}] = \Omega_t(P)$; that is, we show $\Omega_t(P) - 1\xrightarrow{\mathrm{p}} 0$. Then, by using the $L^r$-convergence theorem (Proposition~\ref{prp:lr_conv_theorem}), we show $\sum^T_{t=1}\mathbb{E}_P[\xi^2_t(P)] - 1 \to 0$. 

The conditional expectation is computed as follows:
\begin{align}
&\mathbb{E}_{P_t}\left[\left(\left(\frac{\mathbbm{1}[A_t = 1]\big(Y_t - \widehat{\mu}_t(1)(X_t)\big)}{\widehat{w}_t(1\mid X_t)} - \frac{\mathbbm{1}[A_t = 0]\big(Y_t - \widehat{\mu}_t(0)(X_t)\big)}{\widehat{w}_t(0\mid X_t)}\right)\frac{q(X_t)}{\widehat{p}_t(X_t)} + \mathbb{E}_{\widetilde{X}\sim q(x)}\left[\widehat{\theta}_t(\widetilde{X})\right] - \theta_0\right)^2 \mid \mathcal{F}_{t-1}\right]\nonumber
\\
&= \mathbb{E}_{P_t}\Bigg[\left(\frac{\mathbbm{1}[A_t = 1]\big(Y_t - \widehat{\mu}_t(1)(X_t)\big)}{\widehat{w}_t(1\mid X_t)} - \frac{\mathbbm{1}[A_t = 0]\big(Y_t - \widehat{\mu}_t(0)(X_t)\big)}{\widehat{w}_t(0\mid X_t)}\right)^2\frac{q^2(X_t)}{\widehat{p}^2_t(X_t)} \mid \mathcal{F}_{t-1}\Bigg]\nonumber\\
&\ \ \ + 2\mathbb{E}_{P_t}\Bigg[\left(\frac{\mathbbm{1}[A_t = 1]\big(Y_t - \widehat{\mu}_t(1)(X_t)\big)}{\widehat{w}_t(1\mid X_t)} - \frac{\mathbbm{1}[A_t = 0]\big(Y_t - \widehat{\mu}_t(0)(X_t)\big)}{\widehat{w}_t(0\mid X_t)}\right)\frac{q(X_t)}{\widehat{p}_t(X_t)}\Bigg(\mathbb{E}_{\widetilde{X}\sim q(x)}\left[\widehat{\theta}_t(\widetilde{X})\right] - \theta_0\Bigg) \mid \mathcal{F}_{t-1}\Bigg]\nonumber\\
&\ \ \ + \mathbb{E}_{P_t}\Bigg[\Bigg(\mathbb{E}_{\widetilde{X}\sim q(x)}\left[\widehat{\theta}_t(\widetilde{X})\right] - \theta_0\Bigg)^2 \mid \mathcal{F}_{t-1}\Bigg]\nonumber\\
&= \mathbb{E}_{P_t}\Bigg[\Bigg(\frac{\mathbbm{1}[A_t = 1]\big(Y_t - \widehat{\mu}_t(1)(X_t)\big)^2}{\widehat{w}^2_t(1\mid X_t)} - \frac{\mathbbm{1}[A_t = 0]\big(Y_t - \widehat{\mu}_t(0)(X_t)\big)^2}{\widehat{w}^2_t(0\mid X_t)}\Bigg)\frac{q^2(X_t)}{\widehat{p}^2_t(X_t)} \mid \mathcal{F}_{t-1}\Bigg]\nonumber\\
&\ \ \ + 2\mathbb{E}_{P_t}\Bigg[\Bigg(\frac{\mathbbm{1}[A_t = 1]\big(Y_t - \widehat{\mu}_t(1)(X_t)\big)}{\widehat{w}_t(1\mid X_t)} - \frac{\mathbbm{1}[A_t = 0]\big(Y_t - \widehat{\mu}_t(0)(X_t)\big)}{\widehat{w}_t(0\mid X_t)}\Bigg)\frac{q(X_t)}{\widehat{p}_t(X_t)}\Bigg(\mathbb{E}_{\widetilde{X}\sim q(x)}\left[\widehat{\theta}_t(\widetilde{X})\right] - \theta_0\Bigg) \mid \mathcal{F}_{t-1}\Bigg]\nonumber\\
&\ \ \ + \mathbb{E}_{P_t}\Bigg[\Bigg(\mathbb{E}_{\widetilde{X}\sim q(x)}\left[\widehat{\theta}_t(\widetilde{X})\right] - \theta_0\Bigg)^2 \mid \mathcal{F}_{t-1}\Bigg]\nonumber\\
&= \mathbb{E}_{P_t}\Bigg[\Bigg(\frac{\big(Y_t(1) - \widehat{\mu}_t(1)(X_t)\big)^2}{\widehat{w}_t(1\mid X_t)} - \frac{\big(Y_t(0) - \widehat{\mu}_t(0)(X_t)\big)^2}{\widehat{w}_t(0\mid X_t)}\Bigg)\frac{q^2(X_t)}{\widehat{p}^2_t(X_t)} \mid \mathcal{F}_{t-1}\Bigg]\nonumber\\
&\ \ \ + 2\mathbb{E}_{P_t}\Bigg[\Bigg(\theta_0(X_t) - \widehat{\theta}_t(X_t)\Bigg)\frac{q(X_t)}{\widehat{p}_t(X_t)}\Bigg(\mathbb{E}_{\widetilde{X}\sim q(x)}\left[\widehat{\theta}_t(\widetilde{X})\right] - \theta_0\Bigg) \mid \mathcal{F}_{t-1}\Bigg]\nonumber\\
&\ \ \ + \mathbb{E}_{P_t}\Bigg[\Bigg(\mathbb{E}_{\widetilde{X}\sim q(x)}\left[\widehat{\theta}_t(\widetilde{X})\right] - \theta_0\Bigg)^2 \mid \mathcal{F}_{t-1}\Bigg]\nonumber.
\end{align}

From $\widehat{\mu}_t(a)(x)\xrightarrow{\mathrm{a.s}} \mu^a(P)(x)$ and $\widehat{w}_t(a|x)\xrightarrow{\mathrm{a.s.}} w^*(a|x)$, for all $a\in\{1, 0\}$ and any $x\in\mathcal{X}$, we have
\begin{align*}
&\left(\frac{\big(Y_t(1) - \widehat{\mu}_t(1)(x)\big)^2}{\widehat{w}_t(1\mid x)} - \frac{\big(Y_t(0) - \widehat{\mu}_t(0)(x)\big)^2}{\widehat{w}_t(0\mid x)}\right)\frac{q^2(x)}{\widehat{p}^2_t(x)}\\
&\ \ \ + 2\left(\theta_0(x) - \widehat{\theta}_t(x)\right)\frac{q(x)}{\widehat{p}_t(x)}\left(\mathbb{E}_{\widetilde{X}\sim q(x)}\left[\widehat{\theta}_t(\widetilde{X})\right] - \theta_0\right) + \left(\mathbb{E}_{\widetilde{X}\sim q(x)}\left[\widehat{\theta}_t(\widetilde{X})\right] - \theta_0\right)^2\\
&\xrightarrow{\mathrm{a.s}} \left(\frac{\big(Y_t(1) - \mu(1)(x)\big)^2}{w^*(1\mid x)} - \frac{\big(Y_t(0) - \mu(0)(x)\big)^2}{w^*(0\mid x)}\right)\frac{q^2(x)}{p^{*2}(x)}.
\end{align*}

Let us define
\begin{align*}
    W_t(x) &\coloneqq \left(\frac{\big(Y_t(1) - \widehat{\mu}_t(1)(x)\big)^2}{\widehat{w}_t(1\mid x)} - \frac{\big(Y_t(0) - \widehat{\mu}_t(0)(x)\big)^2}{\widehat{w}_t(0\mid x)}\right)\frac{q^2(x)}{\widehat{p}^2_t(x)}\\
&\ \ \ + 2\left(\theta_0(x) - \widehat{\theta}_t(x)\right)\frac{q(x)}{\widehat{p}_t(x)}\left(\mathbb{E}_{\widetilde{X}\sim q(x)}\left[\widehat{\theta}_t(\widetilde{X})\right] - \theta_0\right) + \left(\mathbb{E}_{\widetilde{X}\sim q(x)}\left[\widehat{\theta}_t(\widetilde{X})\right] - \theta_0\right)^2\\
&\ \ \ - \left(\frac{\big(Y_t(1) - \mu(1)(x)\big)^2}{w^*(1\mid x)} - \frac{\big(Y_t(0) - \mu(0)(x)\big)^2}{w^*(0\mid x)}\right)\frac{q^2(x)}{p^{*2}(x)} - \left(\mathbb{E}_{\widetilde{X}\sim q(x)}\left[\widehat{\theta}_t(\widetilde{X})\right] - \theta_0\right)^2.
\end{align*}

Therefore, we obtain
\begin{align*}
    \mathbb{E}_{P_t}[W_t(X_t)\mid X_t, \mathcal{F}_{t-1}] \xrightarrow{\mathrm{a.s.}} 0.
\end{align*}

Next, we show $\mathbb{E}_{P_t}[W_t(X_t)] \to 0$ by using the $L^r$-convergence theorem (Proposition~\ref{prp:lr_conv_theorem}). Note that 
\begin{align*}
     \mathbb{E}_{P_t}\left[\left(\frac{\big(Y_t(1) - \mu(1)(X_t)\big)^2}{w^*(1\mid X_t)} - \frac{\big(Y_t(0) - \mu(0)(X_t)\big)^2}{w^*(0\mid X_t)}\right)\frac{q^2(X_t)}{p^{*2}(X_t)}\right]  = \tau^*.
\end{align*}

To apply the $L^r$-convergence theorem, we check that $W_t(X_t)$ is uniformly integrable. Here, $\xi_t$ is conditionally sub-Gaussian. From Lemma~2.7.6 of \citet{vershynin2018high}, the squared value $\xi^2_t$ is conditionally sub-exponential. Therefore, $W_t(X_t)$ is a sum of the sub-exponential random variable. This implies that $W_t(X_t)$ is uniformly integrable. As a result, from the $L^r$-convergence theorem, $\mathbb{E}_{P_t}[W_t(X_t)] \to 0$ holds. 

Finally, $\mathbb{E}_{P_t}[W_t(X_t)] / \tau^* \to 0$ implies 
\begin{align*}
    \mathbb{E}_{P_t}[\xi^2_t] - 1 \to 0.
\end{align*}

\paragraph{Step~2: check of condition~(b)} From Assumption~\ref{asm:subgaussian}, $\xi^{a,b}_t$ is sub-Gaussian. When $\xi_t$ is sub-Gaussian, the condition holds from Proposition~2.5.2 (ii) of \citet{vershynin2018high}. 

\paragraph{Step~3: check of condition~(c)}
Let us define $u_t$ as
\begin{align*}
&u_t = \xi^2_t - \mathbb{E}\big[\xi^2_t \mid \mathcal{F}_{t-1}\big]\\
&=\Bigg(\left(\frac{\big(Y_t(1) - \widehat{\mu}_t(1)(x)\big)^2}{\widehat{w}_t(1\mid x)} - \frac{\big(Y_t(0) - \widehat{\mu}_t(0)(x)\big)^2}{\widehat{w}_t(0\mid x)}\right)\frac{q^2(x)}{\widehat{p}^2_t(x)}\\
&\ \ \ \ \ \ \ \ \ + 2\left(\theta_0(x) - \widehat{\theta}_t(x)\right)\frac{q(x)}{\widehat{p}_t(x)}\left(\mathbb{E}_{\widetilde{X}\sim q(x)}\left[\widehat{\theta}_t(\widetilde{X})\right] - \theta_0\right) + \left(\mathbb{E}_{\widetilde{X}\sim q(x)}\left[\widehat{\theta}_t(\widetilde{X})\right] - \theta_0\right)^2\Bigg) / \tau^*\\
&\ \ \ - \mathbb{E}_{P_t}\Bigg[\Bigg(\left(\frac{\big(Y_t(1) - \widehat{\mu}_t(1)(x)\big)^2}{\widehat{w}_t(1\mid x)} - \frac{\big(Y_t(0) - \widehat{\mu}_t(0)(x)\big)^2}{\widehat{w}_t(0\mid x)}\right)\frac{q^2(x)}{\widehat{p}^2_t(x)}\\
&\ \ \ \ \ \ \ \ \ + 2\left(\theta_0(x) - \widehat{\theta}_t(x)\right)\frac{q(x)}{\widehat{p}_t(x)}\left(\mathbb{E}_{\widetilde{X}\sim q(x)}\left[\widehat{\theta}_t(\widetilde{X})\right] - \theta_0\right) + \left(\mathbb{E}_{\widetilde{X}\sim q(x)}\left[\widehat{\theta}_t(\widetilde{X})\right] - \theta_0\right)^2\Bigg) / \tau^* \mid \mathcal{F}_{t-1}\Bigg].
\end{align*}
Note that $u_t$ is an MDS. 

From the boundedness of each variable in $u_t$, we can apply the weak law of large numbers for an MDS and obtain
\begin{align*}
&\frac{1}{T}\sum^T_{t=1}u_t = \frac{1}{T}\sum^T_{t=1}\Big(\xi^2_t - \mathbb{E}\big[\xi^2_t \mid \mathcal{F}_{t-1}\big]\Big)\xrightarrow{\mathrm{p}} 0.
\end{align*}

From Assumption~\ref{asm:subgaussian} and the result of Step~1, as well as the proof in Lemma~10 in \citet{hadad2019}, we obtain 
\begin{align*}
\frac{1}{T}\sum^T_{t=1}\xi^2_t - 1\xrightarrow{\mathrm{p}} 0.
\end{align*}
Then, from the $L^r$-convergence theorem, we have
\begin{align*}
 \frac{1}{T}\sum^T_{t=1}\mathbb{E}\big[\xi^2_t \mid \mathcal{F}_{t-1}\big] - 1\xrightarrow{\mathrm{p}} 0.
\end{align*}

As a conclusion, we obtain
\begin{align*}
&\frac{1}{T}\sum^T_{t=1}\xi^2_t - 1 = \frac{1}{T}\sum^T_{t=1}\left(\xi^2_t - \mathbb{E}\big[\xi^2_t \mid \mathcal{F}_{t-1}\big] + \mathbb{E}\big[\xi^2_t\mid \mathcal{F}_{t-1}\big] - 1\right)\xrightarrow{\mathrm{p}} 0.
\end{align*}

\paragraph{Conclusion} In Steps~1--3. we checked the conditions required for applying the martingale CLT hold. Therefore, from the martingale CLT, we obtain the statement.
\end{proof}

\section{Details of the Simulation Studies}
This section provides details of simulation studies. The simulation studies were conducted by using the MacBook Pro and a Linux ubuntu18 server with $64$ cores and 
$251$ GB of RAM.

\subsection{Details of Alternative Experiments}
We explain the details of alternative experiments:
\begin{description}
    \item[DRCT.] We sample experimental units from $q(x)$. In each odd round $t = 1, 3, 5,\dots$, we assign treatment $A_t = 1$. In each even round $t = 2, 4, 6,\dots$, we assign treatment $A_t = 0$. Because there is no selection bias, we estimate the ATE by using the sample average; that is, 
    \[\widehat{\theta}_T \coloneqq \frac{1}{\sum^T_{t=1}\mathbbm{1}[A_t = 1]}\sum^T_{t=1}\mathbbm{1}[A_t = 1]Y_t - \frac{1}{\sum^T_{t=1}\mathbbm{1}[A_t = 0]}\sum^T_{t=1}\mathbbm{1}[A_t = 0]Y_t.\]
    \item[RCT.] We sample experimental units from $q(x)$. In each round, we assign treatment $A_t = a$ with probability $w(a\mid x)$ for all $a\in\{1, 0\}$ and any $x\in\mathcal{X}$. Because there is no selection bias, we estimate the ATE by using the sample average; that is, 
    \[\widehat{\theta}_T \coloneqq \frac{1}{\sum^T_{t=1}\mathbbm{1}[A_t = 1]}\sum^T_{t=1}\mathbbm{1}[A_t = 1]Y_t - \frac{1}{\sum^T_{t=1}\mathbbm{1}[A_t = 0]}\sum^T_{t=1}\mathbbm{1}[A_t = 0]Y_t.\]
    \item[AS-AIPW (Oracle).] We sample experimental units from $q(x)$. In each round, we assign treatment $A_t = a$ with probability $w^*(a\mid x)$ for all $a\in\{1, 0\}$ and any $x\in\mathcal{X}$. Because there is selection bias, we estimate the ATE by using the AIPW estimator; that is, 
    \begin{align*}
        \widehat{\theta}_T &\coloneqq\frac{1}{T}\sum^T_{t=1}\frac{\mathbbm{1}[A_t = 1]\left(Y_t - \widehat{\mu}_t(1)(X_t)\right)}{w^*(1\mid X_t)} - \frac{1}{T}\sum^T_{t=1}\frac{\mathbbm{1}[A_t = 0]\left(Y_t - \widehat{\mu}_t(0)(X_t)\right)}{w^*(0\mid X_t)}\\
        &\ \ \ \ \ \ \ \ \ + \frac{1}{T}\sum^T_{t=1}\mathbb{E}_{X\sim q(x)}\left[\widehat{\mu}_t(1)(X) - \widehat{\mu}_t(0)(X)\right],
    \end{align*}
    where $\widehat{\mu}_t(a)(x)$ and $\mathbb{E}_{X\sim q(x)}\left[\widehat{\mu}_t(1)(X) - \widehat{\mu}_t(0)(X)\right]$ are constructed as well as the AAS-AIPWIW experiment. 
    \item[AAS-AIPWIW (Oracle).] We sample experimental units from the efficient covariate density $p^*(x)$. In each round, we assign treatment $A_t = a$ with probability $w^*(a\mid x)$ for all $a\in\{1, 0\}$ and any $x\in\mathcal{X}$. Then, we estimate the ATE by using the AIPWIW estimator; that is, 
    \begin{align*}
        \widehat{\theta}_T &\coloneqq \frac{1}{T}\sum^T_{t=1}\left(\frac{\mathbbm{1}[A_t = 1]\left(Y_t - \widehat{\mu}_t(1)(X_t)\right)}{w^*(1\mid X_t)} - \frac{\mathbbm{1}[A_t = 0]\left(Y_t - \widehat{\mu}_t(0)(X_t)\right)}{w^*(0\mid X_t)}\right)\frac{q(X_t)}{p^*(X_t)}\\
        &\ \ \ \ \ \ \ \ \ + \frac{1}{T}\sum^T_{t=1}\mathbb{E}_{X\sim q(x)}\left[\widehat{\mu}_t(1)(X) - \widehat{\mu}_t(0)(X)\right].
    \end{align*}
    \item[AS-AIPW.] We sample experimental units from $q(x)$. In each round, we assign treatment $A_t = a$ with probability $w_t(a\mid x)$ for all $a\in\{1, 0\}$ and any $x\in\mathcal{X}$, which is constructed by using past observations, as well as the AA-AIPWIW experiment. Then, we estimate the ATE by using the AIPW estimator; that is, 
    \begin{align*}
        \widehat{\theta}_T &\coloneqq \frac{1}{T}\sum^T_{t=1}\frac{\mathbbm{1}[A_t = 1]\left(Y_t - \widehat{\mu}_t(1)(X_t)\right)}{w_t(1\mid X_t)} - \frac{1}{T}\sum^T_{t=1}\frac{\mathbbm{1}[A_t = 0]\left(Y_t - \widehat{\mu}_t(0)(X_t)\right)}{w_t(0\mid X_t)}\\
        &\ \ \ \ \ \ \ \ \  + \frac{1}{T}\sum^T_{t=1}\mathbb{E}_{X\sim q(x)}\left[\widehat{\mu}_t(1)(X) - \widehat{\mu}_t(0)(X)\right].
    \end{align*}
\end{description}

\subsection{MSE with Various \texorpdfstring{$T$}{TEXT}}
In Figures~\ref{fig:timegaussian}--\ref{fig:timeuniform}, we plot the MSEs at $T = 500, 750, 1,000, 1,250, 1,500, 1,750, 2,000, 2,250, 2,500, 2,750, 3,000$. The AAS-AIPWIW experiment successfully reduces the MSEs. In some cases, the AAS-AIPWIW outperforms the AAS-AIPWIW (Oracle), which seems paradoxical since the former utilizes the true $p^*$. Such a phenomenon is often reported in the literature of semiparametric analysis.

\begin{figure}[t]
  \centering
    \includegraphics[width=100mm]{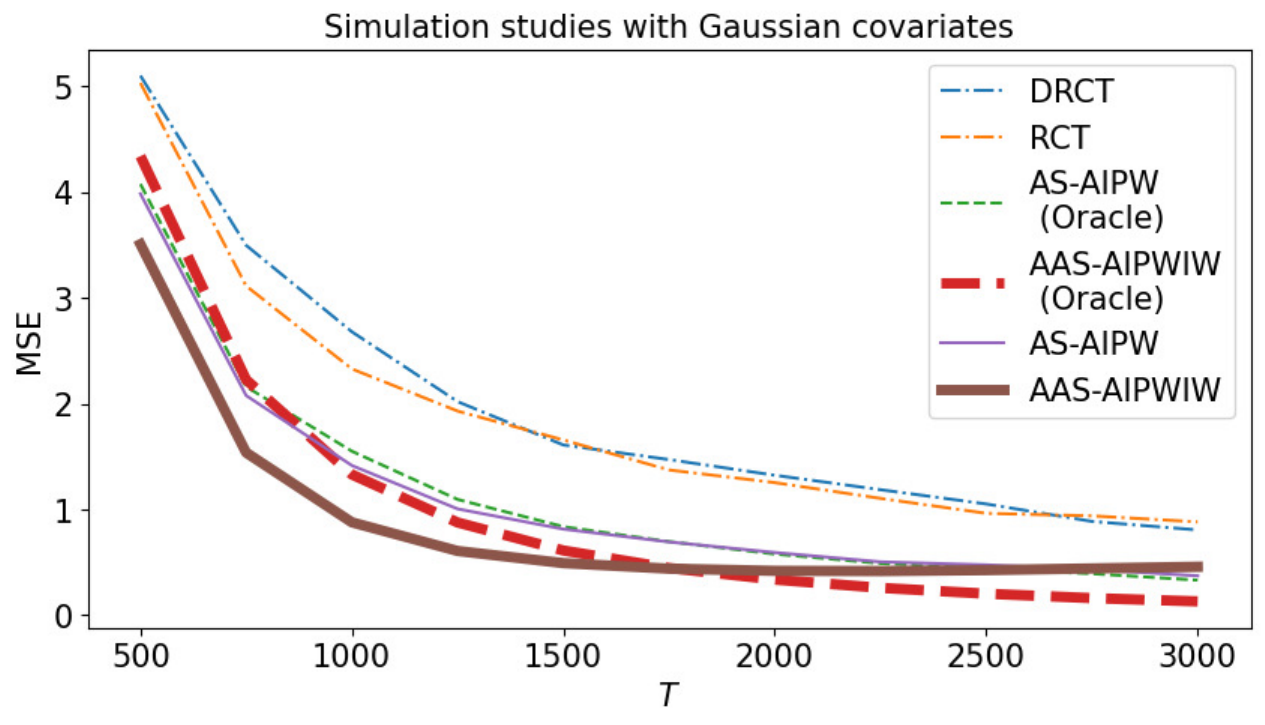}
    \vspace{-5mm}

\caption{MSEs at $T = 500, 750, 1,000, 1,250, 1,500, 1,750, 2,000, 2,250, 2,500, 2,750, 3,000$ in simulation studies with Gaussian covariates.}
\label{fig:timegaussian}
\vspace{5mm}
\centering
    \includegraphics[width=100mm]{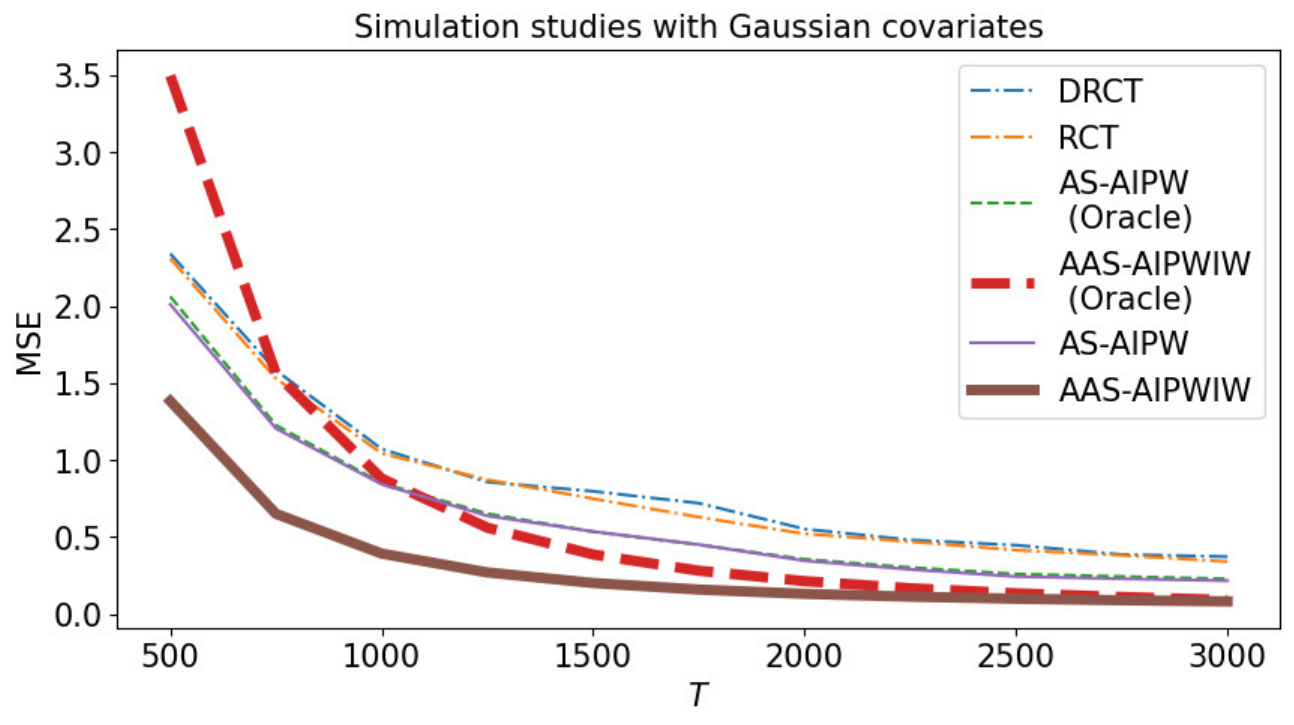}
    \vspace{-5mm}
\caption{MSEs at $T = 500, 750, 1,000, 1,250, 1,500, 1,750, 2,000, 2,250, 2,500, 2,750, 3,000$ in simulation studies with Uniform covariates.}
\label{fig:timeuniform}
\end{figure}

\subsection{Asymptotic Normality}
In Figures~\ref{fig:asympgaussian}--\ref{fig:asympuniform}, we plot empirical distributions of $\widehat{\theta}_T$ obtained via the RCT, and the AS-AIPW, and the AAS-AIPWIW experiments. While the RCT keeps a form of a Gaussian distribution, the shapes of empirical distributions of the AS-AIPW and the AAS-AIPWIW experiments seem to have some bias for the asymptotic distribution. From the theoretical viewpoint, such a bias will vanish as the sample size $T$ increases. 

\begin{figure}[t]
  \centering
    \includegraphics[width=100mm]{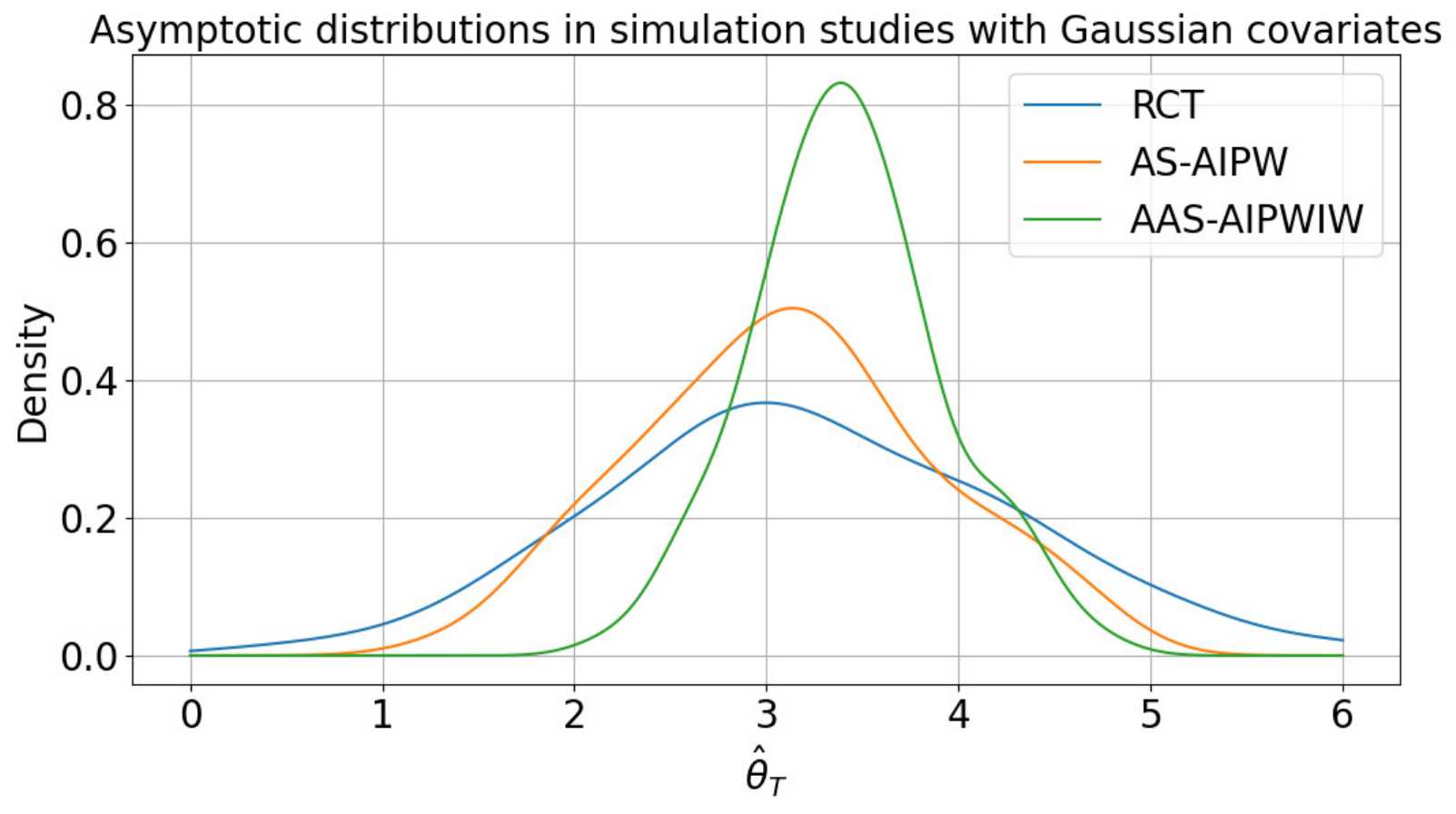}
    \vspace{-5mm}
\caption{Empirical distributions of ATE estimators in simulation studies with covariates following with a Gaussian distribution $\mathcal{N}(1, 25)$.}
\label{fig:asympgaussian}
\vspace{5mm}
\centering
    \includegraphics[width=100mm]{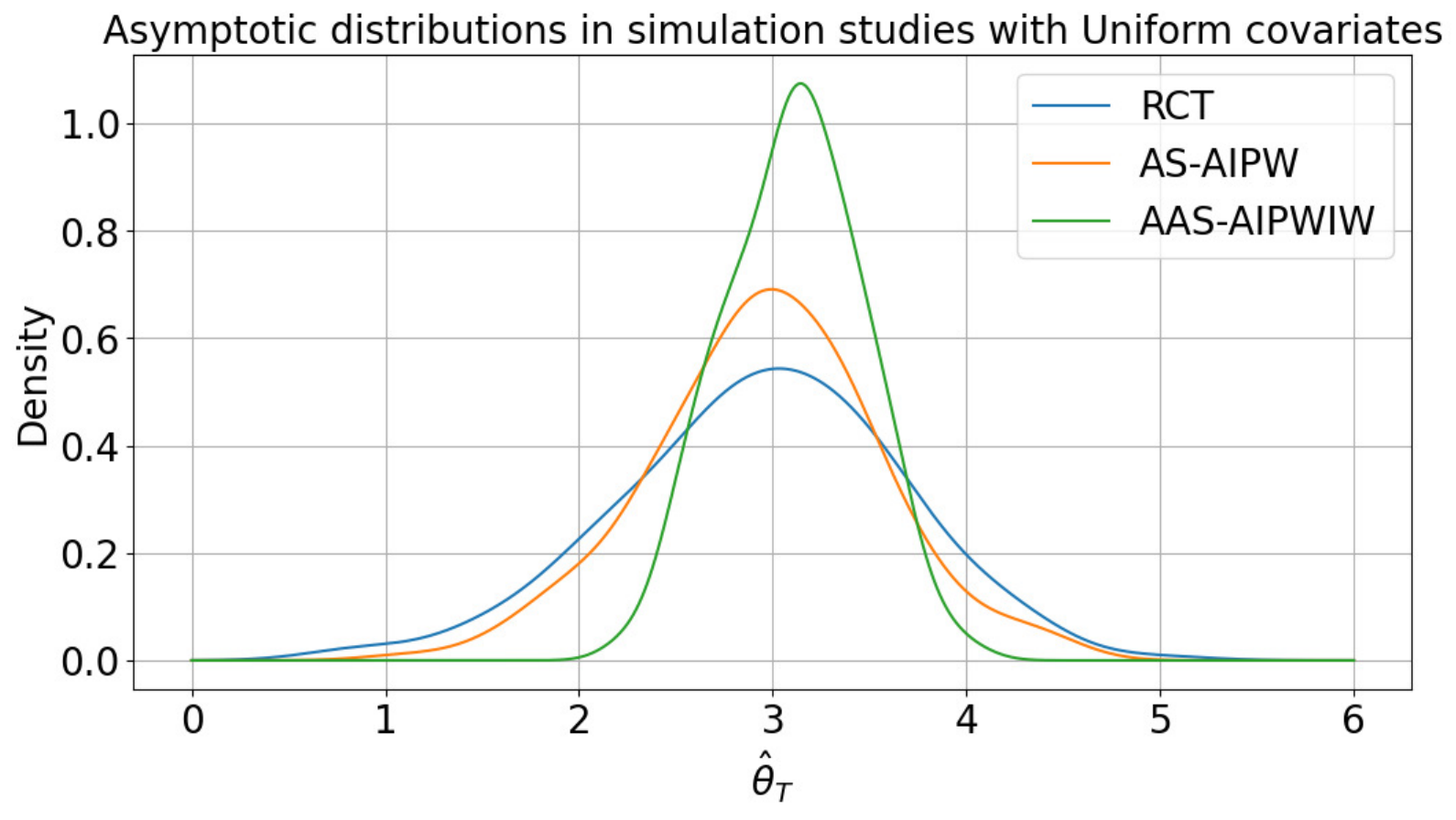}
    \vspace{-5mm}

\caption{Empirical distributions of ATE estimators in simulation studies with covariates following with a Uniform distribution $\mathcal{U}(-10, 10)$.}
\label{fig:asympuniform}
\end{figure}

\subsection{Coverage Ratio}
Based on the asymptotic normality, we construct confidence intervals of $\widehat{\theta}_T$ in the RCT, the AS-AIPW, and the AAS-AIPWIW experiments. Then, we confirm how many times $95\%$ confidence intervals include the true $\theta_0$. The ratio that the true $\theta_0$ drops into a confidence interval is referred to as a coverage ratio. 

As a result of simulation studies with $200$ iterations, we obtain coverage ratios for the RCT, AS-AIPW, and AAS-AIPWIW experiments with Gaussian and Uniform covariates. The results are shown in Table~\ref{tab:cov}.

\begin{table}[t]
    \centering
    \begin{tabular}{|c|c|c|c|}
    \hline
         &  RCT & AS-AIPW & AAS-AIPWIW\\
    \hline
      Gaussian covariates   & $95.5\%$ & $38.0\%$ & $26.5\%$\\
      Uniform covariates   & $95.5\%$ & $48.5\%$ & $44.0\%$\\
      \hline
    \end{tabular}
    \caption{Coverage ratios in simulation studies.}
    \label{tab:cov}
\end{table}

Although the RCT returns accurate confidence intervals, it seems that the confidence intervals of the AS-AIPW and the AAS-AIPWIW experiments have some bias. 

\subsection{Simulation Studies with Different Settings}
Lastly, we perform simulation studies with different settings.

\paragraph{Simulation studies with homogeneous variances} We investigate performances of the experiments in a setting where $\sigma_0(1)(x) = \sigma_0(0(x)$ for any $x\in\mathcal{X}$. We set 
\begin{align*}
    &\sigma^2_0(1)(x) = 2 + 1.2\mathrm{sin}(2x) +(x + x^2)/25,\\
    &\sigma^2_0(0)(x) = 2 + 1.2\mathrm{sin}(2x) +(x + x^2)/25.
\end{align*}
We show the results in Figures~\ref{fig:homovargaussian}--\ref{fig:homovaruniform}. 

Although our designed experiment still reduces the MSEs, the degree is less significant than in a case where variances are heterogeneous.

\paragraph{Simulation studies with homogeneous means}
Next, we investiate a setting where $\mu_0(1)(x) = \mu_0(0)(x)$ for any $x\in\mathcal{X}$. We define
\begin{align*}
    &\mu_0(1)(x) =  C_1x,\\
    &\mu_0(0)(x) = C_0x,
\end{align*}
where $C_1$ and $C_0$ are parameters so that $\mathbb{E}_{X\sim q(x)}[\mu_0(1)(X)] = \mathbb{E}_{X\sim q(x)}[Y(1)] = 10$ and $\mathbb{E}_{X\sim q(x)}[\mu_0(0)(X)] = \mathbb{E}_{X\sim q(x)}[Y(0)] = 7$. In Figures~\ref{fig:homomeangaussian}--\ref{fig:homomeanuniform}, we show the results. Our designed experiment still improves the performances compared to the RCTs, but the MSEs are almost the same as those in the AS-AIPW experiment. We consider that this is because $\mu_0(a)(x)$ has a simple form, and the conditional mean estimator in the AIPW estimator estimates it well, which makes the performances of the experiments insignificant. 

\begin{figure}[t]
  \centering
    \includegraphics[width=100mm]{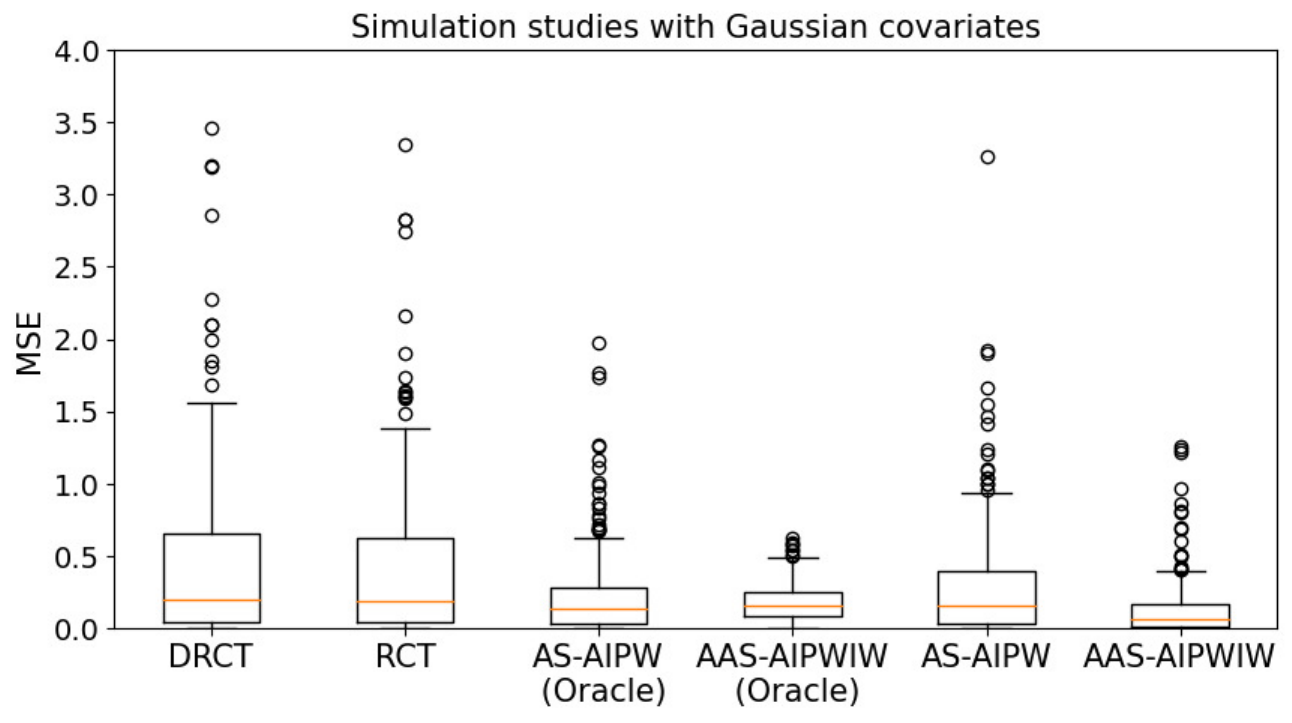}
\caption{Results of simulation studies with covariates following with a Gaussian distribution $\mathcal{N}(1, 25)$ and the homogeneous variances.}
\label{fig:homomeangaussian}
\vspace{5mm}
\centering
    \includegraphics[width=100mm]{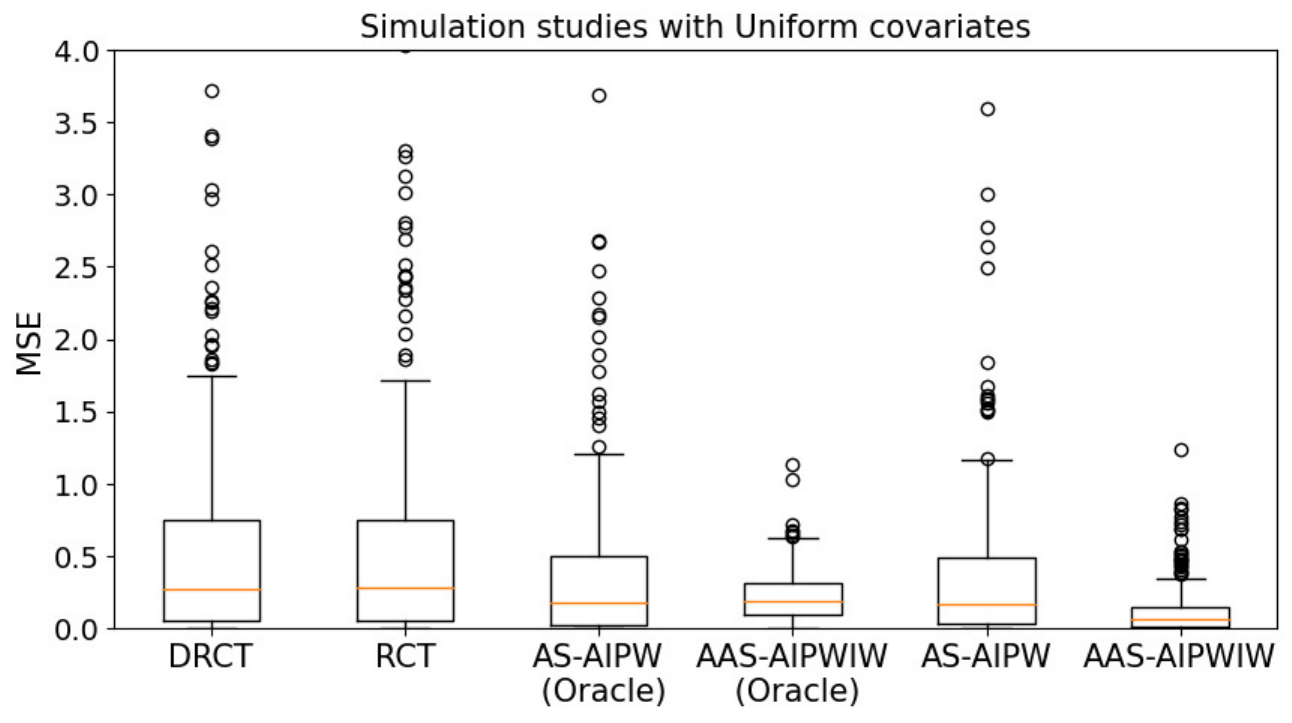}
\caption{Results of simulation studies with covariates following with a Uniform distribution $\mathcal{U}(-10, 10)$ and the homogeneous variances.}
\label{fig:homomeanuniform}
\end{figure}

\begin{figure}[t]
  \centering
    \includegraphics[width=100mm]{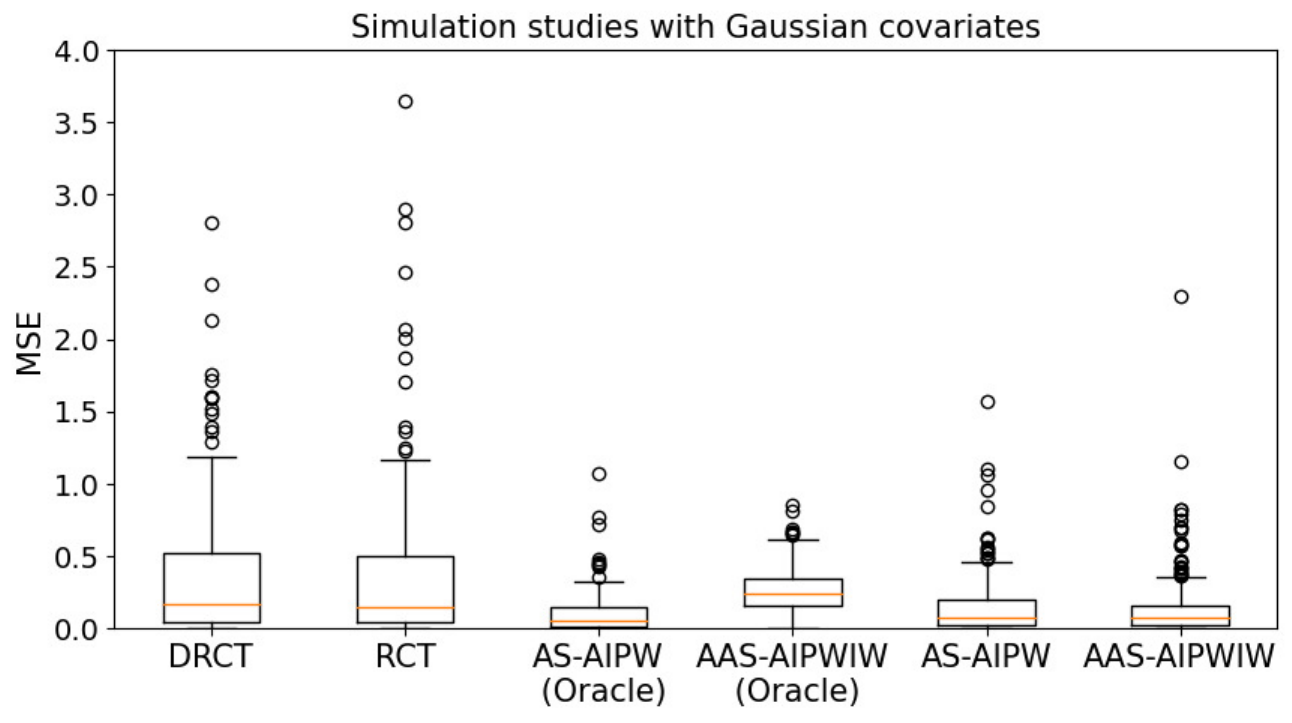}
\caption{Results of simulation studies with covariates following with a Gaussian distribution $\mathcal{N}(1, 25)$ and the homogeneous means.}
\label{fig:homovargaussian}
\centering
    \includegraphics[width=100mm]{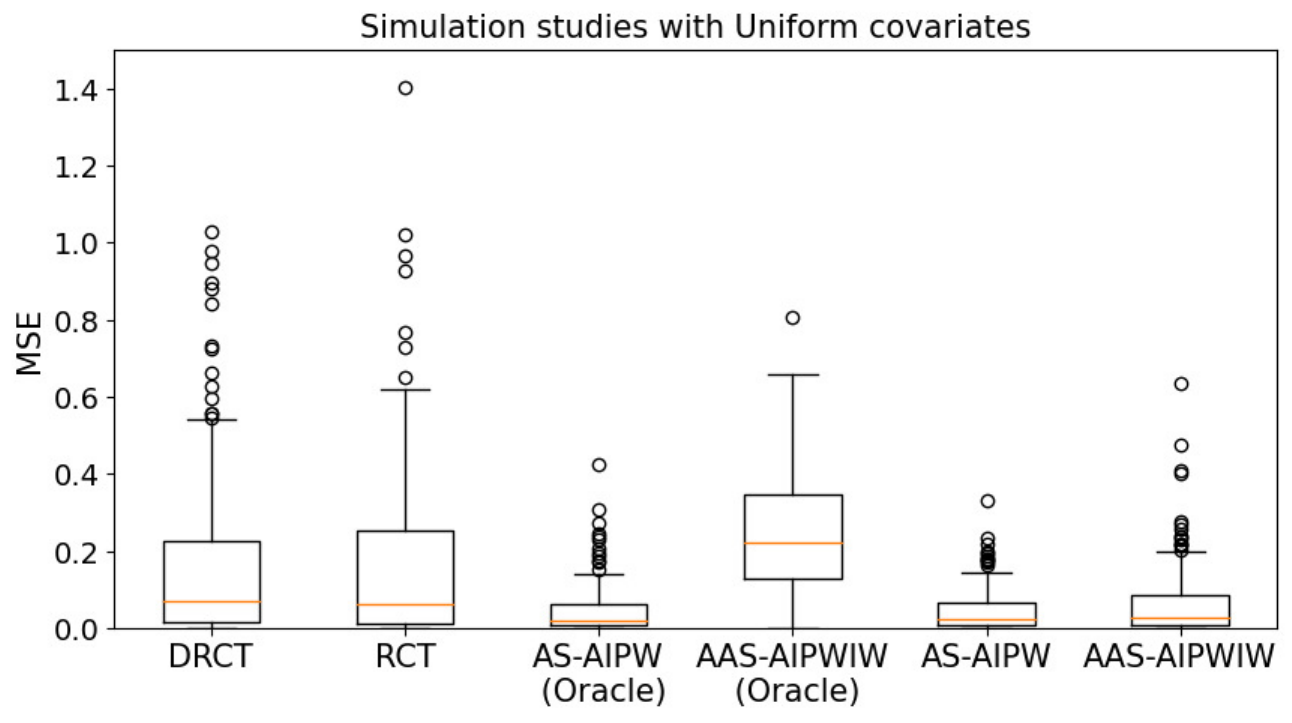}
\caption{Results of simulation studies with covariates following with a Uniform distribution $\mathcal{U}(-10, 10)$ and the homogeneous means.}
\label{fig:homovaruniform}
\end{figure}

\paragraph{Simulation studies with higher dimensional covariates}
Lastly, we investigate a case with higher dimensional covariates. In each round $t$, we generate $3$ and $20$ dimensional covariates. Let $\mathcal{N}(\bm{1}, \nu)$ be a multivariate normal distribution with the $d$-dimensional vector $\bm{1}=(1, 1, \ldots, 1)$ and a ($d\times d$)-dimensional diagonal matrix $\nu$ whose ($i, i$)-element is generated from a uniform distribution with support $(1, 5)$ in advance of the experiment. Let $X_t = (X_{1, t}, X_{2, t}, \ldots, X_{d, t})$ be the covariates in round $t$. 

As in the experiment in the main section, we specify the means as $\mathbb{E}_{X\sim q(x)}[Y(1)] = 10$ and $\mathbb{E}_{X\sim q(x)}[Y(0)] = 7$, under which the ATE is $\theta_0 = 3$. For any $X_t$, the conditional means are 
\begin{align*}
    \mu_0(1)(X_t) &\coloneqq  C_1\left(- X_{1,t} + 3X_{1,t}^2 - 1\right),\\
    \mu_0(0)(X_t) &\coloneqq C_0\left(0.1X_{1,t} + 0.2\right),
\end{align*}
where $C_1$ and $C_0$ are parameters so that $\mathbb{E}_{X\sim q(x)}[\mu_0(1)(X)] = \mathbb{E}_{X\sim q(x)}[Y(1)] = 10$ and $\mathbb{E}_{X\sim q(x)}[\mu_0(0)(X)] = \mathbb{E}_{X\sim q(x)}[Y(0)] = 7$. 

We define $\sigma^2_0(a)(X_t)$ as follows:
\begin{align*}
    \sigma^2_0(1)(X_t) &\coloneqq 2 + 1.2\sin(2X_{1,t}) + \frac{X_{1,t} + 2X^2_{1,t}}{25},\\
    \sigma^2_0(0)(X_t) &\coloneqq 2 + 0.8\cos\left(\frac{X_{1,t}}{2}\right) + \frac{X_{1,t}^2}{50}.
\end{align*}

Then, as in Figures~\ref{fig:gaussian_res}--\ref{fig:uniform_res}, we report the empirical MSEs for settings with each dimension $d=3, 10, 20$. The results are shown in Figure~\ref{fig:exp_high_dim}.

\begin{figure}[t]
  \centering
    \includegraphics[width=140mm]{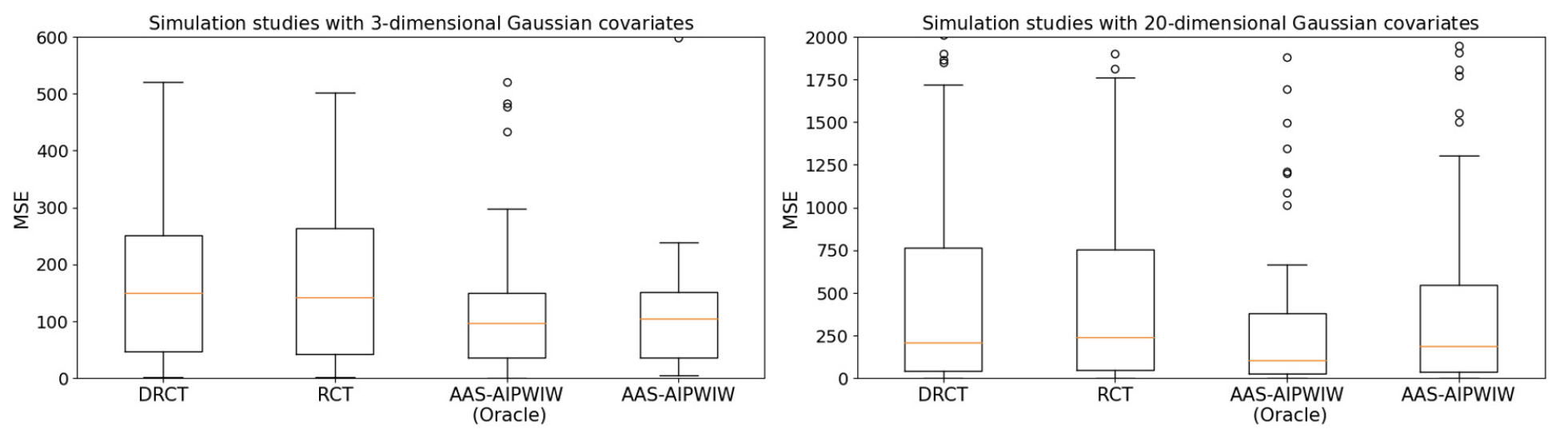}
\caption{Results of simulation studies with covariates following $3$-dimensional and $20$-dimensional Gaussian distributions.}
\label{fig:exp_high_dim}
\end{figure}

\subsection{Simulation Studies with the Semi-Synthetic Dataset}
This section provides simulation studies using the semi-synthetic dataset called the Infant Health and Development Program (IHDP). 

The IHDP dataset has simulated outcomes and real-world covariates \citep{Hill2011}. The total sample size is $747$, and the covariates consist of $6$ continuous variables and $19$ binary variables. 

The outcomes are generated artificially, while the covariates are real-world data. In \citep{Hill2011}, there are two scenarios called response surface A and response surface B. 
In response surface A, the potential outcomes $Y_t(1)$ and $Y_t(0)$ are generated as
\begin{align*}
    Y_t(0) &\sim \mathcal{N}(X^\top_{t}\bm{\gamma}_A, 1),\\
    Y_t(1) &\sim \mathcal{N}(X^\top_{t}\bm{\gamma}_A + 4, 1),
\end{align*}
where each element of $\bm{\gamma}_A\in\mathbb{R}^{25}$ is randomly generated from $\{0, 1, 2, 3, 4\}$ with probabilities $(0.5, 0.2, 0.15, 0.1, 0.05)$.  

In response surface B, the potential outcomes $Y_t(1)$ and $Y_t(0)$ are generated as
\begin{align*}
    Y_t(0) &\sim \mathcal{N}\left(\exp\left((X_{t} + W)^\top\bm{\gamma}_B\right), 1\right),\\
    Y_t(1) &\sim \mathcal{N}(X^\top_{t}\bm{\gamma}_B - q, 1),
\end{align*}
where $\bm{W}$ is an offset matrix of the same dimension as $X_t$ with every value equal to $0.5$, $q$ is a constant to normalize the average treatment effect conditional on $d=1$ to be $4$, and each element of $\bm{\gamma}_B\in\mathbb{R}^{25}$ is randomly generated from $\{0, 0.1, 0.2, 0.3, 0.4\}$ with probabilities $(0.6, 0.1, 0.1, 0.1, 0.1)$. 

Our experiment fixes $T=2000$. In each round $t$, we sample covariate $X_t$ with replacement from the $747$ samples. Then, following each setting of the response surfaces, we generate the corresponding outcomes. 

As in Figures~\ref{fig:gaussian_res}--\ref{fig:uniform_res}, we report the empirical MSEs for settings with each dimension $d=3, 10, 20$. The results are shown in Figure~\ref{fig:exp_res_ihdp}.

\begin{figure}[t]
  \centering
    \includegraphics[width=140mm]{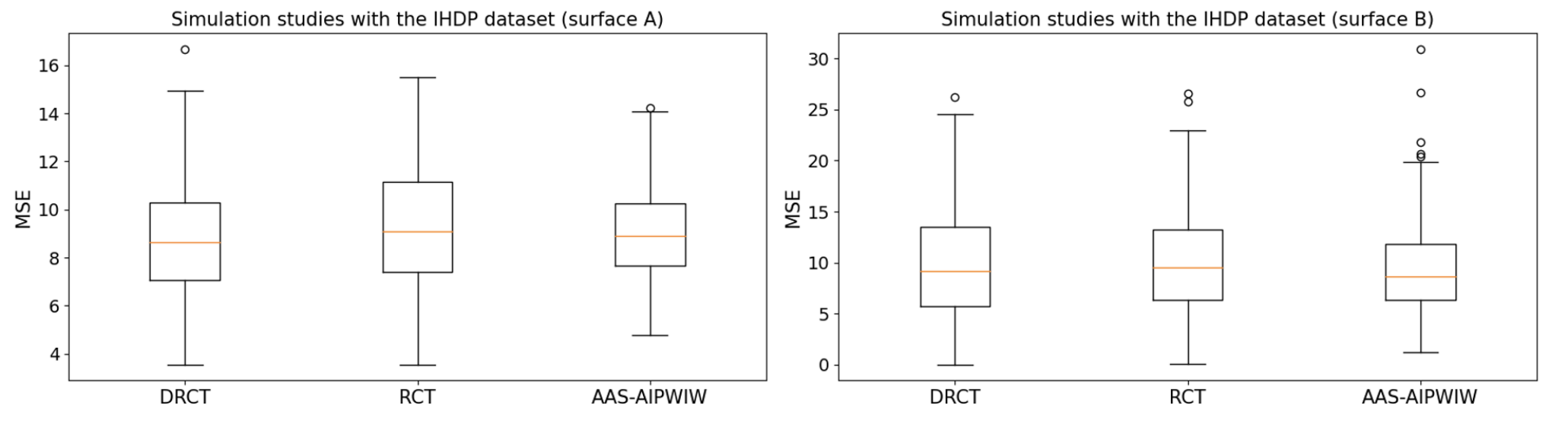}
\caption{Results of simulation studies with the IHDP dataset (response surface A and B).}
\label{fig:exp_res_ihdp}
\end{figure}

\section{Extensions}
\label{appdx:extension}
We introduce several extensions of our proposed framework.

\subsection{Rejection Sampling from \texorpdfstring{$q(x)$}{TEXT}}
\label{sec:rejection}
This section introduces a setting differing from that defined in Section~\ref{sec:problem}. 

\paragraph{Problem setting}
In this section, we assume that covariate $\widetilde{X}_t$ is generated from $q(x)$ in each round $t$, and we observe it. Then, we decide whether to assign a treatment to the sample. We employ the rejection sampling \citep{gelmanbda04} for this choice. If we accept $\widetilde{X}_t$, we assign treatment  $a \in \{1, 0\}$ and observe a corresponding outcome. If we do not accept $\widetilde{X}_t$, we do not assign any treatments and do not observe any outcomes. 
In this setting, we can remove Assumption~\ref{asm:covariate}, which assumes that $q(x)$ is known.
We illustrate the procedure in Figure~\ref{fig:rejection_sampling}. 

\begin{figure}[t]
  \centering
    \includegraphics[width=120mm]{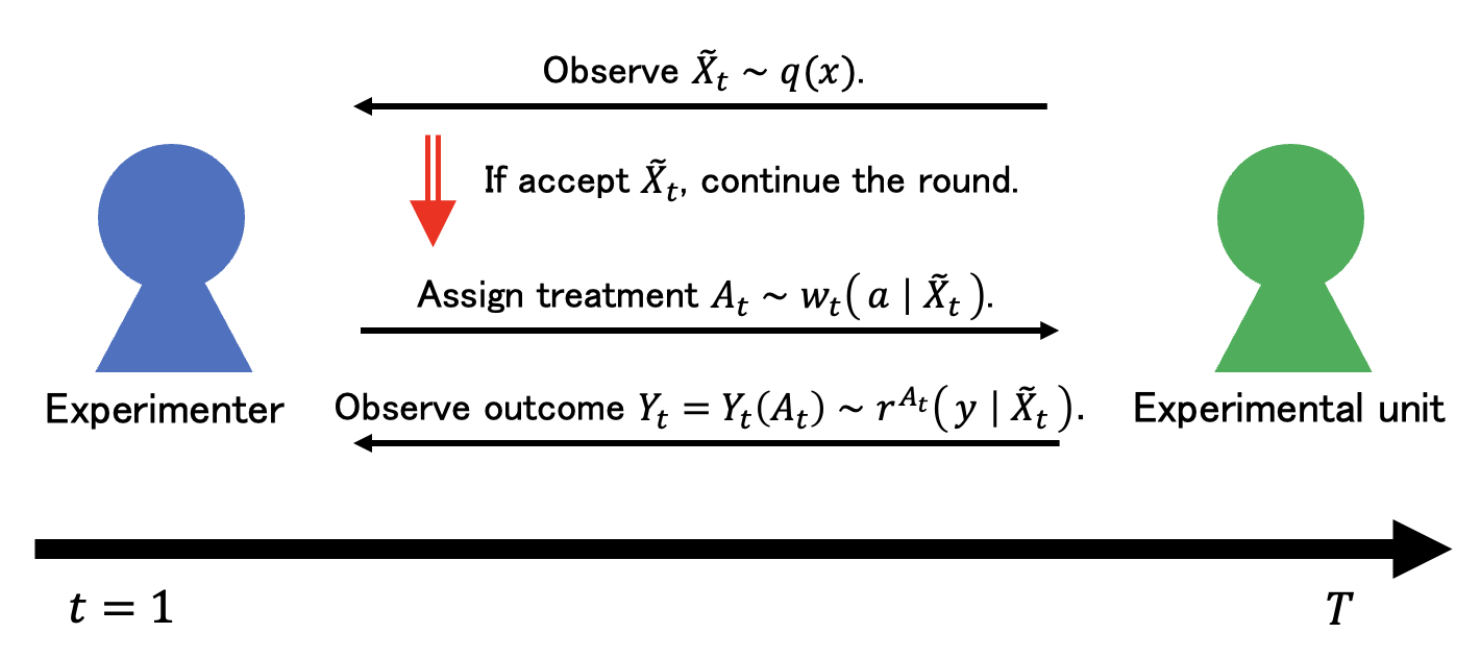}
\caption{Rejection sampling.}
\label{fig:rejection_sampling}
\end{figure}

\paragraph{Rejection sampling}
We first explain the rejection sampling. 
From assumptions, there exists an universal constant $M > 0$ such that $e_t(x) \coloneqq {\widehat{p}_t(x)}/{q(x)} < M$ for all $x\in\mathcal{X}$, where
\begin{align*}
    e_t(x) &= \frac{\sqrt{\big(\widehat{\sigma}_t(1)(x) + \widehat{\sigma}_t(0)(x)\big)^2}}{\frac{1}{t - 1}\sum^{t-1}_{s=1}\sqrt{\big(\widehat{\sigma}_t(1)(X_s) + \widehat{\sigma}_t(0)(X_s)\big)^2}}.
\end{align*}
holds. We accept $\widehat{X}_t$ if $U < e_t(\widehat{X}_t) / M$, and we assign a treatment to the sample. In contrast, we reject $\widehat{X}_t$ if $U \geq e_t(\widehat{X}_t) / M$, and we do not assign any treatments and cannot observe any outcomes.
Then, the following lemma holds.
\begin{lemma}
    Let $F_t(x)$ be a cumulative density function of $\widehat{p}_t(x)$. Let $U$ be a random variable generated from the uniform distribution with support $[0, 1]$. Then, it holds that
    \begin{align*}
        \mathbb{P}\Big(\widehat{X}_t \leq x\mid U < e_t(\widehat{X}_t) / M\Big) = F_t(x);
    \end{align*}
    that is, accepted $\widehat{X}_t$ follows a density $\widehat{p}_t(x)$.
\end{lemma}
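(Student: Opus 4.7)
The plan is to apply the standard rejection sampling argument to the present setup, where the proposal density is $q(x)$ and the target is $\widehat{p}_t(x)$ with acceptance ratio $e_t(x)/M = \widehat{p}_t(x)/(Mq(x))$. First I would unfold the conditional probability via Bayes' rule,
\begin{align*}
\mathbb{P}\bigl(\widehat{X}_t \leq x \mid U < e_t(\widehat{X}_t)/M\bigr) = \frac{\mathbb{P}\bigl(\widehat{X}_t \leq x,\ U < e_t(\widehat{X}_t)/M\bigr)}{\mathbb{P}\bigl(U < e_t(\widehat{X}_t)/M\bigr)},
\end{align*}
and then compute both numerator and denominator by conditioning on $\widehat{X}_t = y$ and using the independence of $U$ from $\widehat{X}_t$.

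For the numerator, I would write
\begin{align*}
\mathbb{P}\bigl(\widehat{X}_t \leq x,\ U < e_t(\widehat{X}_t)/M\bigr) = \int_{-\infty}^{x} \mathbb{P}\bigl(U < e_t(y)/M\bigr)\, q(y)\, \mathrm{d}y = \int_{-\infty}^{x} \frac{e_t(y)}{M}\, q(y)\, \mathrm{d}y = \frac{1}{M}\int_{-\infty}^{x} \widehat{p}_t(y)\, \mathrm{d}y = \frac{F_t(x)}{M},
\end{align*}
where the second equality uses that $U$ is uniform on $[0,1]$ and $e_t(y)/M \in [0,1]$ (guaranteed by the assumed bound $e_t(y) < M$), and the third equality uses the definition $e_t(y) = \widehat{p}_t(y)/q(y)$. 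The denominator is obtained by the same calculation with $x = \infty$, giving $\mathbb{P}\bigl(U < e_t(\widehat{X}_t)/M\bigr) = 1/M$ since $\widehat{p}_t$ integrates to one.

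Taking the ratio cancels $1/M$ and yields $F_t(x)$, which is the claimed identity. There is essentially no obstacle here; the only mild care needed is to verify that $e_t(y)/M \leq 1$ so that $\mathbb{P}(U < e_t(y)/M) = e_t(y)/M$, which is precisely the role of the envelope constant $M$ fixed in the statement. By differentiating $F_t(x)$, the accepted $\widehat{X}_t$ has density $\widehat{p}_t(x)$, as required.
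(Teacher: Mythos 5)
Your argument is the standard rejection-sampling computation (condition on $\widehat{X}_t = y$, use independence and uniformity of $U$ to get $\mathbb{P}(U < e_t(y)/M) = e_t(y)/M$, integrate, and cancel the common factor $1/M$), and it is correct; the only implicit point worth making explicit is that $e_t$ is $\mathcal{F}_{t-1}$-measurable, so all probabilities should be read conditionally on the history. The paper itself omits the proof of this lemma and defers to a textbook citation, so your write-up simply supplies the standard argument the authors had in mind.
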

We omit the proof. See \citet{gelfand2000calculus}.

\paragraph{Procedure}
By using this lemma, we consider the following procedure.
\begin{description}[topsep=0pt, itemsep=0pt, partopsep=0pt, leftmargin=*]
\item[Step~1:] based on past information, an experimenter decides  $p_t(x)$ and $w_t(a\mid x)$ for $a\in\{1, 0\}$ and $x\in\mathcal{X}$;
\item[Step~2:] an experimenter observes covariate $\widetilde{X}_t\sim q(x)$;
\item[Step~3:] by using the rejection sampling, we use $X_t = \widetilde{X}_t$ if $U_t \leq e_t(X_t) / M$, where $U_t$ is a uniform distribution with support $[0, 1]$;
\item[Step~3:] based on past information, we choose $A_t = a$ with probability $w_t(a\mid x)$;
\item[Step~4:] an experimenter observes outcome $Y_t=\mathbbm{1}[A_t = 1]Y_{t}(1) + \mathbbm{1}[A_t = 0]Y_{t}(0)$, where $Y_{t}(a)\sim r^a(y\mid x)$,
\item[Step~5:] after iterating Step~1--4 for $t=1,2,\dots, T$, in round $T$, we estimate the ATE using the estimator defined in \eqref{eq:AIPWIW2}.
\end{description}

\paragraph{ATE estimator}
The estimator is defined as
\begin{align}
\label{eq:AIPWIW2}
    \widehat{\theta}_T \coloneqq \frac{1}{T}\sum^T_{t=1}\widetilde{\Psi}_t(Y_t, A_t, X_t; w_t, p_t),
\end{align}
where
\begin{align*}
    &\widetilde{\Psi}_t(Y_t, A_t, X_t; \widehat{w}_t, \widehat{p}_t) \coloneqq \left(\frac{\mathbbm{1}[A_t = 1]\big(Y_t(1) - \widehat{\mu}_t(1)(X_t)\big)}{\widehat{w}_t(1\mid X_t)} - \frac{\mathbbm{1}[A_t = 0]\big(Y_t(0) - \widehat{\mu}_t(0)(X_t)\big)}{\widehat{w}_t(0\mid X_t)} + \widehat{\theta}_t(X_t)\right)\frac{1}{e_t(X_t)}.
\end{align*}
We can show that the estimator has the same asymptotic variance of $\widehat{\theta}_T$ in Theorem~\ref{thm:asymp_dist}. The result is summarized in the following proposition. Note that we do not use Assumption~\ref{asm:covariate}. We omit the proof since it is almost same as that for Theorem~\ref{thm:asymp_dist}.
\begin{proposition}
    Consider the AAS-AIPWIS experiment. 
    Suppose that Assumptions~\ref{asm:dist} and \ref{asm:subgaussian}--\ref{asm:nuisance_consistency} hold. Then, 
    \begin{align*}
        \sqrt{T}\left(\widehat{\theta}_T - \theta_0\right) \xrightarrow{\mathrm{d}}\mathcal{N}(0, \tau^*)
    \end{align*}
    holds as $T \to \infty$, where recall that $\tau^*$ is the semiparametric efficiency bound. 
\end{proposition}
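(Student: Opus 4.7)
The plan is to mirror the martingale central limit theorem (CLT) argument from the proof of Theorem~\ref{thm:asymp_dist} in Appendix~\ref{appdx:minimax_opt}, substituting the rejection-sampling estimator $\widetilde{\Psi}_t$ for $\Psi_t$. The crucial reduction is the rejection-sampling identity stated just above the proposition: conditional on $\mathcal{F}_{t-1}$, the accepted covariate $X_t$ has density $\widehat{p}_t$, so the conditional law of $(X_t, A_t, Y_t)$ matches that of the original AAS-AIPWIS experiment. The only structural change in $\widetilde{\Psi}_t$ compared with $\Psi_t$ is that the additive surrogate $\mathbb{E}_{\widetilde X \sim q}[\widehat{\theta}_t(\widetilde X)]$ is replaced by the one-sample importance-weighted term $\widehat{\theta}_t(X_t)/e_t(X_t) = \widehat{\theta}_t(X_t) q(X_t)/\widehat{p}_t(X_t)$.

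Setting $\xi_t = (\widetilde{\Psi}_t - \theta_0)/\sqrt{\tau^*}$, I would first verify the martingale-difference property, the analog of Lemma~\ref{lem:mds}. Conditioning on $X_t$ and $\mathcal{F}_{t-1}$, the IPW part inside the brackets has conditional mean $\theta_0(X_t) - \widehat{\theta}_t(X_t)$, so the bracketed expression reduces to $\theta_0(X_t)$; integrating against $X_t \sim \widehat{p}_t$ with weight $q(X_t)/\widehat{p}_t(X_t)$ then yields $\mathbb{E}_{X \sim q}[\theta_0(X)] = \theta_0$. With the MDS property in hand, the three hypotheses of Proposition~\ref{prp:marclt} would be checked in the same order as Steps~1--3 of the proof of Theorem~\ref{thm:asymp_dist}. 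The higher-moment bound (b) follows from Assumption~\ref{asm:subgaussian} together with the boundedness of $\widehat{\mu}_t$ and the thresholding of $\widehat{\sigma}^2_t$, which (via Assumption~\ref{asm:dist}) also keeps $1/\widehat{w}_t$ bounded. Condition (c)—the convergence of $\tfrac{1}{T}\sum_t \xi_t^2$ to $1$ in probability—then follows from a weak law of large numbers for MDSes applied to $\xi_t^2 - \mathbb{E}[\xi_t^2 \mid \mathcal{F}_{t-1}]$, once condition (a) is known.

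The main obstacle is condition (a): the convergence of the Cesàro average of the conditional second moments to $\tau^*$. The approach is to expand $\mathbb{E}[\widetilde{\Psi}_t^2 \mid X_t, \mathcal{F}_{t-1}]$, isolate the IPW square from the cross terms involving $\widehat{\theta}_t(X_t)$, invoke Assumption~\ref{asm:nuisance_consistency} (which also entails $\widehat{p}_t \xrightarrow{\mathrm{a.s.}} p^*$ as noted in the text), and then apply the $L^r$-convergence theorem (Proposition~\ref{prp:lr_conv_theorem}) to promote the almost-sure limits to $L^1$ limits. The delicate point, in contrast with Theorem~\ref{thm:asymp_dist}, is that the $\widehat{\theta}_t(X_t)$ augmentation now lives \emph{inside} the importance weight $q(X_t)/\widehat{p}_t(X_t)$ rather than as an $\mathcal{F}_{t-1}$-measurable additive constant; one must therefore carefully track the cross terms $\widehat{\theta}_t(X_t)\bigl(\theta_0(X_t)-\widehat{\theta}_t(X_t)\bigr)$ and $\widehat{\theta}_t^2(X_t)$ produced after squaring, integrate them against $\widehat{p}_t$, and use $\widehat{\theta}_t \xrightarrow{\mathrm{a.s.}} \theta_0$ to show that after subtracting $\theta_0^2$ only the IPW variance piece survives. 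Substituting the limits $\widehat{p}_t \to p^*$ and $\widehat{w}_t \to w^*$ into that surviving term then matches the formula for $\tau^*$ derived in Section~\ref{sec:efficiency_prob}, closing the argument.
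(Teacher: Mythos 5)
Your overall architecture --- reduce to the original experiment via the rejection-sampling lemma, verify the martingale-difference property, and check conditions (a)--(c) of Proposition~\ref{prp:marclt} --- is exactly the route the paper intends (it omits the proof, saying only that it is almost the same as that of Theorem~\ref{thm:asymp_dist}), and your MDS computation is correct: conditioning on $X_t$ and $\mathcal{F}_{t-1}$ gives $\theta_0(X_t)q(X_t)/\widehat p_t(X_t)$, which integrates to $\theta_0$ under $X_t\sim\widehat p_t$.

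However, the step you yourself flag as delicate is where the argument breaks. In $\Psi_t$ the augmentation $\mathbb{E}_{\widetilde X\sim q(x)}[\widehat\theta_t(\widetilde X)]$ is $\mathcal{F}_{t-1}$-measurable and contributes nothing to the conditional variance; in $\widetilde\Psi_t$ it is replaced by the genuinely random quantity $\widehat\theta_t(X_t)q(X_t)/\widehat p_t(X_t)$. By the law of total variance, $\mathrm{Var}(\widetilde\Psi_t\mid\mathcal{F}_{t-1})$ equals the average conditional variance of the IPW part, which does converge to $\tau^*$, \emph{plus} $\mathrm{Var}_{X_t\sim\widehat p_t}\bigl(\theta_0(X_t)q(X_t)/\widehat p_t(X_t)\bigr)$. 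Carrying out the expansion you describe, the term surviving after you subtract $\theta_0^2$ is $\mathbb{E}_{X\sim q(x)}\bigl[\theta_0^2(X)q(X)/p^*(X)\bigr]-\bigl(\mathbb{E}_{X\sim q(x)}[\theta_0(X)]\bigr)^2$, which by Cauchy--Schwarz is nonnegative and is strictly positive unless $\theta_0(x)q(x)/p^*(x)$ is constant in $x$. So it is not true that ``only the IPW variance piece survives'': condition (a) delivers $\tau^*$ plus this extra term, and the martingale CLT then yields a strictly larger asymptotic variance than claimed whenever $\theta_0(\cdot)$ is nonconstant (and even for constant nonzero $\theta_0(\cdot)$ when $p^*\neq q$). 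The proof cannot close as outlined; to recover $\tau^*$ the augmentation must be re-centered with an $\mathcal{F}_{t-1}$-measurable surrogate for $\mathbb{E}_{\widetilde X\sim q(x)}[\widehat\theta_t(\widetilde X)]$ placed outside the importance weight --- for instance the running average of $\widehat\theta_t$ over all previously observed draws $\widetilde X_s\sim q(x)$, accepted or rejected. This is arguably a defect of the estimator in \eqref{eq:AIPWIW2} itself rather than only of your writeup, but your outline neither detects nor repairs it.
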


\subsection{Batch Design}
We introduce a setting where each experimental unit visits sequentially. We can directly extend our experiment to a setting where sets of experimental units visit.

\subsection{Multiple Treatments}
We can consider a setting where there multiple treatments more $[K] \coloneqq \{1, 2, \dots, K\}$ for $K\geq 2$, instead of the binary treatments $\{1, 0\}$. In this case, it is unclear how we define the treatment effect. One way is to define the policy value by using a deterministic evaluation policy $\pi:[K]\times \mathcal{X} \to \mathbb{R}$. In many cases, it is assumed that $\sum_{a\in[K]}\pi(a\mid x)$ for any $x\in\mathcal{X}$, but we do not impose the restriction to derive a more general result, including a case with the restriction as a specific case. 

Under the evaluation policy, we define the policy value as 
$\theta_{0, K} \coloneqq \mathbb{E}\left[\sum_{a\in[K]}\pi(a\mid X)Y(a)\right]$. 

For $\theta_{0, K}$, the semiparametric efficiency bound is given as 
$\tau_K(w, p)= \mathbb{E}_{X\sim p(x)}\left[\sum_{a\in[K]}\frac{\sigma^2_0(a)(X)}{w(a\mid X)}\frac{q^2(X)}{p^2(X)}\right] = \mathbb{E}_{X\sim q(x)}\left[\sum_{a\in[K]}\frac{\sigma^2_0(a)(X)}{w(a\mid X)}\frac{q(X)}{p(X)}\right]$. 
Because the proof is almost the same as that for \cref{thm:lower}, we omit the proof. 

Then, the efficient probabilities are obtained as
\begin{align*}
&w^*(a\mid x) = \frac{\sigma_0(a)(x)}{\sum_{a\in[K]}\sigma_0(a)(x)}\\
&p^*(x) = \frac{\sqrt{\sum_{a\in[K]}\frac{\sigma^2_0(a)(x)}{w^*(a\mid x)}}}{\mathbb{E}_{X\sim q(x)}\left[\sqrt{\sum_{a\in[K]}\frac{\sigma^2_0(1)(X)}{w^*(1\mid X)}}\right]}q(x).
\end{align*}

We then just directly apply the AAS-AIPWIS experiment.

\end{document}